\documentclass[prx,preprint,superscriptaddress]{revtex4-2}
\usepackage{graphicx}
\usepackage{amssymb}
\usepackage{amsfonts}
\usepackage{amsmath}
\usepackage{amsthm}
\usepackage{cancel}
\usepackage{tensor}
\usepackage{lmodern}
\usepackage{mathrsfs}
\usepackage{mathtools,slashed,isomath}
\usepackage{mathdots}
\usepackage{collref}
\usepackage[colorlinks=true, citecolor=blue, urlcolor=blue]{hyperref}
\usepackage{tikz}
\usepackage{bm}
\usepackage[markup=underlined]{changes}
\makeatletter
\AddToHook{cmd/added/before}{\def\Changes@AuthorColor{blue}}
\AddToHook{cmd/deleted/before}{\def\Changes@AuthorColor{brown}}
\AddToHook{cmd/replaced/before}{\def\Changes@AuthorColor{purple}}
\makeatother
\usepackage{todonotes}
\setcommentmarkup{\todo[color={pink},size=\scriptsize]{#3: #1}}

\usetikzlibrary{shapes.geometric}
\newtheorem{theorem}{Theorem}
\newtheorem{corollary}{Corollary}[theorem]

\begin{document}
\title{Polynomial complexity of open quantum system problems}
\author{Chong Chen}
\affiliation{Department of Physics, The Chinese University of Hong Kong, Shatin, New Territories, Hong Kong, China}%
\affiliation{The State Key Laboratory of  Quantum Information Technologies and Materials, The Chinese University of Hong Kong, Shatin, New Territories, Hong Kong, China}
\affiliation{New Cornerstone Science Laboratory, The Chinese University of Hong Kong, Shatin, New Territories, Hong Kong, China}%

\author{Ren-Bao Liu}
\email{rbliu@cuhk.edu.hk}
\affiliation{Department of Physics, The Chinese University of Hong Kong, Shatin, New Territories, Hong Kong, China}%
\affiliation{The State Key Laboratory of  Quantum Information Technologies and Materials, The Chinese University of Hong Kong, Shatin, New Territories, Hong Kong, China}%
\affiliation{New Cornerstone Science Laboratory, The Chinese University of Hong Kong, Shatin, New Territories, Hong Kong, China}%
\affiliation{Centre for Quantum Coherence, The Chinese University of Hong Kong, Shatin, New Territories, Hong Kong, China}%

\begin{abstract}
Open quantum systems (OQS's) are ubiquitous in non-equilibrium quantum dynamics and in quantum science and technology. Solving the dynamics of an OQS in a quantum many-body bath has been considered a computationally hard problem because of the dimensionality curse. Here, considering that full knowledge of the bath dynamics is unnecessary for describing the reduced dynamics of an OQS, we prove a polynomial complexity theorem, that is, the number of independent equations required to fully describe the dynamics of an OQS increases at most linearly with the evolution time and polynomially with the bath size. Therefore, efficient computational algorithms exist for solving the dynamics of a small-sized OQS (such as a qubit or an atom). We further prove that, when the dynamics of an OQS and the bath is represented by a tensor network, a tensor contraction procedure can be specified such that the bond dimension (i.e., the range of tensor indices contracted in each step) increases only linearly (rather than exponentially) with the evolution time, providing explicitly efficient algorithms for a wide range of OQS's. We demonstrate the theorems and the tensor-network algorithm by solving two widely encountered OQS problems, namely, a spin in a Gaussian bath (the spin-boson model) and a central spin coupled to many environmental spins (the Gaudin model). This work provides approaches to understanding dynamics of OQS's, learning the environments via quantum sensors, and optimizing quantum information processing in noisy environments.
  
\end{abstract}

\maketitle

\section{Introduction} 

Quantum systems inevitably couple with their environments or baths, becoming open quantum systems (OQS's)~\cite{Breuer2007, Weiss2012}. The OQS problem is of fundamental interest for understanding quantum decoherence~\cite{Schlosshauer2019, Prokof2000}. It is ubiquitous in non-equilibrium quantum dynamics, including quantum transport~\cite{ Gurvitz1996, Li2005, Jin2008} and quantum optical processes~\cite{Scully1997}. It has become a critical problem for the development of modern quantum technologies~\cite{Giovannetti2006, Gisin2007, Bharti2022, Acin2018} such as quantum metrology, quantum communication, and quantum computing, where the coupling of quantum systems to their environments is a main obstacle to maintaining quantum superpositions and scaling up the systems. In quantum sensing, on the other hand, an OQS can act as a quantum probe to its environment~\cite{Bylander2011,Wang2019, Meinel2022, Wu2024}. To design quantum systems and to optimize quantum controls of such systems~\cite{Du2009, Lange2010, Yang2011, Suter2016} for mitigating the decoherence effects and utilizing the quantum advantages it requires good understanding and often the computation of the dynamics of OQS's.

The solution of the dynamics of an OQS in general is believed to be a computationally hard problem (time consumption and/or memory demand increasing at least exponentially with the evolution time and the bath size), since it is assumed to be equivalent to solving the dynamics of the quantum many-body system including the bath. In cases where environments can be taken as white noise, their effects are captured by the Lindblad master equation, known as Markovian dynamics~\cite{Gardiner2004}. However, in general, memory (i.e., non-Markovian) effects are important~\cite{Breuer2016, deVega2017}, especially when the quantum system is strongly coupled to its environment or the environment itself has a structure. Notable examples include a localized electron spin coupled with nuclear spins~\cite{Madsen2011,Mi2017}, an emitter in a photonic crystal~\cite{Thompson2013}, and a multi-qubit system with long-range correlated noises~\cite{Aharonov2005}. The memory effects mean that extra many-body correlations can build up in the baths via coupling to the central system. For example, even a group of non-interacting spins can accumulate many-body correlations when they are commonly coupled to a central spin (such as in the Gaudin model~\cite{Gaudin1976, Cywifmmode2010, Stanek2013, Stanek2014,He2022}). Therefore, the non-Markovian dynamics of OQS's is highly non-trivial.
 
Previous studies indicate that numerically efficient methods may exist for certain OQS's.  In particular, when system evolution and environment dynamics have very different timescales, the problem can be efficiently solved using perturbative theories such as the time-convolutionless projection operator technique~\cite{Breuer2007} and the cluster expansion methods~\cite{Yang2017}. Beyond the perturbation theories, several exact methods have been developed to study OQS's where the environments exhibit simple structures. The pseudo-mode method~\cite{Garraway1997, Mazzola2009, Tamascelli2018, Pleasance2020,Mascherpa2020, Trivedi2021}, the stochastic Schr\"{o}dinger equation~\cite{Diosi1998,Gaspard1999,Stockburger2002,Piilo2008}, the hierarchical equations of motion~\cite{Tanimura1989, Tanimura2020}, and the hierarchy of pure states~\cite{Suess2014, Hartmann2017} are applicable when the environment is Gaussian and has certain spectral characteristics. Path-integral influence functionals~\cite{Makri1995b, Makri1995, Ye2021, Sonner2021, Ng2023, Thoenniss2023} and time-evolving matrix product operators~\cite{Strathearn2018, Strathearn2020, Gribben2022a} can deal with  Gaussian environments that have special system-environment coupling. In addition, the process tensor networks~\cite{Pollock2018,Jorgensen2019} provide frameworks for solving general OQS problems but the sampling of process tensors can be classically hard~\cite{Aloisio2023}.  On the other hand, advanced numerical algorithms developed for quantum many-body systems, such as the numerical renormalization group method~\cite{Bulla2008}, the density matrix renormalization group method~\cite{Prior2010,Stanek2013}, and the tensor network method~\cite{Schroder2016,Wall2016}, have been applied to investigate OQS's in strongly correlated environments~\cite{Orus2014, Schollwock2011, Orus2019, Paeckel2019}, which have significantly mitigated the difficulties of non-Markovian dynamics~\cite{Prior2010, Jorgensen2019, Ye2021, Sonner2021, Ng2023, Thoenniss2023, Link2024}. Furthermore, it has been established (though only formally) that the reduced dynamics of a central quantum system can be {\it fully} captured by the correlations in the bath {\it before} it is coupled to the central system~\cite{Gasbarri2018, Wang2019}; in particular, the dynamics of a spin coupled to a number of harmonic oscillators is fully determined by the spectral density of the harmonic oscillators and is independent of the details of the couplings (for instance, a bath of one harmonic oscillator and that of $10^3$ harmonic oscillators of the same frequency make no difference in the central spin dynamics as long as the summation of the coupling strengths squared  is equal in the two cases, but the dynamics of the whole spin-bath system would be much more complicated in the latter case). All these results suggest that the large number of degrees of freedom of a bath may not necessarily prevent an efficient solution of the non-Markovian dynamics of an OQS~\cite{Luchnikov2019, Cygorek2022, Cygorek2024}. 
However, what the ultimate complexity of the non-Markovian dynamics is and how efficient algorithms, if they exist, can be constructed systematically are still open questions.

Here, we establish a polynomial complexity theorem that the non-Markovian dynamics of a general OQS can be {\it fully} described by a set of independent equations of extended density matrices (EDMs), and the size of the set grows at most linearly with the evolution time and polynomially with the bath size. This means that the complexity of the non-Markovian dynamics increases at most as a polynomial function of the evolution time and the bath size. Moreover, we propose an explicit tensor network algorithm to solve these EDMs when the correlations of environmental noise are known (from computation or measurement). We demonstrate the theorem and the efficiency of the tensor network algorithm by solving two widely encountered classes of OQS problems, namely a spin in a Gaussian bath (the spin-boson model) and a central spin coupled to many environmental spins (the Gaudin model). The numerical results show that the bond dimension of the EDM tensor increases at most linearly with time evolution, regardless of the coupling strength (in the spin-boson model) and the bath size (in the Gaudin model), verifying the efficiency of the tensor network algorithm.

\section{OQS dynamics in terms of bath correlations}
\label{sec_complete_framework}
For a general OQS, the system-bath coupling can be written in the interaction picture with respect to the bath Hamiltonian ${H}_{\rm B}$ and the system Hamiltonian $H_{\rm S}$ as
\begin{equation}\label{eq:Interaction}
{H}(t)=\sum^{d^2-1}_{{\alpha}=1} S_{\alpha}(t) \otimes B_{\alpha}(t),
\end{equation}
where $d$ is the dimension of the system's Hilbert space, $\left\{S_{\alpha}(t)\equiv e^{iH_{\rm S}t}S_{\alpha}e^{-iH_{\rm S}t}\right\}$ is a basis of the system operators (with the trivial identity operator $S_0\equiv{I}_S$), and $B_{\alpha}(t)\equiv e^{i {H}_{\rm B} t} B_{\alpha} e^{-i {H}_{\rm B} t}$ is the bath operator coupled to ${S}_{\alpha}$. In principle, the reduced dynamics of the central system can be obtained by first solving the coupled system-bath dynamics (given by the density operator $\chi(t)$) and then carrying out the partial trace of the bath's degrees of freedom~\cite{Breuer2007}. By discretizing the time as $0,\epsilon,2\epsilon,\ldots,T$ with a small time step $\epsilon\equiv T/N$,  the evolution of the total system reads
\begin{equation}\label{eq:vonNeumannE}
\chi(T) = e^{\epsilon \mathcal{H}^{-}(T)}\cdots e^{\epsilon \mathcal{H}^{-}(2\epsilon)} e^{\epsilon \mathcal{H}^{-}(\epsilon)} \chi(0).
\end{equation}
Here, we have introduced the superoperators
$\mathcal{A}^{-} \chi \equiv -i (A \chi- \chi A)$ (essentially a commutator) and $\mathcal{A}^{+} \chi \equiv \frac{1}{2}(A \chi+ \chi A)$ (essentially an anti-commutator) for an operator $A$. A small step of evolution can be expanded to the first order of $\epsilon$ as
\begin{equation}
\label{eq:evolutionOperator}
   e^{\epsilon\mathcal{H}^{-}(t)} \approx {\mathcal I}_{\rm S}\otimes {\mathcal I}_{\rm B} + \epsilon \sum_{{\sigma}=\pm}\sum_{{\alpha}=1}^{d^2-1}{\mathcal S}_{{\alpha}}^{{\sigma}}(t)\otimes {\mathcal B}_{{\alpha}}^{\bar{{\sigma}}}(t)\equiv {\mathcal S}^{\phi}(t)\otimes {\mathcal B}_{\phi}(t), 
\end{equation}
where $\bar{\sigma}\equiv -\sigma$ and a summation over $\phi = 0,1,\ldots, 2(d^2-1)$ is implied in the tensor contraction convention with ${\mathcal S}^{2{\alpha}-1}\equiv \epsilon{\mathcal S}_{\alpha}^{+}$, ${\mathcal S}^{2{\alpha}}\equiv \epsilon {\mathcal S}_{\alpha}^-$, ${\mathcal B}_{2{\alpha}-1}\equiv {\mathcal B}_{\alpha}^-$, ${\mathcal B}_{2{\alpha}}\equiv{\mathcal B}_{\alpha}^+$, and the null evolutions ${\mathcal S}^0\equiv {\mathcal I}_{\rm S}$ for the system and ${\mathcal B}_0\equiv {\mathcal I}_{\rm B}$ for the bath. The physical meanings of the superoperators are clear by checking their effects on the density operators~\cite{Cheung2024}. For example, ${\mathcal S_{\alpha}^-}\rho=-i\left[S_{\alpha},\rho\right]$ causes the evolution of the system and ${\rm Tr} [{\mathcal B}_{\alpha}^+\Omega] =\left\langle B_{\alpha}\right\rangle$, so the term ${\mathcal S}_{\alpha}^-\otimes {\mathcal B}_{\alpha}^+\chi$ corresponds to the evolution of the system driven by the $\alpha$-th component of the force from the bath. Similarly, ${\mathcal S}_{\alpha}^+\otimes {\mathcal B}_{\alpha}^-\chi$ means the evolution of the bath under the $\alpha$-th component of the force from the system. A small step of evolution can be presented by a tensor network diagram, as illustrated in Fig.~\ref{fig:TensorRep}a.

\begin{figure}[tbhp]
    \centering
    \includegraphics[width=0.9\columnwidth]{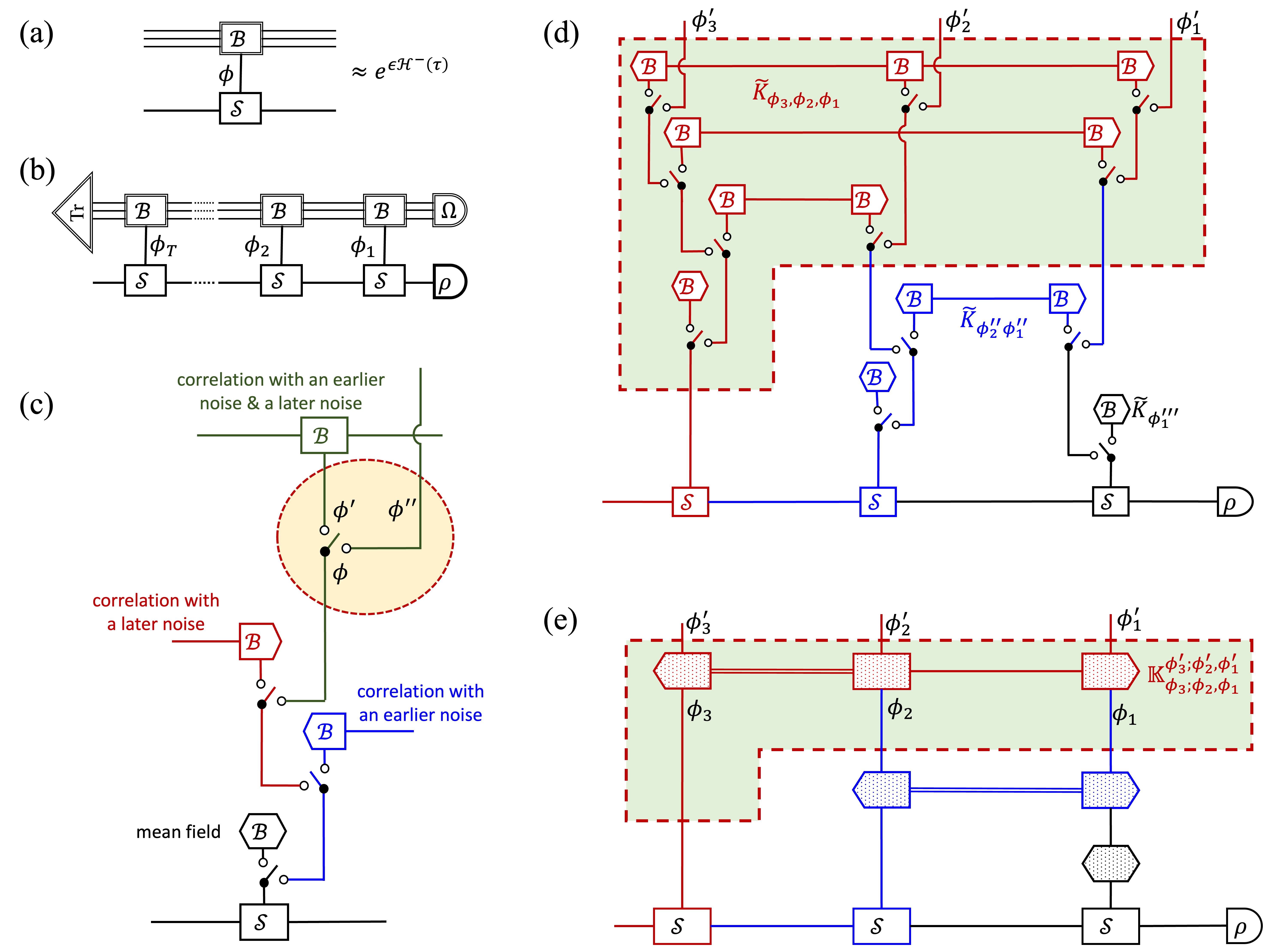}
    \caption{Non-Markovian dynamics of OQS's constructed by correlation and evolution tensors. (a) Tensor network diagram for a small step of  evolution $e^{\epsilon \mathcal{H}^{-}(\tau)}$. (b) Tensor network representation of the reduced dynamics $\rho(T)$ in terms of the bath correlation tensor ${\mathbb C}_{\boldsymbol{\Phi}^{(T)}} \equiv{\rm Tr}_{\rm B}\left[{\mathcal B}_{\phi_T}(T)\ldots {\mathcal B}_{\phi_2}(2){\mathcal B}_{\phi_1}(1)\Omega\right]$  and time-ordered system evolution tensor ${\mathbb S}^{{\boldsymbol{\Phi}^{(T)}}}\equiv {\mathcal S}^{\phi_T}(T)\ldots {\mathcal S}^{\phi_2}(2){\mathcal S}^{\phi_1}(1)$. (c) A noise from the bath at time $t$, given by $\mathcal{B}_{\phi_t}(t)$, can affect the non-Markovian dynamics of the system via different types of irreducible correlation functions, i.e., cumulants, which are selected in the tensor network by a picking tensor ${P}^{\phi',\phi''}_{\phi}$ ( as shown in the circle). (d) The cumulant expansion of the correlation tensor shown by an example of evolution by three small steps. Different cumulants (modified to kernel tensors $\tilde{K}$) are interconnected through picking tensors. The upward open arms with indices $\phi'_1$, $\phi'_2$ and $\phi'_3$ indicate the possibilities that the noises from the bath at time $t=1,2,3$ take effect in correlations with forces at later times. (e) The tensor network representation of the combined kernel tensor that contains all cumulants ended at a certain time $t$, with ${\mathbb K}_{\phi_3;\phi_2,\phi_1}^{\phi'_3;\phi'_2,\phi'_1}$ for $t=3$ shown as the example, corresponding to the shadowed parts in (d) and (e). }
    \label{fig:TensorRep}
\end{figure}

For the sake of simplicity, we consider an initial product state $\chi(0)=\rho(0)\otimes\Omega$ with $\Omega$ and $\rho(0)$ being the initial bath and system states, respectively. The state of the system can be written as~\cite{Gasbarri2018, Wang2019}
\begin{align}
\rho(T)\equiv \text{Tr}_{\rm B}[\chi(T)]={\mathbb C}_{\boldsymbol{\Phi}^{(T)}} {\mathbb S}^{\boldsymbol{\Phi}^{(T)}} \rho(0),
\label{eq_CSrho}
\end{align}
with ${\boldsymbol \Phi}^{(T)}\equiv \left\{\phi_T,\ldots,\phi_2,\phi_1\right\}$, where the time-ordered tensor 
${\mathbb S}^{{\boldsymbol{\Phi}^{(T)}}}\equiv {\mathcal S}^{\phi_t}(T)\cdots {\mathcal S}^{\phi_2}(2){\mathcal S}^{\phi_1}(1)$
denotes the evolution of the system driven by the corresponding correlation in the bath denoted by the correlation tensor
\begin{align}
{\mathbb C}_{\boldsymbol{\Phi}^{(T)}} \equiv{\rm Tr}_{\rm B}\left[{\mathcal B}_{\phi_T}(T)\cdots {\mathcal B}_{\phi_2}(2){\mathcal B}_{\phi_1}(1)\Omega\right].
\label{eq_def_corr}
\end{align}
Hereafter we set the time step $\epsilon$ as unity so that $t$ stands for $\epsilon t$ and $T= N$. The corresponding  tensor-network  representation is shown in Fig.~\ref{fig:TensorRep}b. Note that the result in Eq.~\eqref{eq_CSrho} can be generalized to a correlated initial state $\chi(0)=\sum_{\alpha} \rho_{\alpha}(0)\otimes  \Omega_{\alpha}$ with $\Omega_{\alpha}$ being the $\alpha$-th state of the bath and $\rho_{\alpha}$ a system operator~\cite{Paz-Silva2019}. An important feature of Eq.~\eqref{eq_CSrho} is that the reduced dynamics of the OQS is fully determined by the correlation functions of the bath without including the system-bath interaction as can be seen from
Eq.~\eqref{eq_def_corr}~\cite{Gasbarri2018, Wang2019}. Thus, given the knowledge of the correlation functions of the bath, the dynamics of the system can be calculated with the complexity independent of the size of the bath. However, solving ${\mathbb C}_{\boldsymbol{\Phi}^{(T)}}$ is still a many-body problem as it captures the dynamics of the bath~\cite{White2022,Aloisio2023}. In fact, the number of correlation functions, $(2 d^2-1)^T$,  increases exponentially with the evolution time. 
Therefore, the direct evaluation of the tensor network in Fig.~\ref{fig:TensorRep}b is computationally hard.

Usually, the correlation tensor defined in Eq.~\eqref{eq_def_corr} has a high degree of redundancy. For example, very often, the correlation functions can be reduced to a few irreducible correlation functions, i.e., cumulants and the cumulants have finite memory time. The cumulants $\tilde{C}_{{\boldsymbol\Theta}_i}$ are recursively defined as 
    \begin{equation}\label{eq:cumulantEx}
        {\mathbb C}_{\boldsymbol{\Phi}^{(t)}}\equiv \sum_{{\boldsymbol\Theta}\subseteq \boldsymbol{\Phi}^{(t)}}\delta^{\mathbf 0}_{\bar{\boldsymbol\Theta}}\sum_{\left\{{\boldsymbol\Theta}_i\right\}} \tilde{C}_{{\boldsymbol\Theta}_1} \tilde{C}_{{\boldsymbol\Theta}_2}\cdots \tilde{C}_{{\boldsymbol\Theta}_n},
    \end{equation}
where $\left\{{\boldsymbol\Theta}_i\right\}$ are non-overlapping non-empty subsets of indices selected from ${\boldsymbol \Phi}^{(t)}$ with $\bigcup_{i=1}^n {\boldsymbol\Theta}_i ={\boldsymbol\Theta}\subseteq{\boldsymbol\Phi}^{(t)}$, 
{$\bar{\boldsymbol\Theta}\equiv {\boldsymbol\Phi}^{(t)}\setminus{\boldsymbol\Theta}$ is complement of the subset ${\boldsymbol\Theta}$, and the summation runs over all subsets ${\boldsymbol\Theta}$ and all possible partitions of ${\boldsymbol\Theta}$.} For example, the cumulnant expansion up to the third-order correlation tensors reads ${\mathbb C}_{\phi_1}\equiv \tilde{C}_{\phi_1}+\delta^0_{\phi_1}$, ${\mathbb C}_{\phi_2, \phi_1}\equiv \tilde{C}_{\phi_2, \phi_1}  + \tilde{C}_{\phi_2} \tilde{C}_{\phi_1}+\delta^0_{\phi_2}\tilde{C}_{\phi_1}+\delta^0_{\phi_1}\tilde{C}_{\phi_2}+\delta^0_{\phi_2}\delta^0_{\phi_1}$, 
and ${\mathbb C}_{\phi_3,\phi_2,\phi_1}\equiv \tilde{ C}_{\phi_3,\phi_2,\phi_1}+\tilde{C}_{\phi_3,\phi_2} \tilde{ C}_{\phi_1}
+ \cdots + \delta^0_{\phi_3}\delta^0_{\phi_2}\delta^0_{\phi_1}$. Note that the cumulant $\tilde{C}_{\boldsymbol \Theta}=0$ when any index $\phi_{\tau}\in{\boldsymbol\Theta}$ takes zero value ($\phi_{\tau}=0$) since if $\phi_{\tau}\in {\boldsymbol \Theta}$ and $\phi_{\tau}=0$ then the correlation $C_{\boldsymbol \Theta}=C_{{\boldsymbol \Theta}\setminus\{\phi_{\tau}\}}$ (reduced to a lower-order correlation). The noise from the bath at a certain time $\mathcal{B}_{\phi_t}(t)$  can affect the non-Markovian dynamics of the system via different types of cumulants, as a time-local mean field (via the first-order irreducible correlation function) or in correlations with noises at earlier and/or later times (via second- or higher-order irreducible correlations), as illustrated in Fig.~\ref{fig:TensorRep}c.  We introduce the picking tensor $P^{\phi',\phi''}_{\phi}= \delta^{\phi'}_{\phi} \delta^{\phi''}_{0}+\delta^{\phi''}_{\phi}\delta^{\phi'}_{0}-\delta^{0}_{\phi} \delta^{\phi'}_{0}\delta^{\phi''}_{0}$ to select how a noise at a certain time participates in non-Markovian dynamics (diagram shown in the circle in Fig.~\ref{fig:TensorRep}c).  When $\phi=0$, the picking tensor selects the null evolution with $\phi''=\phi'=0$; otherwise, it selects one cumulant by setting either $\phi'$ or $\phi''$ equal to $\phi$ and the other upper index equal to 0.
The cumulants in Eq.~\eqref{eq:cumulantEx} are interconnected to form a tensor network, as illustrated in Fig.~\ref{fig:TensorRep}d by
an example of three-step evolution.  Note that to transfer the addition of cumulants in Eq.~\eqref{eq:cumulantEx} into a tensor product form, we have replaced the cumulants with the irreducible kernel tensors $\tilde{K}_{{\boldsymbol\Theta}}\equiv \delta^{{\mathbf 0}}_{{\boldsymbol\Theta}}+\tilde{C}_{{\boldsymbol\Theta}}$.
One can demonstrate that the contraction of these kernel tensors interconnected by the picking tensors correctly yields the correlation tensor [see Appendix~\ref{Appd-sec:KT-app} for the proof].

 An important feature of the non-Markovian dynamics of an OQS is that the noises from the bath at present and earlier times may take delayed effects in correlations with noises at later times. Thus, the reduced density operator $\rho(t)$ is insufficient to keep track of the dynamics of the OQS. Instead, one needs to bookmark different possibilities of how the noise at a certain time would be correlated with future noises, that is, in the tensor network diagram as illustrated in Fig.~\ref{fig:TensorRep}d, to keep the upward open arms stemming from the picking tensors. We denote a tensor network with such upward open arms as an extended density matrix (EDM) at a given time $t$
 \begin{align}
\rho^{{\boldsymbol\Phi}^{(t)}}\equiv \mathbb{C}_{\bm{\Phi}'^{(t)}} \left[\prod^{t}_{\tau=1} {P}^{\phi_{\tau},\phi'_{\tau}}_{\phi''_{\tau}}\right] \mathbb{S}^{\bm{\Phi}''^{(t)}}\rho(0)\equiv \mathbb{C}^{{\boldsymbol\Phi}^{(t)}}_{\bm{\Phi}'^{(t)}} \mathbb{S}^{\bm{\Phi}'^{(t)}}\rho(0),
 \end{align}
where the upper indices ${{\boldsymbol\Phi}^{(t)}}$ keep track of the history (to account for the memory effect), and $\mathbb{C}^{{\boldsymbol\Phi}^{(t)}}_{\bm{\Phi}'^{(t)}}\equiv \mathbb{C}_{\bm{\Phi}''^{(t)}} \prod^{t}_{\tau=1} {P}^{\phi_{\tau},\phi'_{\tau}}_{\phi''_{\tau}} $ (referred to also as correlation tensor with a little abuse of notation).
 For example, the tensor in Fig.~\ref{fig:TensorRep}d represents the EDM $\rho^{\phi'_3,\phi'_2,\phi'_1}$.   
We can further group all kernel tensors ended at time $t$ and the associated picking tensors, keeping track of the upward open arms, into a combined kernel tensor $\mathbb{K}^{\phi'_{t}; {\boldsymbol \Phi}^{'(t-1)}}_{\phi_{t}; {\boldsymbol \Phi}^{(t-1)}}$ [shadowed diagrams in Figs.~\ref{fig:TensorRep}d and ~\ref{fig:TensorRep}e for illustration of ${\mathbb K}_{\phi_3;\phi_2,\phi_1}^{\phi'_3;\phi'_2,\phi'_1}$].

\begin{figure}
    \centering
    \includegraphics[width=0.9\linewidth]{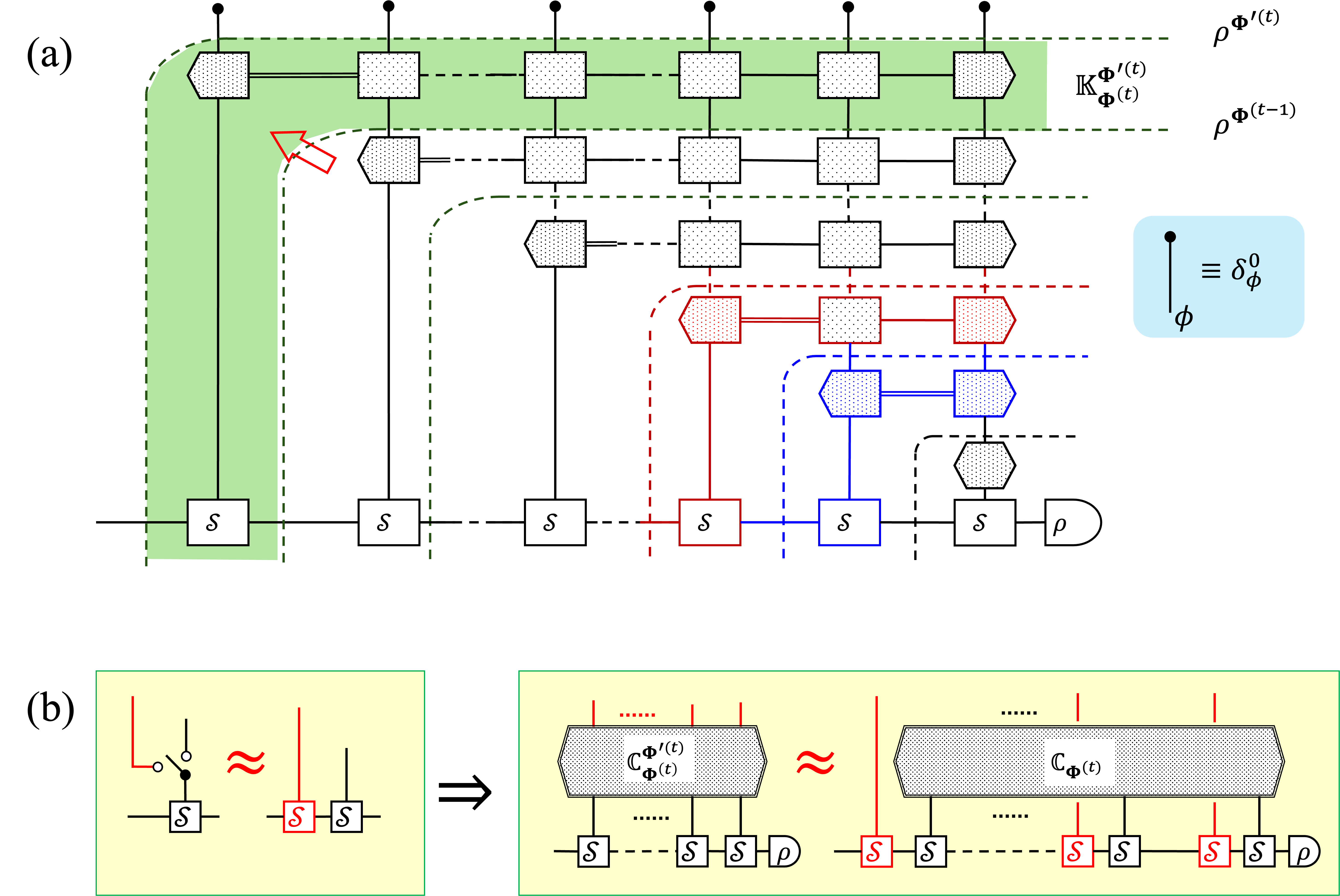}
    \caption{(a) Evolution of extended density matrices constructed by combined kernel tensors.  The arrow indicates the evolution direction. In each step (indicated by a dashed curve), a set of EDMs $\rho^{{\boldsymbol \Phi}^{'(t)}}$ is obtained from $\rho^{{\boldsymbol \Phi}^{(t-1)}}$ by applying the evolution operator $\mathbb{K}^{\phi'_{t};\bm{\Phi}^{'(t-1)}}_{\phi_{t};\bm{\Phi}^{(t-1)}} \mathcal{S}^{\phi_{t}}$  (indicated by the green shadow).   The reduced density matrix $\rho(t)=\delta^{\mathbf 0}_{{\boldsymbol \Phi}'^{(t)}}\rho^{{\boldsymbol \Phi}'^{(t)}}$ is obtained by terminating the upward arms with the closing tensors $\delta^0_{\phi}$ (shown in inset). (b) With the formula ${\mathcal S}^{\phi_{\tau}}{\mathcal S}^{\phi'_{\tau}}=P^{\phi_{\tau},\phi'_{\tau}}_{\phi''_{\tau}}{\mathcal S}^{\phi''_{\tau}}+O\left(\epsilon^2\right)$, the EDM $\rho^{{\boldsymbol \Phi}^{\prime(t)}}$ can be approximated as the density matrix $\rho(t)$ intervened by a sequence of perturbations represented by the superoperators ${\mathcal S}^{\phi'_1}, {\mathcal S}^{\phi'_2}, \ldots, {\mathcal S}^{\phi'_t}$.}
    \label{fig:EDM_evolution}
\end{figure}

The evolution of EDMs is driven by the combined kernel tensors as
\begin{equation}\label{eq:EDM-Evolution}
    \rho^{{\boldsymbol \Phi}^{'(t)}}=\mathbb{K}^{\phi'_{t};\boldsymbol{\Phi}^{'(t-1)}}_{\phi_t; {\boldsymbol \Phi}^{(t-1)}} \mathcal{S}^{\phi_{t}} \rho^{{\boldsymbol \Phi}^{(t-1)}}.
\end{equation}
The tensor network representation of this evolution is shown in Fig.~\ref{fig:TensorRep}e and Fig.~\ref{fig:EDM_evolution}a. Equation~\eqref{eq:EDM-Evolution} serves the role of differential equations for the EDMs, capturing fully the memory effects of the bath~\cite{Breuer2016,deVega2017}. The reduced dynamics of the OQS is  obtained from the EDMs through $\rho(t)=\delta^{\mathbf 0}_{{\boldsymbol \Phi}^{(t)}}\rho^{{\boldsymbol \Phi}^{(t)}}$ (by terminating the upward open arms with the closing tensor $\delta^0_\phi$, as shown in Fig.~\ref{fig:EDM_evolution}a).  Note that the number of EDMs and hence the corresponding number of time-local equations still increases exponentially with the evolution time.

The EDMs have a physical meaning as the state of an OQS evolving under a sequence of weak interventions.
Using the formula $\mathcal{S}^{\phi_t} \mathcal{S}^{\phi'_t}=P^{\phi_{t},\phi'_{t}}_{\phi''_{t}} \mathcal{S}^{\phi''_{t}}+O\left(\epsilon^2\right)$, we can demonstrate that the EDM tensor can be approximated as  [see Fig.~\ref{fig:EDM_evolution}b for illustration by tensor-network diagrams]
\begin{align}
    \rho^{\bm{\Phi}^{\prime(t)}} & \approx
    {\mathbb C}_{{\boldsymbol\Phi}^{(t)}} {\mathcal S}^{\phi'_t} {\mathcal S}^{\phi_t}\cdots{\mathcal S}^{\phi'_2}  {\mathcal S}^{\phi_2}{\mathcal S}^{\phi'_1}  {\mathcal S}^{\phi_1}\rho(0)
    \nonumber \\
    & =\text{Tr}_{\rm B}\left[\mathcal{S}^{\phi'_t} e^{\epsilon \mathcal{H}^{-}(t)}\cdots \mathcal{S}^{\phi'_1} e^{\epsilon \mathcal{H}^{-}(\epsilon)}\Omega(0)\right]\rho(0).
    \label{eq:EDM-phy}
\end{align}
Therefore, the trace of the EDMs corresponds to different types and orders of the time-ordered correlation functions of the OQS defined by the sequence of super-operators ${\mathcal S}^{\phi'_1}, {\mathcal S}^{\phi'_2}, \ldots, {\mathcal S}^{\phi'_t}$.

\section{Complexity of non-Markovian dynamics}\label{sec-PCT}

\subsection{Polynomial complexity theorem}

The EDMs offer a general formulation of OQS dynamics using cumulants of baths, but solving the dynamics is still challenging since the number of EDMs increases exponentially with the evolution time. Here, we prove a polynomial complexity (PC) theorem, which establishes that the number of independent equations needed to fully describe the dynamics of an OQS increases at most linearly with the evolution time and polynomially with the bath size. 
\begin{theorem}[Polynomial complexity theorem]\label{theorem:PC}
For a central quantum system (S)  coupled to a generic quantum bath (B), the EDMs that describe the non-Markovian dynamics of the system S in a finite time $T$ can be expanded by a set of linearly independent vectors, and the size of the set increases at most linearly with the evolution time, i.e.,
\begin{equation}\label{eq:FDE}
N_{T} \sim d^2 T/{\epsilon},
\end{equation}
where $d$ is the dimension of the Hilbert space of the system S and $\epsilon$ represents the minimal time in which the bath B induces a discernible change in the reduced dynamics of the system S.  
\end{theorem}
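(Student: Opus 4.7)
The plan is to exploit the physical identification in Eq.~\eqref{eq:EDM-phy}, which recasts every EDM $\rho^{\boldsymbol{\Phi}^{(t)}}$ as the reduced state of the system under a prescribed sequence of weak interventions and hence an element of the operator space $\mathcal{L}(\mathcal{H}_S)$ of dimension $d^2$. This alone forces the exponentially many tensor entries $(2d^2{-}1)^t$ at each time $t$ to lie in a subspace of dimension at most $d^2$, which is the seed of the linear-in-$T$ bound.

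Concretely, I would first fix a time $t$ and pick a basis $\{v^{(t)}_i\}_{i=1}^{r_t}$ with $r_t\le d^2$ for the span of $\{\rho^{\boldsymbol{\Phi}^{(t)}}\}_{\boldsymbol{\Phi}^{(t)}}$ inside $\mathcal{L}(\mathcal{H}_S)$, and expand $\rho^{\boldsymbol{\Phi}^{(t)}}=\sum_i c^{\boldsymbol{\Phi}^{(t)}}_i v^{(t)}_i$. The closing tensor gives $\rho(t)=\delta^{\mathbf{0}}_{\boldsymbol{\Phi}^{(t)}}\rho^{\boldsymbol{\Phi}^{(t)}}=\sum_i c^{\mathbf{0}}_i v^{(t)}_i$, so the reduced state is already captured by this $r_t$-dimensional slice. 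I would then invoke the linearity of Eq.~\eqref{eq:EDM-Evolution}: substituting the expansion at time $t-1$ yields $\rho^{\boldsymbol{\Phi}'^{(t)}}=\sum_i\bigl(\mathbb{K}^{\phi'_t;\boldsymbol{\Phi}'^{(t-1)}}_{\phi_t;\boldsymbol{\Phi}^{(t-1)}}c^{\boldsymbol{\Phi}^{(t-1)}}_i\bigr)\,\mathcal{S}^{\phi_t}v^{(t-1)}_i$, so every new EDM is an operator-valued linear combination of the vectors $\mathcal{S}^{\phi_t}v^{(t-1)}_i$. Since these still live in $\mathcal{L}(\mathcal{H}_S)$, a basis at time $t$ can again be chosen with $r_t\le d^2$. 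Iterating over the $N=T/\epsilon$ time steps, the union of the $N$ bases yields at most $d^2\cdot T/\epsilon$ linearly independent vectors spanning every EDM relevant to the dynamics on $[0,T]$; the coefficients then obey a closed set of $N_T\lesssim d^2T/\epsilon$ linear equations that fully determines $\rho(T)$.

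The main obstacle is that two EDMs with the same operator value but different labels could, a priori, propagate differently because the kernel $\mathbb{K}^{\phi'_t;\boldsymbol{\Phi}'^{(t-1)}}_{\phi_t;\boldsymbol{\Phi}^{(t-1)}}$ depends explicitly on $\boldsymbol{\Phi}^{(t-1)}$. I would resolve this by observing that only the coefficients $c^{\boldsymbol{\Phi}^{(t-1)}}_i$, not the basis vectors $v^{(t-1)}_i$, carry the label dependence, so the kernel acts as a reshuffling of coefficients and never pushes the operator basis beyond dimension $d^2$ at any step. The role of $\epsilon$ as the smallest time interval over which the bath induces a discernible change then ensures that $T/\epsilon$ steps suffice to resolve all memory effects, yielding the quoted scaling $N_T\sim d^2T/\epsilon$; the polynomial-in-bath-size part follows because $\epsilon$ is controlled by the bath's spectral bandwidth, which grows polynomially with bath size for the usual physical baths.
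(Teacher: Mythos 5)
Your proof is correct and takes essentially the same route as the paper's: both arguments reduce to the observation that each EDM is a $d\times d$ Hermitian operator carrying only $d^2$ real parameters, so over $T/\epsilon$ time steps the full collection of EDMs spans a real linear space of dimension at most $d^2T/\epsilon$. The paper packages this as a single dimension count on operator-valued trajectory vectors in a $Td^2$-dimensional space rather than your union of per-time-slice bases, and your final assertion that the coefficients obey a \emph{closed} set of $N_T$ equations is not needed for (nor established by) this dimension count --- that operational sufficiency is the content of the paper's Theorem~\ref{theorem:MPS}.
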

\begin{proof}
We write the evolution path of an EDM as an operator-valued vector ${\boldsymbol \rho}^{{\boldsymbol\Phi}^{(T)}}\equiv \left[{\rho}^{{\boldsymbol\Phi}^{(T)}}, \ldots, {\rho}^{{\boldsymbol\Phi}^{(2)}}, {\rho}^{{\boldsymbol\Phi}^{(1)}}\right]$.
Since each element ${\rho}^{{\boldsymbol\Phi}^{(t)}}$ in the vector is an Hermitian matrix and can be expanded by $d^2$ Hermitian system operators with $d^2$ independent real parameters (including the trace), each operator-valued EDM vector ${\boldsymbol \rho}^{{\boldsymbol\Phi}^{(T)}}$ can be expressed as a vector in a linear space of real dimension $T \times d^2$. Consequently, at most $N_T=T d^2$ independent  vectors can be formed using all the EDMs, given by
\begin{equation}\label{eq:linearTransformation}
    {\mathbf R}^{{a}} = \left[r\right]^{{a}}_{{{\boldsymbol \Phi}^{(T)}}} {\boldsymbol \rho}^{{\boldsymbol \Phi}^{(T)}},
\end{equation}
where ${\mathbf R}^{{a}} = \left[R_T^{{a}},\ldots,R^{{a}}_2,R^{{a}}_1\right]$ with each component $R^{{a}}_t$ being a $d$-dimensional Hermitian matrix at time $t$, $\left[r\right]^{{a}}_{{{\boldsymbol \Phi}^{(T)}}}$ are real coefficients for ${a}\in\left\{0,1,\ldots,N_T-1\right\}$ (see Appendix~\ref{Appendix:EDMvectors} for detailed definitions), and the summation over the indices ${{\boldsymbol \Phi}^{(T)}}$ is implied. These $N_T$ independent EDM vectors $\left\{{\mathbf R}^{{a}}\right\}$ are sufficient to fully describe the non-Markovian dynamics in the finite time range $[0,T]$.
\end{proof}

Theorem~\ref{theorem:PC} means that the ultimate complexity of OQS problems increases at most as a polynomial function of the evolution time. Note that the number of independent EDMs $N_T$ does not explicitly depend on the size of the bath. With increasing bath size, the energy range of the whole system (or, mathematically, the norm of the total Hamiltonian) may grow and therefore the time step $\epsilon$, which is inversely proportional to the norm of the Hamiltonian, needs to be  reduced. This means an implicit dependence of $N_T$ on the bath size. However, such a dependence would be only polynomial (instead of exponential), since the total energy increases with the bath size only polynomially (actually, at most linearly for local interactions).
Therefore, in principle, efficient solutions exist for general OQS problems.

\subsection{Efficient MPS representation of OQS dynamics}

The PC theorem upper-bounds the ultimate complexity of the OQS problems, but finding a set of independent EDM vectors $\left\{{\mathbf R}^{{a}}\right\}$ (and hence an explicit polynomial algorithm) is still challenging. Here, we propose an efficient tensor contraction algorithm to solve the EDM tensor. This algorithm is based on the matrix product state (MPS) representation of the EDMs $\rho^{{\boldsymbol \Phi}^{(t)}}$, in which the correlation tensors ${\mathbb{C}}^{{\boldsymbol \Phi}^{(t)}}_{{\boldsymbol \Phi}^{'(t)}}$ (and hence the combined kernel tensors ${\mathbb{K}}^{\phi_t;{\boldsymbol \Phi}^{(t-1)}}_{\phi'_t;{\boldsymbol \Phi}^{'(t-1)}}$) are {\em effectively} represented by matrix product operator (MPO) tensor networks~\cite{Orus2014}.  

\begin{figure}[htpb]
    \centering
    \includegraphics[width=0.85\linewidth]{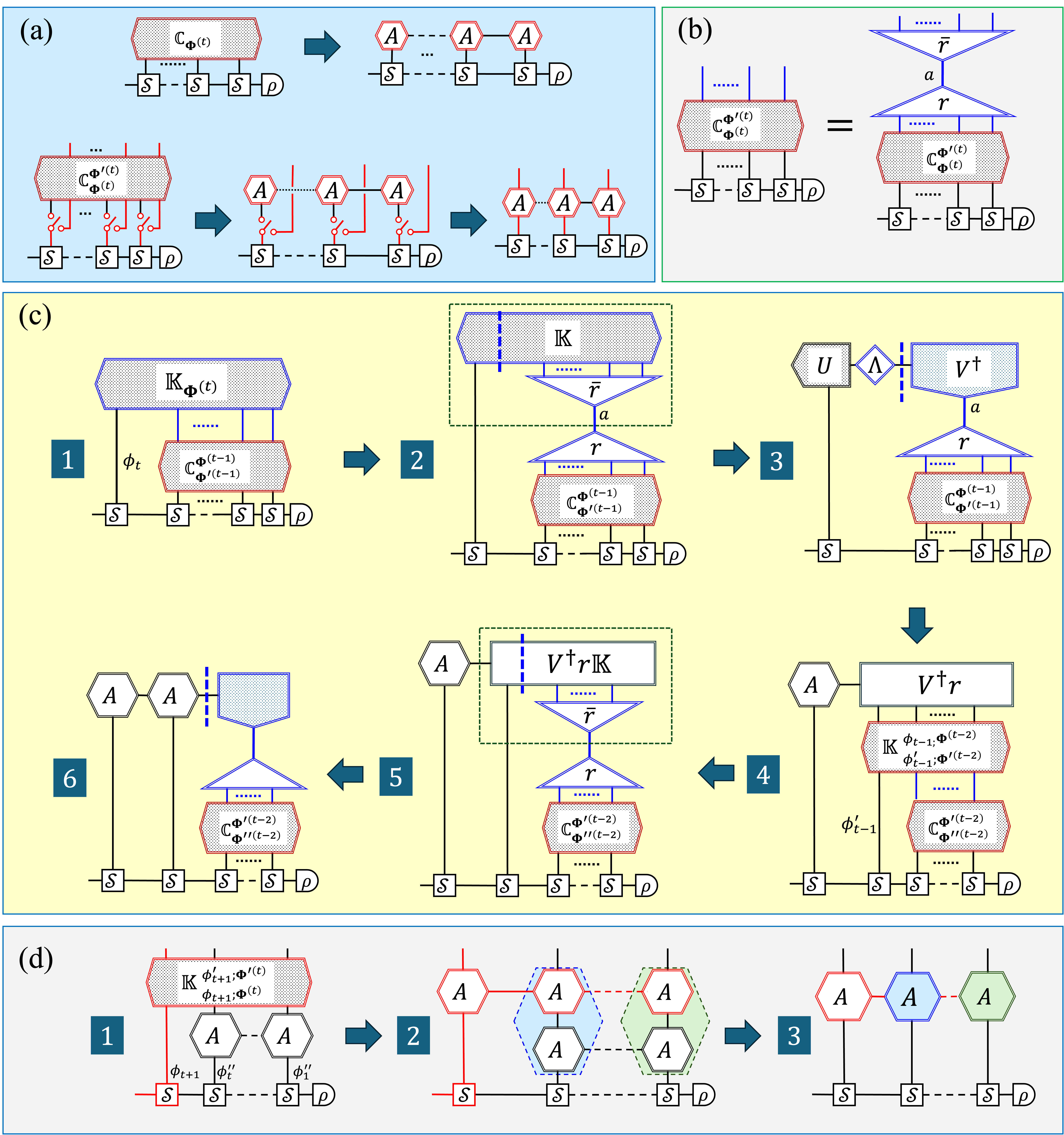}
    \caption{MPS representation of EDMs. (a) Theorem~\ref{theorem:MPS} represented in tensor-network diagrams. (b) Transform between EDMs $\rho^{{\boldsymbol\Phi}^{\prime (t)}}$ and component $R_t^a$ of independent EDM vectors by isometries $r$ and $\bar{r}$. (c) Tensor-network diagrams showing how an EDM can be decomposed into an MPS with bond dimension between local tensors upper-bounded by the number of independent EDM vectors (the range of $a$-index of the isometry in diagram 2), which proves Theorem~\ref{theorem:MPS}. (d) Algorithm for recursive construction of the MPS representation of an EDM.}
    \label{fig:Theorem_MPS}
\end{figure}

\begin{theorem}[{linearly increasing bond dimension theorem}]\label{theorem:MPS} The density matrix of an OQS has a {\em time-local} MPS representation as (see Fig.~\ref{fig:Theorem_MPS}a)
\begin{equation}\label{eq:MPS_DM}
\rho(t)=\left[{A}_{\phi_t}\right]_{a_{t-1}}\left[{A}_{\phi_{t-1}}\right]_{a_{t-1},a_{t-2}} \cdots \left[{A}_{\phi_2}\right]_{a_2,a_1} \left[{A}_{\phi_1}\right]_{a_1}{{\mathbb S}^{{\boldsymbol \Phi}^{(t)}}}\rho(0),
\end{equation}
and in turn (with ${A}^{\phi_t}_{\phi'_t}\equiv {A}_{\phi''_t} P^{\phi_t,\phi''_t}_{\phi'_t}$) the EDM
\begin{equation}
   \label{eq:MPS_EDM} 
   \rho^{{\boldsymbol \Phi}^{(t)}}=\left[{A}^{\phi_t}_{\phi'_t}\right]_{a_{t-1}}\left[{A}^{\phi_{t-1}}_{\phi'_{t-1}}\right]_{a_{t-1},a_{t-2}} \cdots \left[{A}^{\phi_2}_{\phi'_2}\right]_{a_2,a_1} \left[{A}^{\phi_1}_{\phi'_1}\right]_{a_1}
   {{\mathbb S}^{{\boldsymbol \Phi}^{'(t)}}}\rho(0),
    \end{equation}
and the MPS representation is {\em efficient} in the sense that the dimension $D_\tau$ (defined as the range of the lateral index $a_\tau$) of the bonds between the local tensors ${A}$'s increases at most linearly in time, that is
\begin{equation}\label{eq:upperbound-BD}
D_\tau \le N_{\tau}\sim d^2 \tau/\epsilon.
\end{equation}
\end{theorem}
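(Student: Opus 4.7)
The plan is to build the MPS recursively in time, using Theorem~\ref{theorem:PC} to control the rank at each temporal cut. The key observation is that, by Theorem~\ref{theorem:PC} applied to the truncated evolution on $[0,\tau]$, the family of EDM path vectors indexed by the history $\boldsymbol{\Phi}^{(\tau)}$ spans a subspace of dimension at most $N_\tau = d^2\tau/\epsilon$, which is precisely the quantity we want to transfer to the bond dimension $D_\tau$.

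Concretely, I would first extract from the proof of Theorem~\ref{theorem:PC}, for each time $\tau$, a pair of maps $[r]^{a_\tau}_{\boldsymbol{\Phi}^{(\tau)}}$ and $[\bar{r}]^{\boldsymbol{\Phi}^{(\tau)}}_{a_\tau}$ with $a_\tau\in\{0,\ldots,N_\tau-1\}$ satisfying $\bar{r}_\tau r_\tau\,\boldsymbol{\rho}^{\boldsymbol{\Phi}^{(\tau)}}=\boldsymbol{\rho}^{\boldsymbol{\Phi}^{(\tau)}}$ on the EDM subspace (Fig.~\ref{fig:Theorem_MPS}b), with $R^{a_\tau}_\tau \equiv [r]^{a_\tau}_{\boldsymbol{\Phi}^{(\tau)}}\rho^{\boldsymbol{\Phi}^{(\tau)}}$ the $N_\tau$ independent operator-valued coordinates at time $\tau$. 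Then I would construct the MPS by induction on $\tau$. Base case: $\rho^{\phi'_1}=[A^{\phi_1}_{\phi'_1}]_{a_1}\mathcal{S}^{\phi_1}\rho(0)$, where the local tensor absorbs the first-step combined kernel together with the isometry $\bar{r}_1$. Inductive step: assuming the MPS form for $\rho^{\boldsymbol{\Phi}^{(\tau-1)}}$, substitute into the evolution equation~\eqref{eq:EDM-Evolution}, insert $\bar{r}_\tau r_\tau$ on the history indices leaving the current time slice, and group the combined kernel $\mathbb{K}^{\phi'_\tau;\boldsymbol{\Phi}'^{(\tau-1)}}_{\phi_\tau;\boldsymbol{\Phi}^{(\tau-1)}}$, the picking tensor (via $A^{\phi_\tau}_{\phi'_\tau}\equiv A_{\phi''_\tau}P^{\phi_\tau,\phi''_\tau}_{\phi'_\tau}$), and the two isometries into a single new local tensor $[A^{\phi_\tau}_{\phi'_\tau}]_{a_\tau,a_{\tau-1}}$. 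This is precisely the manipulation sketched in Fig.~\ref{fig:Theorem_MPS}c--d, and it yields Eq.~\eqref{eq:MPS_EDM}. The reduced form Eq.~\eqref{eq:MPS_DM} follows by closing all upward arms with $\delta^0_\phi$ and using $\delta^0_{\phi'_\tau}P^{\phi_\tau,\phi''_\tau}_{\phi'_\tau}=\delta^{\phi_\tau}_{\phi''_\tau}$ to collapse $A^{\phi_\tau}_{\phi'_\tau}$ back to $A_{\phi_\tau}$. The bond index $a_\tau$ has range $N_\tau$ by construction, establishing Eq.~\eqref{eq:upperbound-BD}.

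The main obstacle I anticipate is verifying the temporal consistency of this isometry scheme: the $N_\tau$ coordinates $\{R^{a_\tau}_\tau\}$ retained at time $\tau$ must suffice to reconstruct every later EDM $\rho^{\boldsymbol{\Phi}^{(t)}}$ for $t>\tau$, otherwise inserting $\bar{r}_\tau r_\tau$ would drop information on which the later evolution depends and the induction would fail. This is guaranteed by linearity of the forward evolution: the map $\mathbb{K}\mathcal{S}^{\phi}$ in Eq.~\eqref{eq:EDM-Evolution} is linear in $\rho^{\boldsymbol{\Phi}^{(\tau)}}$, so any two histories agreeing on the PC-theorem basis $\{R^{a_\tau}_\tau\}$ produce identical futures. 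A secondary bookkeeping point is that Theorem~\ref{theorem:PC} as stated counts independent full-history path vectors in the $Td^2$-dimensional real space, but the same proof restricted to the prefix up to time $\tau$ gives the required single-cut bound $\tau d^2=N_\tau$, so the isometry ranges are well defined at every intermediate time.
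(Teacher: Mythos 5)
Your central idea---using the isometries $r_\tau,\bar r_\tau$ supplied by Theorem~\ref{theorem:PC} to bound the rank at each temporal cut, and justifying the insertion of $\bar r_\tau r_\tau$ by the linearity of the forward evolution---is exactly the mechanism of the paper's proof, and your ``temporal consistency'' paragraph correctly identifies why truncating to the $N_\tau$ independent components loses no information needed later. However, your forward induction has a genuine gap in the inductive step. The combined kernel tensor $\mathbb{K}^{\phi'_\tau;\boldsymbol{\Phi}'^{(\tau-1)}}_{\phi_\tau;\boldsymbol{\Phi}^{(\tau-1)}}$ in Eq.~\eqref{eq:EDM-Evolution} is \emph{not} local in time: it carries legs on every earlier time slice (cumulants connect time $\tau$ to arbitrarily early times, and the new open arms $\boldsymbol{\Phi}'^{(\tau-1)}$ of the updated EDM live on sites $1,\dots,\tau-1$). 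It therefore cannot be ``grouped into a single new local tensor $[A^{\phi_\tau}_{\phi'_\tau}]_{a_\tau,a_{\tau-1}}$''; applying it to the inductive-hypothesis MPS is an MPO--MPS multiplication that a priori multiplies every earlier bond dimension by the MPO bond dimension at that cut, so the hypothesis $D_{\tau'}\le N_{\tau'}$ for $\tau'<\tau-1$ is not preserved by the step as you describe it, and inserting $\bar r r$ only at the newest cut does not repair the older ones.

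The paper closes this hole by running the decomposition \emph{backward} from the final time (Fig.~\ref{fig:Theorem_MPS}c): at every cut $\tau$ of the final tensor, everything on the early side of the cut is exactly the EDM $\rho^{\boldsymbol{\Phi}^{(\tau)}}$ (the later kernels compose, cf.\ Eq.~\eqref{eq:Corr-Kernal-tensor}, into a map acting only on its open arms), so $\bar r_\tau r_\tau$ may be inserted there and an SVD of $\mathbb{K}\bar r_\tau$ peels off one local tensor with bond $\le N_\tau$; the remainder $V^\dagger r_\tau$ is pushed to the next cut and the sweep repeats. Your forward recursion is essentially the paper's \emph{algorithm} (Fig.~\ref{fig:Theorem_MPS}d), whose bond-dimension control is guaranteed \emph{by} the already-proved theorem via recanonicalization; it is not a self-contained proof of it. To repair your argument, either adopt the backward sweep or state the single global fact that at each cut $\tau$ the early part of the network factors through the $N_\tau$-dimensional image of $r_\tau$, and extract all local tensors from the resulting simultaneous decompositions. (A minor index slip: closing the EDM's arms sets the \emph{upper} index $\phi_\tau=0$, and the identity that collapses $A^{\phi_\tau}_{\phi'_\tau}$ back to $A_{\phi'_\tau}$ is $P^{0,\phi''_\tau}_{\phi'_\tau}=\delta^{\phi''_\tau}_{\phi'_\tau}$, not the relation you wrote.)
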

\begin{proof}
Figure~\ref{fig:Theorem_MPS}b and \ref{fig:Theorem_MPS}c illustrate the proof. 
The EDMs can be obtained from the independent EDM vectors by 
${\boldsymbol \rho}^{{\boldsymbol \Phi}^{(T)}}=\left[\bar{r}\right]^{{\boldsymbol\Phi}^{(T)}}_{a} {\mathbf R}^{{a}}$, where the inverse transform $\bar{r}$ is defined by $\left[r\right]_{{\boldsymbol\Phi}^{(T)}}^{a}\left[\bar{r}\right]^{{\boldsymbol\Phi}^{(T)}}_{{b}}=\delta^{{a}}_{{b}}$. Equation~\eqref{eq:linearTransformation} holds for each component of the vectors. Therefore, 
\begin{equation}
    \label{eq:isometry}
    R^a_{{t}}  =\left[r\right]^{{a}}_{{{\boldsymbol \Phi}^{({t})}}} {\rho}^{{\boldsymbol \Phi}^{({t})}},
\ \ {\rm and}\ \ 
{\rho}^{{\boldsymbol \Phi}^{({t})}}  = \left[\bar{r}\right]_{{a}}^{{{\boldsymbol \Phi}^{({t})}}}  R^a_{{t}}.
\end{equation}
This means the transforms $r$ and $\bar{r}$ can be viewed as isometry tensors as shown in Fig.~\ref{fig:Theorem_MPS}b.
Consider the density {matrix} $\rho(t)$, which can be obtained by applying the kernel tensor ${\mathbb K}_{{\boldsymbol \Phi}^{(t)}}\equiv 
{\mathbb K}^{0;{\mathbf 0}}_{\phi_t;{\boldsymbol \Phi}^{(t-1)}}$ to the EDM $\rho^{{\boldsymbol \Phi}^{(t-1)}}$, as shown in diagram 1 in Fig.~\ref{fig:Theorem_MPS}c. By Theorem~\ref{theorem:PC}, the EDM can be reduced to $R_{t-1}^a$ using the isometry $r$ and then be recovered by $\bar{r}$ (diagram 2 in Fig.~\ref{fig:Theorem_MPS}c), with the bond dimension between the two isometries $r$ and $\bar{r}$ (the range of $a$) being $D_{t-1}\le N_{t-1}$. Then one can perform the singular value decomposition (SVD) of the combined tensor ${\mathbb K}\bar{r}$ as indicated by the vertical blue dashed line in the diagram, with the result ${\mathbb K}\bar{r}=U\Lambda V^{\dag}$, in which $U$ and $V^{\dag}$ are semi-unitary and the positive-definite diagonal matrix $\Lambda$ contains all non-zero singular values  (see diagram 3)~\cite{Schollwock2011}. The number of non-zero singular values and hence the dimension of the bond between $\Lambda$ and $V^{\dag}$ are at most $D_{t-1}$.
The local tensor can be obtained as ${A}\equiv U\Lambda$ (diagram 4). The combined tensor $V^{\dag}r{\mathbb K}$ applied to the EDM at $t-2$ (diagram 5) can be separated using the same procedure as that transforming diagram 2 to 3 to obtain another local tensor ${A}$ as shown in diagram 6. This procedure is continued until all local tensors ${A}$'s are obtained. Therefore, Eq.~\eqref{eq:MPS_DM} is obtained (tensor diagram shown in upper row of Fig.~\ref{fig:Theorem_MPS}a) and the upper limit of bond dimension in Eq.~\eqref{eq:upperbound-BD} is guaranteed. By applying the picking tensors (illustrated in lower row of Fig.~\ref{fig:Theorem_MPS}a), Eq.~\eqref{eq:MPS_EDM} is also proved.
\end{proof}
\noindent{\bf Remarks}:
\begin{enumerate}
    \item As seen from the proof, the effects of a correlation tensor on the dynamics of the OQS are equivalent to those of a  {\em time-local} MPO, that is
\begin{equation}\label{eq:MPO_C}{\mathbb{C}}_{{\boldsymbol \Phi}^{'(t)}} \sim \left[{A}_{\phi'_t}\right]_{a_{t-1}}\left[{A}_{\phi'_{t-1}}\right]_{a_{t-1},a_{t-2}} \cdots \left[{A}_{\phi'_2}\right]_{a_2,a_1} \left[{A}_{\phi'_1}\right]_{a_1},
\end{equation}
and in turn 
\begin{equation}
   \label{eq:MPO_C_ud} 
    \mathbb{C}^{{\boldsymbol \Phi}^{(t)}}_{{\boldsymbol \Phi}^{'(t)}}\sim \left[{A}^{\phi_t}_{\phi'_t}\right]_{a_{t-1}}\left[{A}^{\phi_{t-1}}_{\phi'_{t-1}}\right]_{a_{t-1},a_{t-2}} \cdots \left[{A}^{\phi_2}_{\phi'_2}\right]_{a_2,a_1} \left[{A}^{\phi_1}_{\phi'_1}\right]_{a_1}.
    \end{equation}
 The MPO representation dramatically reduces the complexity of the correlation tensors. Although the correlations of fluctuations in a many-body environment can be very complex (requiring $2^{{\mathcal O}(L)}$ free parameters to describe all $L$-th order correlations), they need to go through a very much limited number of channels to affect the dynamics of an OQS since the central system has a finite number of degrees of freedom, and therefore they can be organized into a small number of ``modes'' represented by an MPO.
 It should be noted that the equivalence in Eqs.~\eqref{eq:MPO_C} and \eqref{eq:MPO_C_ud} is only {\em effective}, that is, it does not mean that in general the correlation tensors have efficient MPO representation.
\item The combined kernel tensor ${\mathbb{K}}^{\phi_t;{\boldsymbol \Phi}^{(t-1)}}_{\phi'_t;{\boldsymbol \Phi}^{'(t-1)}}$ can be regarded as a special case of a correlation tensor (${\mathbb{K}}^{\phi_t;{\boldsymbol \Phi}^{(t-1)}}_{\phi'_t;{\boldsymbol \Phi}^{'(t-1)}}={\mathbb{C}}^{{\boldsymbol \Phi}^{(t)}}_{{\boldsymbol \Phi}^{'(t)}}$ when ${\mathbb{C}}^{{\boldsymbol \Phi}^{(t-1)}}_{{\boldsymbol \Phi}^{'(t-1)}}$ is an identity tensor), so it also has an efficient MPO representation when its effects on the OQS are considered.
\item  The {\em time-local} MPO structure of the correlation tensors does not mean that the existence of a time-local equation of motion for the OQS dynamics, since the local tensors ${A}$'s need to be updated for each step of time evolution. Indeed, the evolution equation for EDMs in Eq.~\eqref{eq:EDM-Evolution} is non-local in time.
\end{enumerate}

Theorem~\ref{theorem:MPS} does not assume any constraint on the memory time in the non-Markovian process. When the bath has a finite memory time, the bond dimension in the MPS representation is reduced according to the following corollary.

\begin{corollary}\label{Cor_MPS_memory}
When the bath has a finite memory time $T_m$, i.e., if all the culmulants $\tilde{C}_{\phi_{t_l}, \ldots,\phi_{t_2},\phi_{t_1}}=0$ for $t_l-t_1>T_m$, the bond dimension between the local tensors ${A}$'s in Eqs.~\eqref{eq:MPS_DM} and \eqref{eq:MPS_EDM} is upper-bounded by
\begin{equation}\label{eq:PC-memory}
D_{\tau} \le  \min(N_{\tau},N_{T_m}).
\end{equation}
\end{corollary}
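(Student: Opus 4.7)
The plan is to leverage Theorem~\ref{theorem:MPS} directly and then add the finite-memory refinement. The first bound $D_\tau \le N_\tau$ is inherited without change: the construction in Fig.~\ref{fig:Theorem_MPS}c, which produces the local tensors $A$ via successive SVDs of ${\mathbb K}\bar r$, applies verbatim regardless of the memory structure of the bath, so at every cut the bond rank is bounded by the number of independent EDM vectors up to that time. The new work is to show that the same bond rank is also bounded by $N_{T_m}$.

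The key observation I would exploit is that in the cumulant-expansion diagram of Fig.~\ref{fig:TensorRep}d, each upward open arm at a given time $\tau' < \tau$ connects to the future only through kernel tensors $\tilde K_{\boldsymbol \Theta}$ (equivalently, cumulants $\tilde C_{\boldsymbol\Theta}$) that straddle the cut at $\tau$. By assumption $\tilde C_{\boldsymbol\Theta}=0$ whenever the temporal span of $\boldsymbol\Theta$ exceeds $T_m$. Therefore any picking tensor at a time $\tau'$ with $\tau-\tau' > T_m$ that attempts to route its index to a cumulant ending after $\tau$ must force the surviving cumulant to vanish. Equivalently, the only nonzero contribution of such a $\tau'$ to the EDM across the cut is the branch of the picking tensor $P^{\phi',\phi''}_\phi$ that terminates the upward arm with $\phi'=0$. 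In tensor-network language, all upward arms more than $T_m$ steps in the past can be contracted against the closing tensor $\delta^{0}_{\phi}$ without any loss.

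Consequently, the EDM immediately left of the cut at time $\tau$ is effectively an EDM over a sliding window of length at most $T_m$ ending at $\tau$, with the older history already traced out to give the ordinary reduced density matrix as a boundary condition. Applying Theorem~\ref{theorem:PC} to this window of length $T_m$ yields at most $N_{T_m} \sim d^2 T_m/\epsilon$ independent EDM vectors spanning the space the bond must transmit. Repeating the SVD construction of Theorem~\ref{theorem:MPS} for this window‐restricted EDM gives a bond dimension no larger than $N_{T_m}$, and combining this with the unconditional bound $D_\tau\le N_\tau$ produces Eq.~\eqref{eq:PC-memory}.

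The main obstacle will be making the ``the older arms can be closed'' step fully rigorous at the level of the MPS construction of Theorem~\ref{theorem:MPS}, rather than only at the level of the cumulant-expansion picture: one has to verify that the combined kernel tensors ${\mathbb K}^{\phi'_t;{\boldsymbol\Phi}^{\prime(t-1)}}_{\phi_t;{\boldsymbol\Phi}^{(t-1)}}$ inherit the vanishing property and, consequently, that restricting to the window does not spoil the SVD rank bound. Concretely, I would formalise a lemma stating that for any $t>T_m$ the combined kernel tensor factorises (up to contraction with $\delta^0_\phi$ on the old arms) through a reduced tensor depending only on $\{\phi_{t-T_m},\ldots,\phi_t\}$ and their primed counterparts; the corollary then follows by applying the proof of Theorem~\ref{theorem:MPS} to this reduced tensor instead of the full one.
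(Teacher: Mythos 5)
Your proposal is correct and takes essentially the same route as the paper's proof in Appendix~\ref{sec:FBDT-FMT-app}: the lemma you flag as the remaining obstacle is exactly the identity the paper establishes first, namely $\mathbb{K}^{\phi_{t+1};{\boldsymbol \Phi}^{(t)}}_{\phi'_{t+1};{\boldsymbol \Phi}'^{(t)}}=\mathbb{K}^{\phi_{t+1};{\boldsymbol \Phi}^{(t:t-T_m+1)}}_{\phi'_{t+1};{\boldsymbol \Phi}'^{(t:t-T_m+1)}}\delta^{{\boldsymbol \Phi}^{(t-T_m)}}_{{\boldsymbol \Phi}'^{(t-T_m)}}$ for $t>T_m$, which follows from the vanishing of long-span cumulants in Eq.~\eqref{eq:CombinedKT-app}. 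From there the paper, like you, closes the arms older than $T_m$, observes that the window-restricted EDM vectors span a space of dimension at most $N_{T_m}$ by the counting of Theorem~\ref{theorem:PC}, and reruns the SVD construction of Theorem~\ref{theorem:MPS} to get $D_\tau\le\min(N_\tau,N_{T_m})$.
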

 This is because the memory time limits the number of independent EDM vectors involved in the evolution [see Appendix~\ref{sec:FBDT-FMT-app} for a proof].

Theorem~\ref{theorem:MPS} and Corollary~\ref{Cor_MPS_memory} indicate that, once the cumulants of the noises from the bath are known (by computation or measurement) or if the OQS evolution has a tensor network representation, which are the case in a broad range of models (with examples discussed in Sec.~\ref{Sec:applications}), one can recursively solve the EDM tensor with the computation hour and memory expenditure increasing polynomially with the evolution time and the bath size. In the proof of Theorem~\ref{theorem:MPS}, the MPS representation has been obtained for each time $t$ from the end to the beginning of the evolution. To save the computational time consumption, the local tensors ${A}$ can actually be recursively calculated from the beginning to the end of evolution as illustrated in Fig.~\ref{fig:Theorem_MPS}d. The procedure is as follows. First, the calculation of ${A}$ tensor for $t=1$ is trivial. Then, if the ${A}$ tensors have been obtained for $t$, the EDM for $t+1$ is reached by applying the combined kernel function ${\mathbb K}^{\phi'_{t+1};{\boldsymbol \Phi}^{'(t)}}_{\phi_{t+1};{\boldsymbol \Phi}^{(t)}}{\mathcal S}^{\phi_{t+1}}$ (see diagram 1 in Fig.~\ref{fig:Theorem_MPS}d). According to Theorem~\ref{theorem:MPS}, the ${\mathbb K}$ tensor has an efficient MPO representation as shown in diagram 2 in Fig.~\ref{fig:Theorem_MPS}d. The time-local ${A}$ tensors at the same time (enclosed by dashed hexagons in diagram 2 in Fig.~\ref{fig:Theorem_MPS}d) can be combined into a single ${A}$ tensor as shown in diagram 3 in Fig.~\ref{fig:Theorem_MPS}d. The ${A}$ tensors in diagram 3 in Fig.~\ref{fig:Theorem_MPS}d can be reduced by transforming the tensor network into the right canonical form~\cite{Schollwock2011} and Corollary~\ref{Cor_MPS_memory} guarantees that the bonds between them have dimensions bounded by Eq.~\eqref{eq:PC-memory}. This procedure can be repeated until the EDM at a desired time $T$ is obtained.

Note that other numerical tensor-network methods have been proposed to deal with non-Markovian dynamics, such as the time-evolving MPO (TEMPO)~\cite{Strathearn2018} and the process tensor in the MPO (PT-MPO) form~\cite{Pollock2018,Jorgensen2019,Link2024}. The TEMPO method is designed to deal with OQS's in Gaussian bosonic environments. The PT-MPO method is formulated for general OQS's by solving the process tensor, which in general, as noted in Ref.~\cite{Aloisio2023}, is classically hard due to the potentially divergent bond dimension. In contrast, the method presented in this paper solves the EDM tensor, whose bond dimension is guaranteed to increase only linearly with evolution time.

\section{Application of EDM tensor networks}
\label{Sec:applications}
\subsection{Spin-boson model}
\label{Sec_spin-boson}
Here we use the spin-boson model~\cite{Leggett1987} to demonstrate  the efficiency of the EDM method. The Hamiltonian in the interaction picture reads 
\begin{equation}
	 {H}(t)= {S}_{z}(t)\sum_{k} g_{k} \left({a}^{\dagger}_{k}e^{i\omega_k t}+{\rm h.c.}\right)\equiv {S}_{z}(t) B(t),
\end{equation}
where ${S}_{z}(t)=\cos \left(\mu t\right) {S}_{z}+\sin \left(\mu t\right) {S}_{y}$ describes the $z$-direction spin-1/2 operator under the tunneling term $\mu {S}_{x}$, ${a}^{\dagger}_{k}$ 
 (${a}_{k}$) is the creation (annihilation) operator of the $k$-th bosonic mode with frequency $\omega_{k}$, and $g_{k}$ is the coupling strength between the spin and the $k$-th bosonic mode.  The distribution of the coupling strengths is described by the spectral density $J(\omega)\equiv \sum_{k} g^{2}_{k} \delta(\omega-\omega_{k})$. In the numerical calculation, we set the initial state of the system to be $\rho(0)= {S}_{z}+1/2$ (fully polarized along the $z$-axis) and the environment in the zero-temperature state (vacuum), i.e. $\Omega=\left|{\rm vacuum}\right\rangle\left\langle{\rm vacuum}\right|$ and take the Ohmic spectrum $J(\omega)=2 {J_0} \omega e^{-\omega/\omega_c}$ with a dimensionless coupling constant ${J_0}$ and a cutoff frequency $\omega_c$.  The quantum noise $B(t)$ has only second-order cumulants. From $J(\omega)$ we analytically obtain the combined kernel tensor and then numerically evaluate the EDM tensor using Eq.~\eqref{eq:EDM-Evolution} and the tensor-network contraction procedure shown in Fig.~\ref{fig:Theorem_MPS}d [see Appendix~\ref{sec:SBM-app} for details].

\begin{figure}[thpb]
	\centering
	\includegraphics[width=0.9\columnwidth]{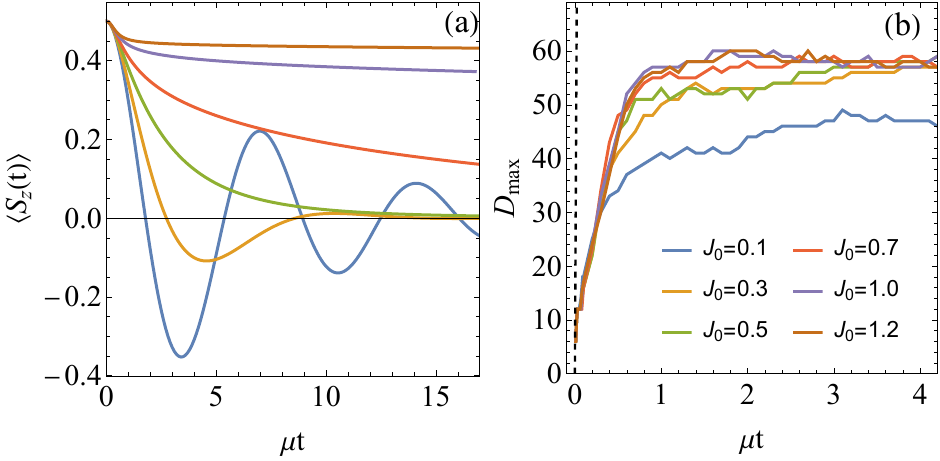}
	\caption{Solution of the reduced spin dynamics in the spin-boson model. (a) Spin polarization along the $z$ axis as a function of time for various spin-bath coupling strengths $J_0$. (b) The maximum bond dimension as a function of the evolution time $t$ for truncation precision $\xi=10^{-5}$. The dashed line is the upper bound given by Corollary~\ref{Cor_MPS_memory}. Here, the evolution step ${\epsilon}=0.01 \mu^{-1}$ and the noise frequency cutoff $\omega_c=5.0\mu$.}
	\label{fig:SBM}
\end{figure}

The spin polarization $\left\langle S_{z}(t)\right\rangle=\text{Tr}\left[{S}_{z}(t) \rho(t)\right]$ is illustrated in Fig.~\ref{fig:SBM}a for various coupling strengths $J_0$. In the calculation, we set the time step as ${\epsilon}=0.01 \mu^{-1}$ and introduce a self-adaptive truncation of the bond dimension by  discarding the singular values $\{\lambda_{\alpha}\}$ that satisfy $\lambda_{\alpha}/\lambda_{d^2+1} \le \xi$ with the truncation precision $\xi$ being a small number (e.g., $10^{-5}$). The spin relaxation dynamics exhibits a crossover from oscillation to overdamped decay at $J_0\approx 0.5$, indicating the competition between the coherent evolution and dissipation~\cite{Leggett1987, Strathearn2018}. Figure~\ref{fig:SBM}b shows the maximum bond dimension $D_{\rm max}\equiv \max\left( D_{\tau}\right)$ increasing with the evolution time $t$. In short times, $D_{\rm max}$ increases linearly for all coupling strengths, which means that the number of EDMs involved in the relaxation dynamics increases linearly. However, $D_{\rm max}$ saturates at later times, suggesting that a finite number of EDMs is sufficient for describing the non-Markovian dynamics. The maximum bond dimension increases faster and saturates at higher values for stronger coupling, which implies a dependence on the bath size since $J_0$ is proportional to the density of bosonic modes. However, the dependence of the bond dimension on $J_0$ (or the mode density) is sublinear (with a saturating behavior for strong coupling). These results validate the upper bound of the bond dimension given by Eq.~\eqref{eq:PC-memory} (dashed line in Fig.~\ref{fig:SBM}b). In addition, we verify that the linearly increasing bond dimension theorem holds regardless of the truncation precision [see Appendix~\ref{sec:SBM-app} for details]. 

\subsection{Separable baths}
\label{Sec:ICD}

\subsubsection{General formalism}

There are cases where the quantum baths are separable, that is, can be decomposed into (approximately) independent sub-baths, as illustrated in Fig.~\ref{fig:ICD}a. Examples include a bath composed of independent bosonic modes~\cite{Leggett1987}, distinct spins in a spin bath~\cite{Hanson2007, Khaetskii2002}, and multi-bath OQS's~\cite{Chan2014,Giulio2017}. In such cases, the noise fields are sums of those from different sub-baths, $B_{\alpha}(t)=\sum^{K}_{k=1} B_{k;\alpha}(t)$, with $K$ being the number of sub-baths.
The bath being separable means that the noises from different sub-baths have no correlations, or mathematically, $\left[B_{k;\alpha}(t),B_{k';\alpha'}(t')\right]=0$ for $k\ne k'$ and $\Omega=\otimes_{k=1}^K \Omega_k$  with $\Omega_k$ being the initial density matrix of the $k$-th sub-bath. For such separable baths, the cumulants can be decomposed as
\begin{equation}\label{eq:componentDecomposition}
\tilde{C}_{{\boldsymbol\Theta}}=\sum^{K}_{k=1} \tilde{C}_{k;{\boldsymbol\Theta}},
\end{equation}
where $\tilde{C}_{k;{\boldsymbol\Theta}}$ denotes the cumulant of the $k$-th sub-bath. This decomposition leads to a recursive formula of the EDMs [tensor diagrams illustrated in Fig.~\ref{fig:ICD}(b, c)]
\begin{equation}\label{eq:Gaudin-recursive}
    \rho^{{\boldsymbol \Phi}^{(T)}}_{L+1}=  \mathbb{C}_{L+1; {\boldsymbol \Phi}^{'(T)} }\left[\prod^{T}_{\tau=1} P^{\phi_\tau, \phi'_\tau}_{\phi''_{\tau}}\right] \rho^{{\boldsymbol \Phi}^{''(T)}}_{L}
,
\end{equation}
where $\rho^{{\boldsymbol \Phi}^{''(T)}}_{L}$ denotes the EDM tensor of the system coupled to the first $L$ sub-baths with $\rho^{{\boldsymbol \Phi}^{(T)}}_{0}= {\mathcal S}^{{\boldsymbol\Phi}^{(T)}}\rho(0)$.  Using the cumulants or the combined kernel functions of the $(L+1)$-th sub-bath ${\mathbb K}_{L+1;{\boldsymbol\Phi}^{(t)}}^{{\boldsymbol\Phi}^{'(t)}}$ and the EDM $\rho^{{\boldsymbol \Phi}^{(t)}}_{L}$, one can calculate recursively the EDMs in a series of sub-baths following the same procedure as in Fig.~\ref{fig:Theorem_MPS}d. Note that the linearly increasing bond dimension theorem holds for every EDM tensor $\rho^{{\boldsymbol \Phi}^{(T)}}_{L}$.

\begin{figure}
    \centering
    \includegraphics[width=0.9\linewidth]{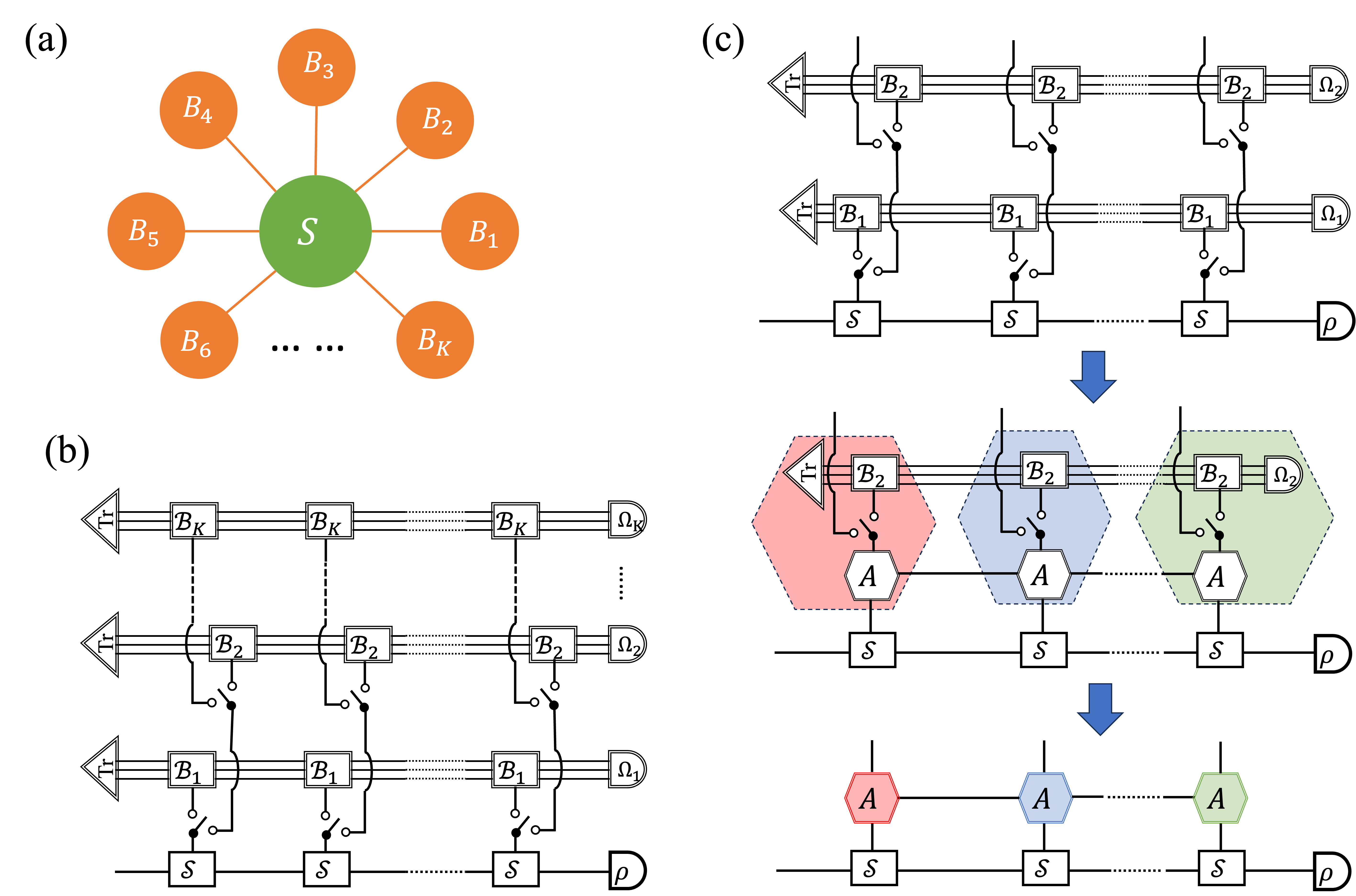}
    \caption{OQS dynamics in a separable bath. (a) Schematic of a central system coupled to $K$ independent sub-baths. (b) The decomposition of the correlation tensor into contributions from different sub-baths. (c) The contraction procedure shown by an example of two sub-baths. A series of EDMs $\rho^{{\boldsymbol \Phi}^{(t)}}_{L}$ for $L=1,2,\ldots, K$ are obtained recursively.}
    \label{fig:ICD}
\end{figure}

\subsubsection{Gaudin model}
As a typical and important example of separable baths, we consider the Gaudin model~\cite{Gaudin1976}, in which a central spin interacts with $K$ independent bath spins. The Hamiltonian reads
\begin{equation}\label{eq:GaudinModel}
     {H}(t)=\sum^{K}_{k=1} g_k  {{\mathbf S}} \cdot  {{\mathbf J}}_{k},
\end{equation}
where $g_k$ is the strength of the coupling between the central spin-1/2 ${{\mathbf S}}\equiv ({S}_{x}, {S}_{y}, {S}_{z})$ and the $k$-th bath spin-1/2 ${{\mathbf J}}_{k}\equiv({J}_{k;x}, {J}_{k;y}, {J}_{k;z})$. The solution of the Gaudin model is in general hard. Solutions are available only for some special cases, such as reduction of the bath to single large spins when the coupling is uniform ($g_k={\rm constant}$), exact diagonalization for small $K$~\cite{Cywifmmode2010}, Bethe ansatz for highly polarized spin bath~\cite{Bortz2007, He2022}, and density matrix renormalization group method for short time dynamics~\cite{Stanek2013, Stanek2014}.

\begin{figure}[thpb]
    \centering
    \includegraphics[width=0.9\columnwidth]{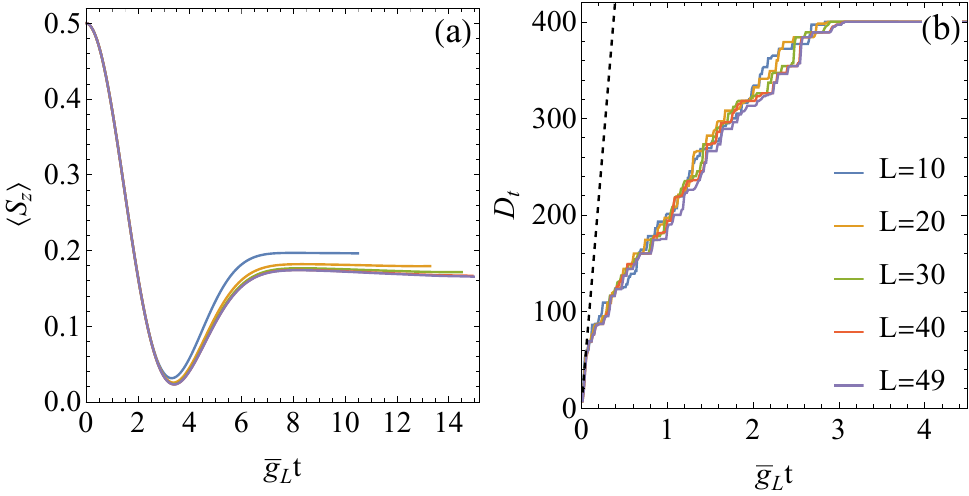}
    \caption{Solution of Gaudin model. (a) The central spin polarization with the first $L$ spins included in the bath. (b) The bond dimension $D_t$ in evaluating the EDM tensor $\rho^{{\boldsymbol \Phi}^{T}}_{L}$ with $t=\epsilon, \cdots, T$. The dashed line is the upper bound given by Corollary~\ref{Cor_MPS_memory}. The time is scaled by $\bar{g}_L$, the effective coupling strength between system and the $L$-spin bath. The plateau is the bond dimension cutoff ($D_c=400$) set in the numerical calculation. Parameters are $N=49$, $\epsilon=0.03 g^{-1}$, and $T=15.0g^{-1}$ with $g$ being the coupling constant. The truncation precision is set as $\xi=10^{-6}$.}
    \label{fig:Gaudin}
\end{figure}

An important feature of the Gaudin model is that the bath spins are time-independent, which leads to an infinite memory time. Furthermore, each bath spin constitutes a highly non-Gaussian quantum noise with an infinite number of high-order cumulants, which makes the cumulant expansion infeasible. Nevertheless, the  decomposition procedures shown in Eq.~\eqref{eq:Gaudin-recursive} offer an efficient approach to solving the EDM tensor. The procedure is as follows (as illustrated in Fig.~\ref{fig:ICD}c). First, we compute the MPS EDM of the central spin $\rho^{{\boldsymbol \Phi}^{(T)}}_{L=1}$ with only the first bath spin considered. Then, using $\rho^{{\boldsymbol \Phi}^{(T)}}_{L=1}$ from the previous step, we compute the MPS EDM of the central spin including the effects of the first 2 bath spins, $\rho^{{\boldsymbol \Phi}^{(T)}}_{L=2}$. This is repeated until all bath spins are included and the MPS EDM $\rho^{{\boldsymbol \Phi}^{(T)}}=\rho^{{\boldsymbol \Phi}^{(T)}}_{L=K}$ is obtained. Each step is essentially an exactly solvable single-spin bath problem [see Appendix~\ref{sec:GM-app} for details].

In the numerical calculation, we assumed that the central spin is initially polarized along the $z$ axis, and that the bath spins are at an infinitely high temperature, that is, unpolarized. The couplings are set to have a linearly decreasing distribution $g_k=g \sqrt{\frac{6K}{2K^2+3K+1}} \frac{K+1-k}{K}$~\cite{Stanek2013}. The numerical results are shown in Fig.~\ref{fig:Gaudin}.  In the calculation, the bond dimension $D_t$ is truncated by discarding the singular values $\{\lambda_{\alpha}\}$ that satisfy $\lambda_{\alpha}/\lambda_{d^2+1} \le \xi$ with the truncation precision $\xi=10^{-6}$. In addition, a hard cutoff $D_c=400$ of the bond dimension is introduced to manage the computational time. The error introduced by this cutoff is analyzed in Appendix~\ref{sec:GM-app}. As shown in Fig.~\ref{fig:Gaudin}b, the bond dimension $D_{t}$ in the EDM tensor $\rho^{{\boldsymbol \Phi}^{(T)}}_{L}$ in the bath including the strongest $L$ spins increases linearly with time $t$  until it reaches the cutoff, which demonstrates the linearly increasing bond dimension theorem. It is interesting to note that the bond dimension increases nearly independently of the bath size as a function of the scaled time $\bar{g}_Lt$, where the effective coupling strength $\bar{g}_L\equiv \sqrt{\sum^{L}_{k=1} g^2_k}$ for the $L$-spin bath (see Fig.~\ref{fig:Gaudin}b). Appendix~\ref{Append_universal} presents more data to demonstrate the universal behavior of bond dimension increasing with the scaled time. This universal behavior means that for a given real time $t$ the number of independent equations needed to describe the central spin dynamics increases at most linearly with the bath size.

\subsubsection{Possible generalization to inseparable baths}
The formalism presented here for computing EDMs in separable baths where the sub-baths have no interactions or initial correlations  can be generalized to non-separable baths as long as the sub-baths can be arranged in a one-dimensional chain or tree-like structure and the interactions and initial correlations, if exist, are only {\em short-ranged}. The possibility of such generalization results from the chain-structure feature of Eq.~\eqref{eq:Gaudin-recursive}  as shown in Fig.~\ref{fig:ICD}b. To generalize the EDM tensor network, one needs to modify the picking tensors connecting the sub-baths to be MPOs taking into account the short-range interactions or initial correlations. Remarkably, even when solving the chain-like quantum bath itself is computationally hard, the PC theorem guarantees that the OQS problem is solvable.

\section{Discussions and conclusions}
\label{Sec_discussion}
In conclusion, the problem of non-Markovian dynamics of OQS's in general is {\em not} computationally hard according to the polynomial complexity theorem. Furthermore, the problem can be formulated in terms of extended density matrix tensor networks. Guaranteed by the  linearly increasing bond dimension theorem, the computational complexity of evaluating the EDM tensors increases only as a polynomial function of the evolution time and the bath size. We demonstrated the theorems and the method using two typical models, namely, the spin-boson model and the  Gaudin model. The theory and numerical examples show that the OQS problems are solvable as long as the cumulants of noises in baths are known (by measurement or computation).  Several numerical methods are available for computing the noise correlations and cumulants, such as the link-cluster expansion~\cite{Saikin2007} and cluster-correlation expansion~\cite{Yang2008, Yang2009, Witzel2014}.  It is also possible to experimentally measure cumulants for many complex quantum baths through, e.g., quantum nonlinear spectroscopy~\cite{Wang2019, Meinel2022}. Another scenario of existence of explicit efficient algorithms for OQS problems is the cases where a two-dimensional tensor network can be constructed for the EDMs (including the temporal dimension for evolution), such as separable baths and one-dimensional baths with short-range interactions.

The significance of the PC theorem goes beyond the EDM tensor-network formalism, because the theorem guarantees the existence of an efficient solution of the OQS problem regardless of the complexity of the quantum many-body bath. The question is how to find an explicit algorithm.
In the EDM formalism, the noise cumulants are needed, but the computation of the noise correlations in quantum many-body baths is in general a difficult problem~\cite{Aloisio2023}. 
However, in many cases, the OQS problem can be formulated without explicitly involving the noise correlations in the baths. For example, bath states can be formulated as tensor networks for a broad range of many-body systems~\cite{Cirac2021, Orus2019} and the reduced dynamics of the OQS can be obtained by tensor contraction, which, if executed in a proper procedure, is guaranteed by the PC theorem to have no exponentially growing bond dimension.

The PC theorem and the EDM formalism also offer new approaches to  studying many-body systems. The evaluation of local observables in quantum many-body systems can be formulated as equivalent to an OQS problem. A direct consequence is that the complexity of many-body problems can be largely reduced according to the PC theorem when only the local observables are of interest.

\acknowledgements{
This work was supported by the Innovation Program for Quantum Science and Technology Project No. 2023ZD0300600, the Hong Kong Research Grants Council - General Research Fund (Project No. 14302121), the National Natural Science Foundation of China/Hong Kong Research Council Collaborative Research Scheme Project CRS-CUHK401/22, the Hong Kong Research Grants Council Senior Research Fellow Scheme Project SRFS2223-4S01, and the New Cornerstone Science Foundation.
}

\appendix

\section{Tensor network representation of correlations} \label{Appd-sec:KT-app}
The correlation in Eq.~\eqref{eq:cumulantEx} is in the form of summation of cumulant decomposition. Here we transform the correlation tensor into a tensor product of kernel functions 
$ \tilde{K}_{{\boldsymbol\Theta}} \equiv \delta^{{\mathbf 0}}_{{\boldsymbol\Theta}}+\tilde{C}_{{\boldsymbol\Theta}}$ (an example tensor network shown in Fig.~\ref{fig:TensorRep}d).
Using  the picking tensors $ P^{\phi',\phi''}_{\phi}$ and ${\mathbb P}^{{\boldsymbol\Theta}',{\boldsymbol\Theta}''}_{{\boldsymbol\Theta}}\equiv\prod_{\phi_{\alpha}\in{\boldsymbol \Theta}}P^{\phi'_{\alpha},\phi''_{\alpha}}_{\phi_{\alpha}}$, we define the open-armed correlation tensors and kernel tensors as
\begin{subequations} 
  \begin{align}
{\mathbb C}^{{\boldsymbol\Phi}^{'(t)}}_{{\boldsymbol\Phi}^{(t)}} & \equiv {\mathbb P}^{{\boldsymbol\Phi}^{'(t)},{\boldsymbol\Phi}^{''(t)}}_{{\boldsymbol\Phi}^{(t)}}{\mathbb C}_{{\boldsymbol\Phi}^{''(t)}}  ,\label{eq:def_C_updwn}\\
\tilde{K}^{{\boldsymbol\Phi}^{'(t)}}_{{\boldsymbol\Phi}^{(t)}}
\left({\boldsymbol\Theta}\right) & \equiv  {\mathbb P}^{{\boldsymbol\Phi}^{'(t)},{\boldsymbol\Phi}^{''(t)}}_{{\boldsymbol\Phi}^{(t)}} \tilde{K}_{{\boldsymbol\Theta}''} {\delta}^{\mathbf 0}_{\bar{\boldsymbol\Theta}''} =\delta^{{\boldsymbol\Phi}^{'(t)}}_{{\boldsymbol\Phi}^{(t)}}+{\delta}^{\bar{\boldsymbol\Theta}'}_{\bar{\boldsymbol\Theta}} {\delta}^{{\boldsymbol\Theta}'}_{\mathbf 0} \tilde{C}_{{\boldsymbol\Theta}}
, \label{eq.:def_kernel_combined}
\end{align}
\end{subequations}
where ${\boldsymbol\Theta}\subseteq {\boldsymbol \Phi}^{(t)}$ is an index subset and $\bar{\boldsymbol\Theta}\equiv{\boldsymbol \Phi}^{(t)}\setminus{\boldsymbol\Theta}$ is the complement. Note that the tensor $\tilde{K}^{{\boldsymbol\Phi}^{'(t)}}_{{\boldsymbol\Phi}^{(t)}}
\left({\boldsymbol\Theta}\right)$ simply passes the indices not involved in the cumulant $\tilde{C}_{\boldsymbol\Theta}$ to the next level. By the meaning of the picking tensor or direct calculation, one can show that the picking tensor is symmetric, i.e., $P_{\phi}^{\phi',\phi''}=P_{\phi}^{\phi'',\phi'}$ and the product of two picking tensors is exchangeable, i.e.,
$$
P_{\varphi}^{\phi',\phi'''} P_{\phi}^{\varphi,\phi''}=P_{\phi}^{\phi''',\varphi}P_{\varphi}^{\phi'',\phi'},
$$
which is illustrated in Fig.~\ref{fig-appendix:product-picking}a.
Therefore, the tensor product of two kernel tensors is exchangeable, i.e.,
$\left[\tilde{K}\left({\boldsymbol\Theta}_1\right)\tilde{K}\left({\boldsymbol\Theta}_2\right)
\right]^{{\boldsymbol\Phi}^{'(t)}}_{{\boldsymbol\Phi}^{(t)}}=\left[\tilde{K}\left({\boldsymbol\Theta}_2\right)\tilde{K}\left({\boldsymbol\Theta}_1\right)\right]^{{\boldsymbol\Phi}^{'(t)}}_{{\boldsymbol\Phi}^{(t)}}$.
The combined kernel tensor that contains all kernel functions with the first index being $\phi_{t+1}$ is defined in the tensor product form as
\begin{equation}\label{eq:CombinedKT-app}
{\mathbb K}^{\phi'_{t+1};{\boldsymbol\Phi}^{'(t)}}_{\phi''_{t+1};{\boldsymbol\Phi}^{''(t)}} \equiv \left[\prod_{{\boldsymbol\Theta} \subseteq{\boldsymbol\Phi}^{(t)}} \tilde{K}
\left(\phi_{t+1}; {\boldsymbol\Theta}\right)\right]^{{\boldsymbol\Phi}^{'(t+1)}}_{{\boldsymbol\Phi}^{''(t+1)}},
\end{equation}
which is illustrated as shadowed parts in Fig.~\ref{fig:TensorRep}d and e for $t=2$.
To replace the summation form in Eq.~\eqref{eq:cumulantEx} with the tensor product form, suffices it to demonstrate
\begin{equation} \label{eq:Corr-Kernal-tensor}
  {\mathbb C}^{{\boldsymbol\Phi}^{'(t+1)}}_{{\boldsymbol\Phi}^{(t+1)}} =  {\mathbb K}^{\phi'_{t+1};{\boldsymbol\Phi}^{'(t)}}_{\phi_{t+1};{\boldsymbol\Phi}^{''(t)}} {\mathbb C}^{{\boldsymbol\Phi}^{''(t)}}_{{\boldsymbol\Phi}^{(t)}},
\end{equation}
or equivalently [with Eq.~\eqref{eq:def_C_updwn} and Eq.~\eqref{eq:cumulantEx} applied to the l.h.s. and Eq.~\eqref{eq:Corr-Kernal-tensor} and Eq.~\eqref{eq:CombinedKT-app} recursively applied to the r.h.s.],
\begin{equation} \label{eq:sum_C-prod_K}
{\mathbb P}^{{\boldsymbol\Phi}^{'(t)},{\boldsymbol\Phi}^{''(t)}}_{{\boldsymbol\Phi}^{(t)}} \sum_{{\boldsymbol\Theta}''\subseteq{\boldsymbol\Phi}^{''(t)}}\delta^{\mathbf 0}_{\bar{\boldsymbol\Theta}''}\sum_{\{{\boldsymbol\Theta}''_i\}}
\tilde{C}_{{\boldsymbol\Theta}''_1}\tilde{C}_{{\boldsymbol\Theta}''_2}\cdots\tilde{C}_{{\boldsymbol\Theta}''_n} = \left[\prod_{{\boldsymbol\Theta} \subseteq{\boldsymbol\Phi}^{(t)}} \tilde{K}
\left({\boldsymbol\Theta}\right)\right]^{{\boldsymbol\Phi}^{'(t)}}_{{\boldsymbol\Phi}^{(t)}},
\end{equation}
where  $\{{\boldsymbol\Theta}''_1,{\boldsymbol\Theta}''_2, \ldots, {\boldsymbol\Theta}''_n\}$ are all possible partitions of the subset of indices ${\boldsymbol\Theta}''$ and $\bar{\boldsymbol\Theta}''\equiv {\boldsymbol\Phi}^{''(t)}\setminus{\boldsymbol\Theta}''$ is the complement. 

\begin{figure}
    \centering
    \includegraphics[width=0.9\columnwidth]{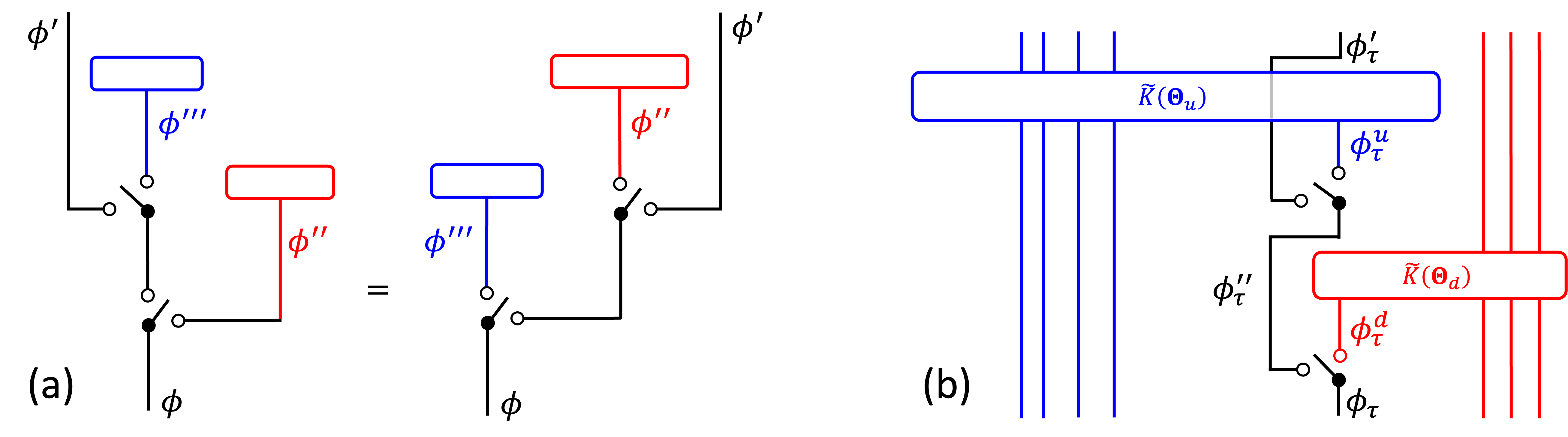}
    \caption{(a) Exchangeable product of picking tensors. (b) Tensor product of two kernel functions that have one common index $\phi_{\tau}$. Since each picking tensor sets one upward index to be zero and the other one to the same value of the downward index, $\phi^u_{\tau}$ and $\phi^d_{\tau}$ can not be simultaneously non-zero.}
    \label{fig-appendix:product-picking}
\end{figure}

Let us consider the tensor product of two terms in r.h.s. of Eq.~\eqref{eq:sum_C-prod_K}. Using Eq.~\eqref{eq.:def_kernel_combined}, we get 
 \begin{align} \label{eq:prod_KK_app}
   \left[ \tilde{K}\left({\boldsymbol\Theta}_u\right) \tilde{K}
\left({\boldsymbol\Theta}_d\right)\right]^{{\boldsymbol\Phi}^{'(t)}}_{{\boldsymbol\Phi}^{(t)}}
= & 
\delta^{{\boldsymbol\Phi}^{'(t)}}_{{\boldsymbol\Phi}^{(t)}}+{\delta}^{\bar{\boldsymbol\Theta}'_u}_{\bar{\boldsymbol\Theta}_u} {\delta}^{{\boldsymbol\Theta}'_u}_{\mathbf 0} \tilde{C}_{{\boldsymbol\Theta}_u}+{\delta}^{\bar{\boldsymbol\Theta}'_d}_{\bar{\boldsymbol\Theta}_d} {\delta}^{{\boldsymbol\Theta}'_d}_{\mathbf 0} \tilde{C}_{{\boldsymbol\Theta}_d}
+{\delta}^{\bar{\boldsymbol\Theta}'_u}_{\bar{\boldsymbol\Theta}''_u} {\delta}^{{\boldsymbol\Theta}'_u}_{\mathbf 0} \tilde{C}_{{\boldsymbol\Theta}''_u}{\delta}^{\bar{\boldsymbol\Theta}''_d}_{\bar{\boldsymbol\Theta}_d} {\delta}^{{\boldsymbol\Theta}''_d}_{\mathbf 0} 
 \tilde{C}_{{\boldsymbol\Theta}_d} .
 \end{align}
When the two cumulants $\tilde{C}_{{\boldsymbol\Theta}_u}$ and $\tilde{C}_{{\boldsymbol\Theta}_d}$ have at least one overlapping index, e.g., $\phi_{\tau}$ (see Fig.~\ref{fig-appendix:product-picking}b), the last term in Eq.~\eqref{eq:prod_KK_app} must vanish since $\delta^{\phi''_{\tau}}_{0}\tilde{C}_{{\boldsymbol\Theta}''_u}=0$. The physical meaning of this result is that noise at any time can affect the OQS only once through one cumulant (see Fig.~\ref{fig:TensorRep}c). 
When the two cumulants have no overlapping indices, i.e., ${\boldsymbol \Theta}_u\cap{\boldsymbol\Theta}_d=\varnothing$, the tensor product of the two kernel functions can be written as
\begin{align} \label{eq:prod_KK_app-II}
   \left[ \tilde{K}\left({\boldsymbol\Theta}_u\right) \tilde{K}
\left({\boldsymbol\Theta}_d\right)\right]^{{\boldsymbol\Phi}^{'(t)}}_{{\boldsymbol\Phi}^{(t)}}
= & 
\delta^{{\boldsymbol\Phi}^{'(t)}}_{{\boldsymbol\Phi}^{(t)}}+{\delta}^{\bar{\boldsymbol\Theta}'_u}_{\bar{\boldsymbol\Theta}_u} {\delta}^{{\boldsymbol\Theta}'_u}_{\mathbf 0} \tilde{C}_{{\boldsymbol\Theta}_u} +{\delta}^{\bar{\boldsymbol\Theta}'_d}_{\bar{\boldsymbol\Theta}_d} {\delta}^{{\boldsymbol\Theta}'_d}_{\mathbf 0} \tilde{C}_{{\boldsymbol\Theta}_d}
+{\delta}^{\bar{\boldsymbol\Theta}'_{u+d}}_{\bar{\boldsymbol\Theta}_{u+d}} {\delta}^{{\boldsymbol\Theta}'_{u+d}}_{\mathbf 0} \tilde{C}_{{\boldsymbol\Theta}_u}
 \tilde{C}_{{\boldsymbol\Theta}_d},
 \end{align}
 where ${\boldsymbol\Theta}_{u+d}\equiv {\boldsymbol\Theta}_{u} \bigcup {\boldsymbol\Theta}_{d}$ and $\bar{\boldsymbol\Theta}_{u+d}\equiv {\boldsymbol\Phi}^{(t)}\setminus{\boldsymbol\Theta}_{u+d}$. The tensor product with more kernel functions can be carried out similarly, and in the expansion, all products of cumulants with overlapping indices must vanish. Therefore, the r.h.s. of Eq.~\eqref{eq:sum_C-prod_K} can be written as
\begin{equation} \label{eq:rhs-prod-k}
\sum_{{\boldsymbol \Theta}\subseteq {\boldsymbol\Phi}^{(t)}}
{\delta}^{\bar{\boldsymbol\Theta}'}_{\bar{\boldsymbol\Theta}}\sum_{\{{\boldsymbol\Theta}_i\} } {\delta}^{{\boldsymbol\Theta}'}_{\mathbf 0} \tilde{C}_{{\boldsymbol\Theta}_1}\tilde{C}_{{\boldsymbol\Theta}_2}\cdots
 \tilde{C}_{{\boldsymbol\Theta}_k},
 \end{equation}
 where $\left\{{\boldsymbol\Theta}_1,{\boldsymbol\Theta}_2,\ldots,{\boldsymbol\Theta}_k\right\}$ is a non-overlapping partition of the indices ${\boldsymbol\Theta}$.
  The cumulant $\tilde{C}_{{\boldsymbol\Theta}_i}=0$ if any index $\phi\in {\boldsymbol \Theta}_i$ takes zero value, so
${\delta}^{{\boldsymbol\Theta}'_i}_{\mathbf 0} \tilde{C}_{{\boldsymbol\Theta}_i}= {\mathbb P}^{{\boldsymbol\Theta}'_i,{\boldsymbol\Theta}''_i}_{{\boldsymbol\Theta}_i}\tilde{C}_{{\boldsymbol\Theta}''_i}.$
Inserting this formula and ${\delta}^{\bar{\boldsymbol\Theta}'}_{\bar{\boldsymbol\Theta}}={\mathbb P}^{\bar{\boldsymbol\Theta}',\bar{\boldsymbol\Theta}''}_{\bar{\boldsymbol\Theta}}\delta^{\mathbf 0}_{\bar{\boldsymbol\Theta}''}$ into expression~\eqref{eq:rhs-prod-k}, we get the l.h.s. of Eq.~\eqref{eq:sum_C-prod_K}. Thus, the correlation tensor is transformed into a tensor product of kernel functions as in the r.h.s. of Eq.~\eqref{eq:sum_C-prod_K}.

\section{MPO representation of kernel tensors}\label{Append:MPO}
Here we derive the MPO form of combined kernel tensors $\mathbb{K}^{\phi'_{t+1};{\boldsymbol\Phi}^{'(t)}}_{\phi_{t+1};{\boldsymbol\Phi}^{(t)}}$. Each kernel tensor can be written as a matrix product state (MPS) through a series of QR decomposition~\cite{Schollwock2011}
\begin{align}
\tilde{K}_{\phi_{t+1};{\boldsymbol\Theta}} & =\sum_{\left\{{a}_i\right\}} \left[Q_{\phi_{t+1}}\right]_{{a}_m} \left[Q_{\phi_{\tau_m}}\right]_{{a}_{m},{a}_{m-1}} \cdots  \left[Q_{\phi_{\tau_2}}\right]_{{a}_{2}, {a}_{1}} \left[R_{\phi_{\tau_1}}\right]_{{a}_1}
\nonumber \\ &
\equiv Q_{\phi_{t+1}} Q_{\phi_{\tau_m}} \cdots  Q_{\phi_{\tau_2}}R_{\phi_{\tau_1}}
\nonumber \\
& 
\begin{tikzpicture}
    \node at (-1,-0) {$\equiv$};
    \draw[thick] (0,0) --(1.0,0);
    \draw[thick] (0,-0.6) node[below]{$\phi_{t+1}$}--(0,0) node[fill=white]{$Q$};
    \node at (0.9, -0.3) {${a}_{m}$};
    \draw[thick] (1.0,0)--(2.4,0);
    \draw[thick] (1.7,-0.6) node[below]{$\phi_{\tau_m}$} --(1.7,0) node[fill=white]{$Q$};
    \node at (2.5, -0.3) {${a}_{m-1}$};
    \node at (3.0,0) {$\Large{\ldots}$};
    \node at (3.7, -0.3) {${a}_{2}$};
    \draw[thick] (3.6,0)--(5.0,0);
    \draw[thick] (4.3,-0.6) node[below]{$\phi_{\tau_2}$} --(4.3,0) node[fill=white]{$Q$};
    \node at (5.3, -0.3) {${a}_{1}$};
    \draw[thick] (5.0,0)--(6.0,0);
    \draw[thick] (6.0, -0.6) node[below]{$\phi_{\tau_1}$} --(6.0,0) node[fill=white]{$R$};
    \node at (6.3, -0) {,};
\end{tikzpicture} 
\label{eq:QR-app}
\end{align}
where  ${\boldsymbol\Theta}\equiv\left\{\phi_{\tau_m}, \cdots, \phi_{\tau_1}\right\}$ and $\left\{{a}_i\right\}$ are left/right indices of the state tensors $Q$ and $R$.

We define the operator tensors $Q^{\phi'}_{\phi}\equiv P^{\phi',\phi''}_{\phi}Q_{\phi''}$, $R^{\phi'}_{\phi}\equiv P^{\phi',\phi''}_{\phi}R_{\phi''}$, and $\left[I^{\phi'}_{\phi}\right]_{{a},{b}}\equiv \delta^{\phi'}_{\phi}\delta_{{a},{b}}$, and obtain the MPO representation of  kernel tensors as
\begin{align}
\left[\tilde{K}\left(\phi_{t+1};{\boldsymbol\Theta}\right)\right]^{\phi'_{t+1};{\boldsymbol \Phi}^{'(t)}}_{\phi_{t+1}; {\boldsymbol \Phi}^{(t)}}
& = Q^{\phi'_{t+1}}_{\phi_{t+1}} I^{\phi'_{t}}_{\phi_{t}} \cdots Q^{\phi'_{\tau_m}}_{\phi_{\tau_m}} I^{\phi'_{\tau_m-1}}_{\phi_{\tau_m-1}}  \cdots  R^{\phi'_{\tau_1}}_{\phi_{\tau_1}} \delta^{\phi'_{\tau_1-1}}_{\phi_{\tau_1-1}}\cdots \delta^{\phi'_{1}}_{\phi_{1}}
\nonumber \\
& \begin{tikzpicture}
    \node at (-1,0) {$\equiv$};
    \draw[thick] (0,0) --(0.7,0);
   \draw[thick] (0,0.6) node[above]{$\phi'_{t+1}$} --(0,0.0) node[fill=white]{$Q$} -- (0, -0.6) node[fill=white, below]{$\phi_{t+1}$};
    \draw[thick] (0.7,0)--(2.0,0);
    \draw[thick] (1.4,0.6) node[above]{$\phi'_{t}$} --(1.4, 0) node[fill=white]{$I$} -- (1.4, -0.6) node [below] {$\phi_{t}$};
    \node at (2.4,0) {$\Large{\ldots}$};
    \draw[thick] (2.8,0)--(4.1,0);
    \draw[thick] (3.4,0.6) node[above]{$\phi'_{\tau_{m}}$} --(3.4, 0) node[fill=white]{$Q$} -- (3.4, -0.6) node [below] {$\phi_{\tau_{m}}$};
    \draw[thick] (4.1,0)--(5.4,0);
    \draw[thick] (4.8,0.6) node[above]{$\phi'_{\tau_{m}-1}$} --(4.8, 0) node[fill=white]{$I$} -- (4.8, -0.6) node [below] {$\phi_{\tau_{m}-1}$};
    \node at (5.8,0) {$\Large{\ldots}$};
    \draw[thick] (6.2,0)--(6.8,0);
    \draw[thick] (6.8,0.6) node[above]{$\phi'_{\tau_{1}}$} --(6.8, 0) node[fill=white]{$R$} -- (6.8, -0.6) node [below] {$\phi_{\tau_{1}}$};
     \draw[thick] (7.8,0.6) node[above]{$\phi'_{\tau_{1}-1}$} --(7.8, 0) -- (7.8, -0.6) node [below] {$\phi_{\tau_{1}-1}$};
     \node at (8.2,0) {$\Large{\ldots}$};
     \draw[thick] (8.8,0.6) node[above]{$\phi'_{1}$} --(8.8, 0) -- (8.8, -0.6) node [below] {$\phi_{1}$};
     \node at (9.1, 0) {,};
\end{tikzpicture}
\label{eq:kernelTesors-MPO-app}
\end{align}
where  the contraction of lateral indices $\{{a}_i\}$ between neighboring MPOs is implied.
The combined kernel tensor $\mathbb{K}^{\phi'_{t+1}; {\boldsymbol \Phi}^{'(t)}}_{\phi_{t+1};{\boldsymbol \Phi}^{(t)}}$ is then obtained by the contraction of all kernel tensors $\tilde{K}\left(\phi_{t+1};{\boldsymbol\Theta}\right)$ using Eq.~\eqref{eq:CombinedKT-app}. 
The MPO representation is preserved during the contraction of different kernel tensors through the following techniques
\begin{equation}
    \begin{tikzpicture}
    \draw[thick] (0,0.5)  --(0.0,0.1);
    \draw[thick] (0,-0.1) --(0.0,-0.5) ;
    \node at (0.5, 0.0) {$=$};
    \draw[thick] (1.0,0.4) --(1.0,-0.4);
    \node at(1.2,0) {,};
    \node[fill=white] (I) at (2.3, 0.) {\small $I$};
    \node[fill=white] (R) at (3.2, 0.) {\small $R$};
    \draw[thick,blue] (I)  --+(0,0.6);
    \draw[thick,blue] (I)  --+(0,-0.6);
    \draw[thick,blue] (I)  --+(0.5,0);
    \draw[thick] (I)  -- (R);
    \draw[thick, black!20!green] (R) -- +(0.0, 0.6);
    \draw[thick, black!20!green] (R) -- +(0.0, -0.6);
    \node at (3.7, 0.0) {$=$};
    \node [fill=white] (U1) at (4.3,0.0) {\small $U$};
    \node [fill=white] (L1) at (5.15, 0) {\small $\Lambda$};
    \node[fill=white] (V1) at (6.1,0.0) {\small $V^{\dagger}$};
    \draw[thick,blue] (U1) -- +(0.0,0.6);
    \draw[thick,blue] (U1) -- +(0.0,-0.6);
    \draw (U1)--(L1)--(V1);
    \draw[thick, black!20!green] (V1) -- +(0.0,0.6);
    \draw[thick, black!20!green] (V1) -- +(0.0,-0.6);
    \node at(6.6,0) {,};
    \node[fill=white](QI) at (8.3,0)  {\small $Q/I$};
    \node[fill=white](U) at (9.3, 0)  {\small $U$}; 
    \node[fill=white](L) at (10.1, 0) {\small $\Lambda$}; 
    \draw[thick,blue] (QI) -- +(0, 0.6);
    \draw[thick,blue] (QI) -- +(0, -0.6);
    \draw[thick,blue] (QI) -- +(-0.7, 0);
    \draw[thick,black!20!green] (U) -- +(0, 0.6);
    \draw[thick,black!20!green] (U) -- +(0, -0.6);
    \draw[thick] (QI) -- (U)--(L);
    \draw[thick,black!20!green] (L)-- +(0.5,0);
    \node at (10.9, 0.0) {$=$};
    \node [fill=white] (U2) at  (11.8,0.0) {\small $U$};
    \node [fill=white] (L2) at (12.7,0.0)  {\small $\Lambda$};
    \node [fill=white] (V2) at (13.7,0.0)  {\small $V^{\dagger}$};
    \draw [thick,blue] (U2) -- +(-0.6,0);
    \draw [thick,blue] (U2) -- +(0,0.6);
    \draw [thick,blue] (U2) -- +(0,-0.6);
    \draw[thick] (U2)--(L2)--(V2);
    \draw [thick,black!20!green] (V2) -- +(0.6,0);
    \draw [thick,black!20!green] (V2) -- +(0,0.6);
    \draw [thick,black!20!green] (V2) -- +(0,-0.6);
    \node at(14.4,0) {,};
    \end{tikzpicture}
\end{equation}
\begin{equation}
     \begin{tikzpicture}
    \node[fill=white] (QI11) at (0.2, 0.5)  {\small $Q/I$};
    \node[fill=white] (U11) at (1.2, 0.5) {\small $U$};
    \node[fill=white] (L11) at (2., 0.5) {\small $\Lambda$};
    \node[fill=white] (QI12) at (0.2, -0.5)  {\small $Q/I$};
    \node[fill=white] (R1) at (1.2, -0.5)  {\small $R$};
    \draw[thick](QI11)--(U11)--(L11);
    \draw[thick](QI11)--(QI12)--(R1)--(U11);
    \draw[thick, blue] (QI11) --+(0.0,0.6);
    \draw[thick, blue] (QI11) --+(-0.7,0.0);
    \draw[thick, red] (QI12) --+(0.0,-0.6);
    \draw[thick, red] (QI12) --+(-0.7,0.0);
    \draw[thick, black!20!green] (U11) --+(0.0,0.6);
    \draw[thick,  black!20!green] (L11) --+(0.6,0.0);
    \draw[thick, black!20!green] (R1) --+(0.0,-0.6);
    \node at (2.9, 0.0) {$=$};
    \node[fill=white](U12) at (3.8, 0.0) { \small $U$};
    \node[fill=white](L12) at (4.6, 0.0) { \small $\Lambda$};
    \node[fill=white](V12) at (5.5, 0.0) { \small $V^{\dagger}$};
    \draw[thick](U12)--(L12)--(V12);
    \draw [thick,blue] (U12)++(-0.2,0.05) -- +(-0.3,0);
    \draw [thick,red] (U12)++(-0.2,-0.05) -- +(-0.3,0);
    \draw [thick,blue] (U12) -- +(0,0.6);
    \draw [thick,red] (U12) -- +(0,-0.6);
    \draw [thick,black!20!green] (V12) -- +(0.6,0);
    \draw [thick,black!20!green] (V12) -- +(0,0.6);
    \draw [thick,black!20!green] (V12) -- +(0,-0.6);
    \node at (6.3, 0.0) {,};
    \node[fill=white](QI21) at (7.8, 0.5)  {\small $Q/I$};
    \node[fill=white](QI22) at (7.8, -0.5) {\small $Q/I$};
    \node[fill=white](U21)  at (8.8, 0.0)  {\small $U$};
    \node[fill=white] (L21) at (9.6, 0.0) {\small $\Lambda$};
    \draw[thick](QI21)--(U21)--(QI22)--(QI21);
    \draw[thick](U21)--(L21);
    \draw[thick, blue] (QI21) --+(0.0,0.6);
    \draw[thick, blue] (QI21) --+(-0.8,0.0);
    \draw[thick, red] (QI22) --+(0.0,-0.6);
    \draw[thick, red] (QI22) --+(-0.8,0.0);
    \draw[thick, black!20!green] (U21) --+(0.0,0.6);
    \draw[thick, black!20!green] (U21) --+(0.0,-0.6);
    \draw[thick,  black!20!green] (L21) --+(0.5,0.0);
    \node at (10.4, 0.0) {$=$};
    \node[fill=white](U22) at (11.3, 0.0) { \small $U$};
    \node[fill=white](L22) at (12.1, 0.0) { \small $\Lambda$};
    \node[fill=white](V22) at (13.1, 0.0) { \small $V^{\dagger}$};
    \draw[thick](U22)--(L22)--(V22);
    \draw [thick,blue] (U22)++(-0.2,0.05) -- +(-0.3,0);
    \draw [thick,red] (U22)++(-0.2,-0.05) -- +(-0.3,0);
    \draw [thick,blue] (U22) -- +(0,0.6);
    \draw [thick,red] (U22) -- +(0,-0.6);
    \draw [thick,black!20!green] (V22) -- +(0.6,0);
    \draw [thick,black!20!green] (V22) -- +(0,0.6);
    \draw [thick,black!20!green] (V22) -- +(0,-0.6);
    \node at (13.8, 0.0) {,};
    \end{tikzpicture}
\end{equation}
where the compact singular value decomposition $K_{{b}',{b}}=\sum_{{a=1}}^DU_{{b}',{a}} \lambda_{a} V^{\dagger}_{{a}, {b}}\equiv U_{{b}'}\Lambda \ V^{\dagger}_{{b}}$ is employed, the open arms in colors are indices kept intact (abbreviated as ${b}$ and ${b}'$) in the decomposition, $U$ and $V$ are semi-unitary, i.e., $\left(U^{\dag}U\right)_{{a}',{a}}= \delta_{{a}',{a}}$ and $\left(V^{\dag}V\right)_{{a},{a}'}= \delta_{{a},{a}'}$, and  $\lambda_{1}\ge\lambda_2\cdots \ge \lambda_{D}> 0$ are all of the non-zero singular values with the bond dimension $D\le\min\left(||{b}'||,||{b}||\right)$ being the rank of the tensor $K$. In the calculation, the singular values that are smaller than a given truncation precision $\xi$ are neglected to reduce the bond dimension (bond dimension $D$ set by {$\lambda_D/\lambda_1\ge \xi$}). The contraction of these kernel tensors is executed from  left to right, after which the result is restored to the QR representation as shown in Eq.~\eqref{eq:kernelTesors-MPO-app} by QR decomposition of $\left[ \tilde{K}(\bm{\Theta}_{u})\tilde{K}(\bm{\Theta}_{d})\right]^{\bm{0}}_{\bm{\Phi}^{''(t)}}$ and $\left[ \tilde{K}(\bm{\Theta}_{u})\tilde{K}(\bm{\Theta}_{d})\right]^{\bm{\Phi}^{'(t)}}_{\bm{\Phi}^{(t)}}\equiv {\mathbb P}^{{\boldsymbol\Phi}^{'(t)},{\boldsymbol\Phi}^{''(t)}}_{{\boldsymbol\Phi}^{(t)}} \left[ \tilde{K}(\bm{\Theta}_{u})\tilde{K}(\bm{\Theta}_{d})\right]^{\bm{0}}_{\bm{\Phi}^{''(t)}}$. Repeating the above contraction processes, we finally get
\begin{equation}
    \begin{tikzpicture}
        \node at (-2,0) {${\mathbb K}^{\phi'_{t+1};{\boldsymbol\Phi}^{'(t)}}_{\phi''_{t+1};{\boldsymbol\Phi}^{''(t)}} =$};
        \draw[thick] (0.0,0.7) node[above]{\small $\phi'_{t+1}$} -- (0.0,-0.7) node[below]{\small $\phi_{t+1}$};
        \draw[thick] (2.0,0.7) node[above]{\small $\phi'_{t}$} -- (2.0,-0.7) node[below]{\small $\phi_{t}$};
        \draw[thick] (5.0,0.7) node[above]{\small $\phi'_{1}$} -- (5.0,-0.7) node[below]{\small $\phi_{1}$};
        \draw[thick] (0.0, 0.0) node[fill=white]{\small $Q$} -- (2,0.0) node[fill=white]{\small $Q$}--(3.5,0.0) node[fill=white]{\small $\ldots$};
         \draw[thick] (4.0, 0.0)-- (5.0, 0.0) node[fill=white]{\small $R$};
         \node at (5.4,0.0){.};
    \end{tikzpicture}
\end{equation}

Note that in the QR decomposition in Eq.~\eqref{eq:QR-app} and hence in the MPO representation of kernel tensors in Eq.~\eqref{eq:kernelTesors-MPO-app}, the bond dimension of ${a}_{\tau_i}$ increases exponentially with $\min(i, t-i)$, making the QR decomposition of high-order tensors infeasible. Using Eq.~\eqref{eq:CombinedKT-app}, we can consider the contributions from different kernel tensors one by one and carry out the QR decomposition from right to left (the direction of time increasing) to construct the MPS representation of the EDMs. As guaranteed by the linearly increasing bond dimension theorem (Theorem~\ref{theorem:MPS}), the bond dimension of ${a}_i$ in such MPS construction of EDMs will increase at most linearly with $i$. But the number of cumulants and hence that of the kernel tensors could still increase exponentially with the evolution time. However, for many types of quantum baths, only a limited number of low-order kernel tensors are involved. For example, Gaussian baths have only first- and second-order cumulants,  $\tilde{C}_{\phi_{t+1}}$ and $\tilde{C}_{\phi_{t+1}, \phi_\tau}$. Moreover, very often quantum baths have finite memory times, which upper bounds the number of cumulants needed to construct the tensor networks of EDMs.

\section{Progressive construction of independent EDM vectors}
\label{Appendix:EDMvectors}

{Formally, one can construct the independent EDM vectors step by step with time evolution. 
First, for $t=1$, we choose $\left[r\right]^{{a}}_{\phi_1}$ such that
$${\mathbf R}^{{a}}=\left[r\right]^{{a}}_{\phi_1}{\boldsymbol\rho}^{\phi_1},$$
with ${a}=0,1,\ldots,N_{t=1}-1$.
For $t=2$, we set $\left[r\right]^{{a}}_{\phi_2,\phi_1}=\delta^0_{\phi_2}\left[r\right]^{{a}}_{\phi_1}$ for ${a}<N_{1}$ such that
$${\mathbf R}^{{a}<N_1}=\left[r\right]^{{a}}_{0,\phi_1}{\boldsymbol\rho}^{0,\phi_1}=\left[r\right]^{{a}}_{\phi_1}{\boldsymbol\rho}^{0,\phi_1}$$
and construct $N_2-N_1$ new vectors 
$${\mathbf R}^{N_1\le {a} < N_2}=\left[r\right]^{{a}}_{\phi_2\ne0,\phi_1}{\boldsymbol\rho}^{\phi_2\ne0,\phi_1}.$$
Continuing this process, for the EDMs at time $t+1$ we can set
$\left[r\right]^{{a}<N_t}_{\phi_{t+1},{\boldsymbol\Phi}^{(t)}}=\delta^0_{\phi_{t+1}}\left[r\right]^{{a}}_{{\boldsymbol \Phi}^{(t)}}$ such that
$${\mathbf R}^{{a}<N_t}=\left[r\right]^{{a}}_{0,{\boldsymbol \Phi}^{(t)}}{\boldsymbol \rho}^{0,{{\boldsymbol\Phi}^{(t)}}}=\left[r\right]^{{a}}_{{\boldsymbol \Phi}^{(t)}}{\boldsymbol \rho}^{0,{{\boldsymbol\Phi}^{(t)}}},$$
and choose $N_{t+1}-N_t$ new independent vectors 
$${\mathbf R}^{N_t\le {a}<N_{t+1}}=\left[r\right]^{{a}}_{\phi_{t+1}\ne 0,{\boldsymbol \Phi}^{(t)}}{\boldsymbol \rho}^{\phi_{t+1}\ne 0,{\boldsymbol\Phi}^{(t)}}.$$
Thus we can formally construct the independent vectors step by step of the time evolution through the procedure of choosing coefficients
$$\left[r\right]^{{a}<N_1}_{\phi_1} \rightarrow \left[r\right]^{{a}<N_1}_{0,\phi_1},\left[r\right]^{N_1\le{a}<N_2}_{\phi_2\ne 0,\phi_1}
\rightarrow \left[r\right]^{{a}<N_1}_{0,0,\phi_1},\left[r\right]^{N_1\le{a}<N_2}_{0,\phi_2\ne 0,\phi_1},
\left[r\right]^{N_2\le{a}<N_3}_{\phi_3\ne 0,\phi_2,\phi_1}\rightarrow \cdots \rightarrow 
\left[r\right]^{{a}<N_T}_{{\boldsymbol\Phi}^{(T)}}.$$
Without loss of generality, we choose
$${\mathbf R}^0={\boldsymbol \rho}^{{\boldsymbol\Phi}^{(t)}={\mathbf 0}}.$$
The EDMs can be obtained from the independent vectors through
$$\rho^{{\boldsymbol\Phi}^{(t)}}=\left[\bar{r}\right]^{{\boldsymbol\Phi}^{(t)}}_{{a}} R^{{a}}_t,$$
for any $t\le T$, where the inverse coefficient matrices satisfy
$$\left[r\right]_{{\boldsymbol\Phi}^{(t)}}^{{a}}
\left[\bar{r}\right]^{{\boldsymbol\Phi}^{(t)}}_{{b}}=\delta^{{a}}_{{b}}.$$

\section{Bond dimension for baths with finite memory times}\label{sec:FBDT-FMT-app}

Here we prove Corollary~\ref{Cor_MPS_memory}. The memory time $T_m$ of a bath is defined by the condition that  the cumulants $\tilde{C}_{\phi_{t_l},\cdots,\phi_{t_1}}$ of the bath vanishes if $t_l-t_1>T_m$. From  Eq.~\eqref{eq:CombinedKT-app}, one can get that the combined kernel tensors satisfy 
\begin{equation}
\mathbb{K}^{{\phi_{t+1};{\boldsymbol \Phi}^{(t)}}}_{\phi'_{t+1};{\boldsymbol \Phi}'^{(t)}}=\mathbb{K}^{{\phi_{t+1}; {\boldsymbol \Phi}^{(t:t-T_m+1)}}}_{\phi'_{t+1}; {\boldsymbol \Phi}'^{(t:t-T_m+1)}}\delta^{{\boldsymbol \Phi}^{(t-T_{m})}}_{{\boldsymbol \Phi}'^{(t-T_{m})}},
\end{equation}
for $t>T_m$, where ${\boldsymbol \Phi}^{(t:t-T_m+1)}\equiv \left\{\phi_t, \phi_{t-1},\ldots,\phi_{t-T_m+1}\right\}$. 
Therefore, the set of EMDs $\left\{ \rho^{{\boldsymbol \Phi}^{(t:t-T_m+1)},{\mathbf 0}}\equiv \rho^{{\boldsymbol \Phi}^{(t)}} \delta^{\mathbf{0}}_{{\boldsymbol \Phi}^{(t-T_m)}}\right\}$ are sufficient for solving $\rho(t')$ for $t'>t$. All the EDM vectors ${\boldsymbol \rho}^{{\boldsymbol \Phi}^{(t:t-T_m+1)},{\mathbf 0}}$
form a linear space with dimension $\le N_{T_m}$ and hence can be expanded by at most $N_{T_m}$ independent vectors ${\mathbf R}^{{a}}={\left[{r}\right]}_{{\boldsymbol \Phi}^{(t:t-T_m+1)}}^{{a}}{\boldsymbol \rho}^{{\boldsymbol \Phi}^{(t:t-T_m+1)},{\mathbf 0}}$ with ${a}=0, 1, \ldots, N_{T_m}-1$, that is,
\begin{equation}
    {\boldsymbol \rho}^{{\boldsymbol \Phi}^{(t:t-T_m+1)},{\mathbf 0}}={\left[\bar{r}\right]}^{{\boldsymbol \Phi}^{(t:t-T_m+1)}}_{{a}} {\mathbf R}^{{a}}.
\end{equation}
The procedure for obtaining the MPS representation of the EDM tensor is illustrated in Fig.~\ref{fig:finitMT-app} by taking $T_m = 2$ as an example. It follows the same SVD procedure as in Fig.~\ref{fig:Theorem_MPS}c, except that the isometries $\bar{r}^{\bm{\Phi}^{(t)}}_{a}$ and $ r^{a}_{\bm{\Phi}^{(t)}}$ stand for $\bar{r}^{\bm{\Phi}^{(t:t-T_m+1)}, \bm{0}}_{a}$ and $ r^{a}_{\bm{\Phi}^{(t:t-T_m+1)},\bm{0}}$, respectively, due to the finite memory time. 
The decomposition process shown in Fig.~\ref{fig:finitMT-app} is repeated until $t-T_m=0$, after which the same procedure as in Fig.~\ref{fig:Theorem_MPS}c is followed and the bond dimension is upper bounded by $N_t$. Consequently, the bond dimension between local tensors defined in Eq.~\eqref{eq:MPS_DM} and Eq.~\eqref{eq:MPS_EDM} is upper-bounded by $D_{\tau}\le \min(N_{\tau},N_{T_m})$.

\begin{figure}
    \centering
    \includegraphics[width=0.95\linewidth]{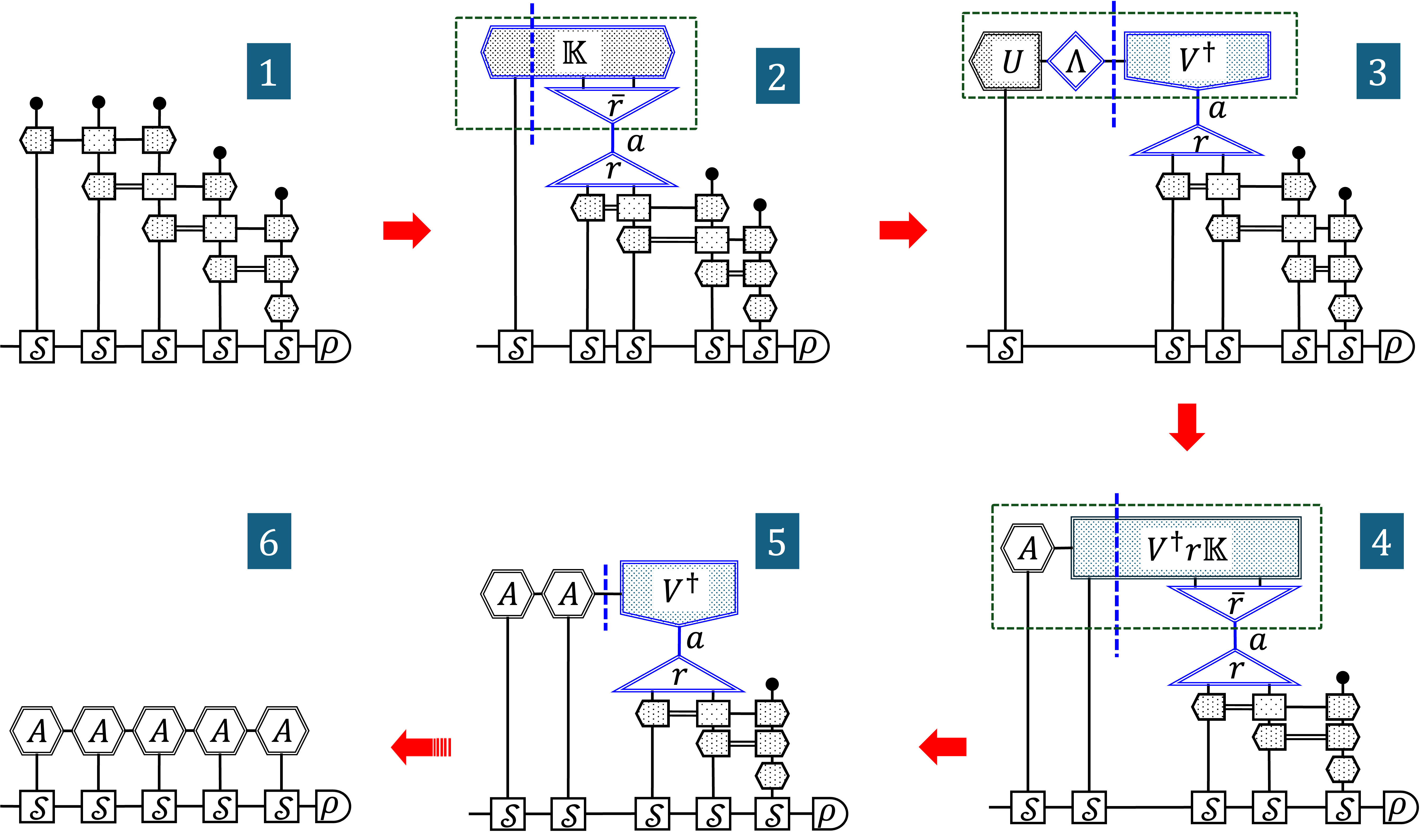}
    \caption{Tensor-network diagrams showing how the MPS representation of the density matrix is obtained for an OQS in a quantum bath with finite memory time ($T_m=2$ an an example). The bond dimension between local tensors is upper-bounded by the number of independent EDM vectors in the finite memory time (the range of index $a$ of the isometry in diagram 2).}
    \label{fig:finitMT-app}
\end{figure}

\section{Spin-boson model}\label{sec:SBM-app}
The bath is Gaussian. At zero temperature, the noise ${B}(t)=\sum_{k} g_k \left({a}^{\dagger}_ke^{i\omega_k t}+{\rm h.c.}\right)$ has only the second-order cumulants being non-zero, which are
\begin{equation}
	\tilde{C}_{\phi_{\tau_2}=2,\phi_{\tau_1}=2}=\text{Re}[f(\tau_2-\tau_1)], ~~
	\tilde{C}_{\phi_{\tau_2}=2,\phi_{\tau_1}=1}=2\text{Im}[f(\tau_2-\tau_1)],
\end{equation}
where $f(\tau)\equiv \int d\omega J(\omega) e^{-i \omega \tau}$, with $\mathcal{B}_{\phi_t=1}\equiv\mathcal{B}^{-}(t)$ and $\mathcal{B}_{\phi_t=2}\equiv\mathcal{B}^{+}(t)$.
The  kernel tensor has QR decomposition $\tilde{K}_{\phi_{t+1},\phi_{\tau+1}}=\left[Q_{\phi_{t+1}}\right]_{{a}} \left[R_{\phi_{\tau+1}}\right]_a$ with lateral bond dimension $D_a=2$ and the combined kernel tensor 
\begin{equation}
\mathbb{K}^{\phi'_{t+1};\boldsymbol{\Phi}^{'(t)}}_{\phi_{t+1}; \boldsymbol{\Phi}^{(t)}} =Q^{\phi'_{t+1}}_{\phi_{t+1}} Q^{\phi'_{t}}_{\phi_{t}} \cdots R^{\phi'_{1}}_{\phi_{1}}, 
\end{equation} 
where $Q^{\phi'}_{\phi} =P^{\phi',\phi''}_{\phi}Q_{\phi''}$ and $R^{\phi'}_{\phi}=P^{\phi',\phi''}_{\phi}R_{\phi''}$ with
\[Q_{\phi_{t+1}=0}=(1,0),~~ Q_{\phi_{t+1}=1}=(0,0), ~~Q_{\phi_{t+1}=2}=(0,1),\]
\[R_{\phi_{1}=0}=\left(\begin{array}{cc}
     1 &  \\
     0 & 
\end{array}\right),~~ R_{\phi_{1}=1}=\left(\begin{array}{cc}
     0  & \\
     2 \text{Im}[f(t)] &  
\end{array}\right), ~~ ~~ R_{\phi_{1}=2}=\left(\begin{array}{cc}
     0 &  \\
      \text{Re}[f(t)] & 
\end{array}\right),\]
and
\[Q_{\phi_{\tau+1}=0}=\left(\begin{array}{cc}
     1 & 0  \\
     0 & 1
\end{array}\right),~~ Q_{\phi_{\tau+1}=1}=\left(\begin{array}{cc}
     0  & 0\\
     2 \text{Im}[f(t-\tau)] & 0 
\end{array}\right), ~~ ~~ Q_{\phi_{\tau+1}=2}=\left(\begin{array}{cc}
     0 &  0 \\
      \text{Re}[f(t-\tau)] & 0
\end{array}\right),\]
for $0<\tau<t$.

\begin{figure}
    \centering
    \includegraphics[width=0.5\linewidth]{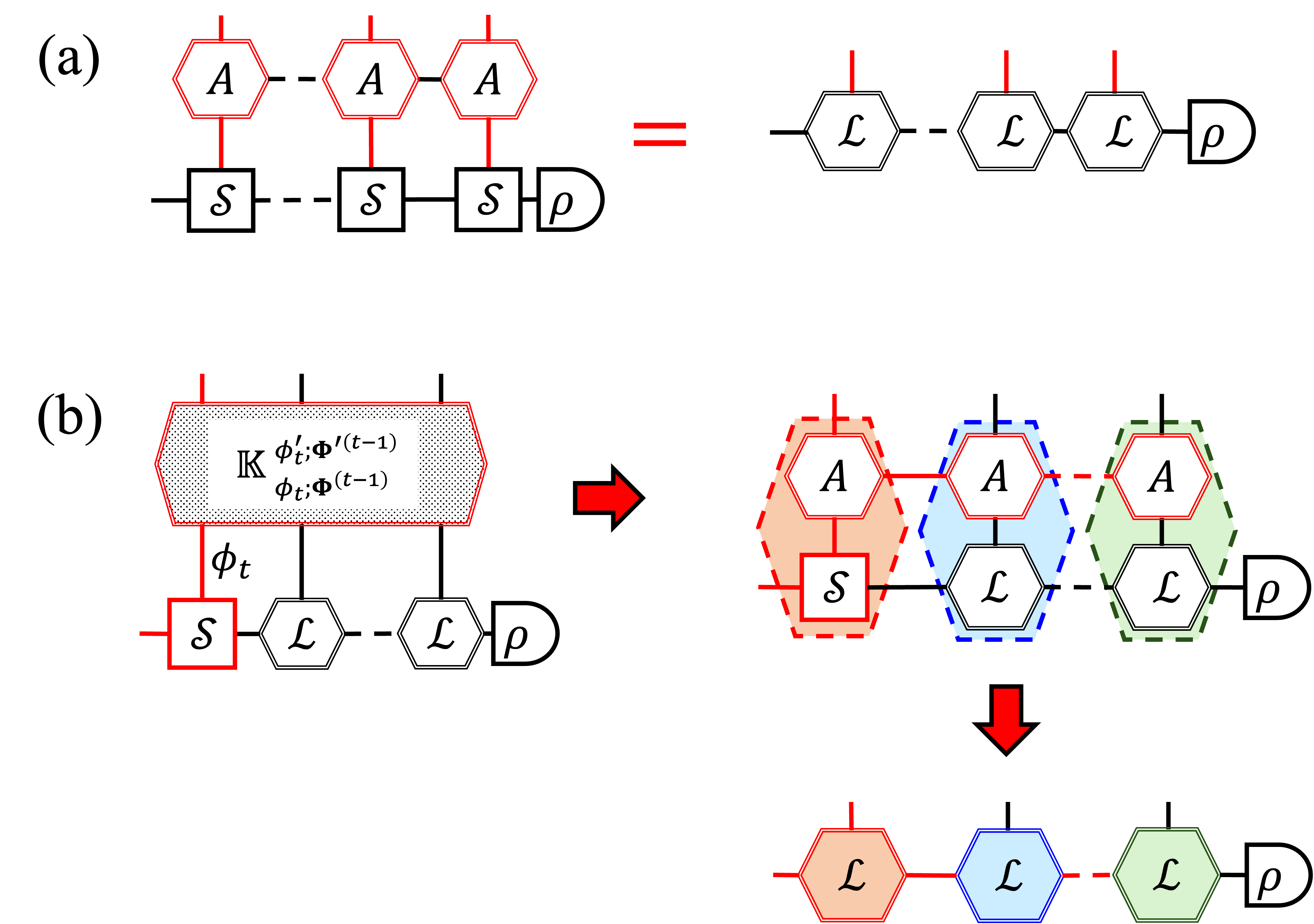}
    \caption{(a) Super-operator-valued MPS representation of an EDM. (b) Algorithm for recursive construction of the MPS representation of an EDM.}
    \label{fig:MPSAlgorithm-app}
\end{figure}

{
In programming, we introduce super-operator-valued local tensors ${\mathcal L}^{\phi_{\tau}}=A^{\phi_{\tau}}_{\phi'_{\tau}}{\mathcal S}^{\phi'_{\tau}}$ for MPS representation of the EDMs shown in Eq.~\eqref{eq:MPS_EDM}, which results in
\begin{equation}
    \rho^{{\bm \Phi}^{(t)}}=[\mathcal{L}^{\phi_{t}}]_{a_{t-1}}[\mathcal{L}^{\phi_{t-1}}]_{a_{t-1},a_{t-2}} \cdots [\mathcal{L}^{\phi_{2}}]_{a_{2},a_{1}}[\mathcal{L}^{\phi_{1}}]_{a_{1}} \rho(0).
\end{equation}
The diagram illustration is shown in Fig.~\ref{fig:MPSAlgorithm-app}a. }
The standard MPO-MPS algorithms are used to contract the combined kernel tensor and the EDM tensor to evaluate the EDMs~\cite{Schollwock2011}. {Such a process is shown in Fig.~\ref{fig:MPSAlgorithm-app}b.} With the contraction carried out from the left boundary to the right boundary, for example,  the following SVD technique is used
\begin{equation}
    \begin{tikzpicture}
         \node[fill=white] (Q)  at (2.7,1.0) {\small $Q$};
         \node[fill=white] (L1) at (0.7,0.5) {\small $\Lambda$};
         \node[fill=white] (V1) at (1.7,0.5)   {\small $V^{\dagger}$};
         \node[fill=white] (L2) at  (2.7,0.0) {\small $\mathcal{L}$};
         \draw[thick](L1)--(V1)--(L2)--(Q);
         \draw[thick](V1)--(Q);
         \draw[thick, black!20!green] (Q) --+(0.0,0.6);
         \draw[thick, black!20!green] (Q) --+(0.6,0.0);
         \draw[thick, blue] (V1) --+(0.0,0.9);
         \draw[thick, blue] (L1) --+(-0.6,0.0);
         \draw[thick, red] (L2) --+(0.6,0.0);
        \node at (4.0, 0.5) {$\approx$};
         \node[fill=white] (U)  at (5.2,0.5) {\small $U$};
         \node[fill=white] (L3) at (6.2,0.5)   {\small $\Lambda$};
         \node[fill=white] (V2) at  (7.2,0.5) {\small $V^{\dagger}$};
         \draw[thick] (U)--(L3)--(V2);
         \draw[thick, blue] (U) --+(0.0,0.6);
         \draw[thick, blue] (U) --+(-0.7,0.0);
         \draw[thick, black!20!green] (V2) --+(0.0,0.6);
         \draw[thick, black!20!green] (V2)++(0.25,0.05) --+(0.4,0.0);
         \draw[thick, red] (V2)++(0.25,-0.05) --+(0.4,0.0);
         \node at (8.0,0.5) {,};
    \end{tikzpicture}
\end{equation}
where in the approximation the singular values that are too small ($\lambda_a <\xi \lambda_{d^2+1}$) are dropped {with $\xi$ being the} truncation precision.

\begin{figure}
    \centering
    \includegraphics[width=0.9\columnwidth]{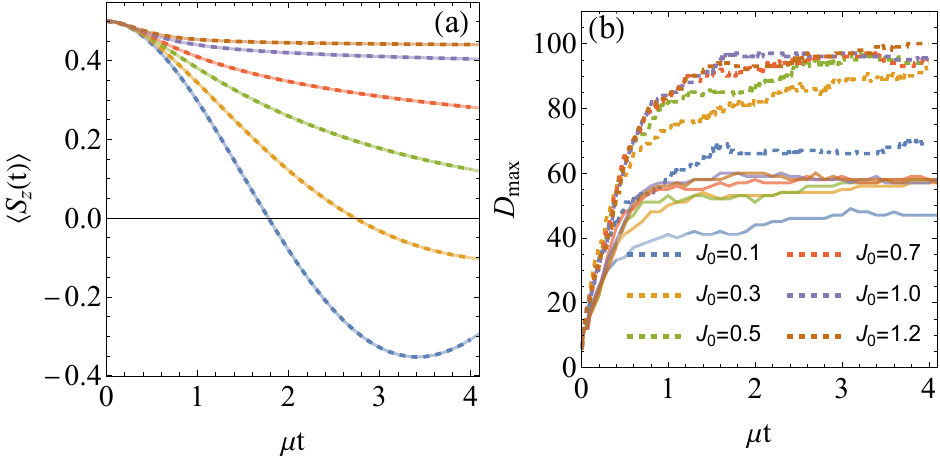}
    \caption{The evolution of the central spin polarization $\left\langle S_z(t)\right\rangle$ (a) and the maximal bond dimension (b) for truncation precisions $\xi=10^{-5}$ (thick lines) and $\xi=10^{-6}$ (dashed lines).  The parameters are the same as in Fig.~\ref{fig:SBM}.}
    \label{fig:truncation}
\end{figure}

In the numerical simulation, to reduce the error due to the discretization of time, we expand a small step of evolution to the second order of time step $\epsilon$,
\begin{align} \label{eq:high-order-app}
    e^{\epsilon \mathcal{H}^{-}(t)}=& \mathcal{I}+\epsilon \mathcal{H}^{-}(t)+\frac{\epsilon^2}{2}\left[\mathcal{H}^{-}(t)\right]^{2}+O\left(\epsilon^{3}\right) \nonumber \\
    =&\left[\mathcal{I}+\frac{1+i}{2} \epsilon \mathcal{H}^{-}(t)\right]\left[\mathcal{I}+\frac{1-i}{2} \epsilon \mathcal{H}^{-}(t)\right]  +O\left(\epsilon^{3}\right),
\end{align} 
which can be written as the tensor product form
\begin{equation}\label{eq:2ndorderexpasion}
    e^{\epsilon \mathcal{H}^{-}(t)}=\mathcal{B}_{\phi_t} \mathcal{B}_{\psi_t}  \mathcal{S}^{\phi_t}_2 \mathcal{S}^{\psi_t}_{1}  +O\left(\epsilon^3\right),
\end{equation}
where $\mathcal{S}_{1}^{\phi_t=0}\equiv \mathcal{S}_{2}^{\phi_t=0}\equiv \mathcal{I}_{\rm S}$, $\mathcal{S}^{\phi_t \neq 0}_{1}\equiv \frac{1-i}{2} \mathcal{S}^{\phi_t}$, and $\mathcal{S}^{\phi_t \neq 0}_{2}\equiv \frac{1+i}{2}\mathcal{S}^{\phi_t}$.
Using this formula, we generalize the tensor network algorithm developed in the main text to the second order of $\epsilon$, with each term  $\mathcal{B}_{\phi_t}\mathcal{S}^{\phi_t}$ replaced with $\mathcal{B}_{\phi_t} \mathcal{B}_{\psi_t}  \mathcal{S}^{\phi_t}_2 \mathcal{S}^{\psi_t}_{1}$. It is straightforward to generalize the second-order expansion of $\epsilon$ to higher orders.

Figure~\ref{fig:truncation} compares the  calculated spin dynamics and the maximal bond dimension $D_{\rm \max}$ needed to evaluate $\rho^{\bm{\Phi}^{(t)}}$ for truncation precision $\xi=10^{-5}$ (the same as in Fig.~\ref{fig:SBM} in the main text) and $\xi=10^{-6}$.  The calculation of the spin dynamics for truncation precision $\xi=10^{-5}$ is sufficiently accurate. As expected, smaller truncation precision induces a faster increase of the bond dimension. However, for all coupling strengths and both truncation precisions, the increase is not faster than linear.

\section{Gaudin model}\label{sec:GM-app}

\subsection{MPS representation of EDMs}

As a feature of non-interacting spin-1/2's, the correlation tensor for the $k$-th bath spin at infinite temperature can be decomposed into a product of local tensors as}
 \begin{equation}
 \mathbb{C}_{k; {\boldsymbol \Phi}^{(T)}}=\left[{ A}_{k;\phi_{T}}\right]_{0,a_{T-1}}  \cdots\left[{A}_{k;\phi_{2}}\right]_{a_{2},a_{1}} \left[{A}_{k;\phi_1}\right]_{a_1,0},  
 \end{equation}
 where $\left[{A}_{k;\phi_t}\right]_{a_t,  a_{t-1}}\equiv\frac{1}{2}\text{Tr}\left[\sigma_{a_t} \mathcal{B}_{k;\phi_{t}} \sigma_{a_{t-1}}\right]$ with $B_{k;1/2/3}\equiv g_k{J}_{k;x/y/z}$, $\sigma_{0}\equiv I$ and $\sigma_{1/2/3}\equiv 2 J_{k;x/y/z}$ with the lateral  bond dimension $D_a=4$ for $a_{\tau}$. 
 The EDM of the central spin coupled to the first spin is
 $$
 \rho^{\bm{\Phi}^{(T)}}_{L=1} = {\mathbb P}^{{\boldsymbol \Phi}^{(T)},{\boldsymbol \Phi}^{'(T)}}_{{\boldsymbol \Phi}^{''(T)}}\mathbb{C}_{1; {\boldsymbol \Phi}^{'(T)}}{\mathcal S}^{{\boldsymbol \Phi}^{''(T)}}\rho(0),
$$
 whose MPS can be obtained with the SVD described in Appendix~\ref{sec:SBM-app}; then the MPS representation of the EDM coupled to the first $L+1$ bath spins can be recursively obtained using Eq.~\eqref{eq:Gaudin-recursive}.

\subsection{Central spin dynamics}
The dynamics of the central spin at any given time $t$ can be obtained from
\begin{align}\label{eq:spinDynamicsApp}
        \left\langle S_{\alpha}(t)\right\rangle=& \epsilon^{-1}\text{Tr}\left[\rho^{{\boldsymbol \Phi}^{(T)}}  \delta_{\phi_{t}}^{2 \alpha-1} \delta^{{\mathbf 0}}_{{\boldsymbol \Phi}^{(T)}/\{\phi_{t}\}} \right] +O\left(\epsilon\right),
\end{align}
since
\begin{align}
\epsilon \left\langle S_{\alpha}(t)\right\rangle = & 
\text{Tr}\left[\epsilon\mathcal{S}_{\alpha}^+\rho(t)\right]\equiv \text{Tr}\left[\mathcal{S}^{2{\alpha}-1}\rho(t)\right]
\nonumber \\
=& \text{Tr}\left[\mathcal{S}^{2{\alpha}-1}e^{\epsilon \mathcal{H}^{-}(t)}\cdots e^{\epsilon \mathcal{H}^{-}(1)} \chi(0) \right] \nonumber\\
=& \text{Tr}\left[e^{\epsilon\mathcal{H}^{-}(T)} \cdots e^{\epsilon\mathcal{H}^{-}(t+1)}  \mathcal{S}^{2{\alpha}-1}e^{\epsilon \mathcal{H}^{-}(t)}\cdots e^{\epsilon \mathcal{H}^{-}(1)} \chi(0)\right] \nonumber \\
= & 
\text{Tr}\left[\rho^{{\boldsymbol \Phi}^{(T)}}  \delta_{\phi_{t}}^{2 \alpha-1} \delta^{{\mathbf 0}}_{{\boldsymbol \Phi}^{(T)}/\{\phi_{t}\}} \right] +O\left(\epsilon^2\right),\nonumber
\end{align}
where ${\boldsymbol \Phi}^{(T)}/\{\phi_{t}\}$ is the complement of $\{\phi_{t}\}$ and Eq.~\eqref{eq:EDM-phy} has been used in the last step.

In numerical calculation, we expand each step of evolution to the second order using Eq.~\eqref{eq:high-order-app}. For this second-order expansion, the EDM tensor is generalized to $$\rho^{\bm{\Phi}^{(T)},\bm{\Psi}^{(T)}} \equiv \mathbb{P}^{\bm{\Phi}^{(T)},\bm{\Phi}^{'(T)}}_{\bm{\Phi}^{''(T)}} \mathbb{P}^{\bm{\Psi}^{(T)},\bm{\Psi}^{'(T)}}_{\bm{\Psi}^{''(T)}} \mathbb{C}_{\bm{\Phi}^{'(T)},\bm{\Psi}^{'(T)}} \mathcal{T}\left[\mathbb{S}^{\bm{\Phi}^{''(T)}}_{2}\mathbb{S}^{\bm{\Psi}^{''(T)}}_1\right] \rho(0), $$
where $\mathbb{C}_{\bm{\Phi}^{(T)},\bm{\Psi}^{(T)}}\equiv \text{Tr}_B\left[\mathcal{B}_{\phi_T}(T) \mathcal{B}_{\psi_T}(T) \cdots \mathcal{B}_{\phi_1}(1)\mathcal{B}_{\psi_1}(1) \Omega\right]$ and $\mathbb{S}^{\bm{\Phi}^{(T)}}_{i}\equiv \mathcal{S}^{\phi_{T}}_{i}(T) \cdots \mathcal{S}^{\phi_{1}}_{i}(1)$ for $i=1$ or $2$.
Then, the evaluation of spin dynamics 
in Eq.~\eqref{eq:spinDynamicsApp} is modified to be
\begin{equation}\label{eq:Spin_2nd_Gaudin}
     \left\langle S_{\alpha}(t)\right\rangle=\frac{1+i}{\epsilon}\text{Tr}\left[\rho^{{\boldsymbol \Phi}^{(T)},{\boldsymbol \Psi}^{(T)}}  \delta^{{\mathbf 0}}_{{\boldsymbol \Phi}^{(T)}} \delta_{\psi_{t+1}}^{2 \alpha-1} \delta^{{\mathbf 0}}_{{\boldsymbol \Psi}^{(T)}/\{\psi_{t+1}\}}   \right]+O\left(\epsilon^2\right).
\end{equation}
To obtain this expression, we expand small steps of evolution $e^{\epsilon {\mathcal H}^-(\tau)}$ to the second order using Eq.~\eqref{eq:2ndorderexpasion} and get
\begin{align}\label{eq:SpinUnder2ndExpan}
  \frac{\epsilon}{1+i}\left\langle S_{\alpha}(t)\right\rangle 
    = & \text{Tr}_{\rm B}\left[\mathcal{S}^{2{\alpha}-1}_1 \rho(t)\right]
    \nonumber \\ = &  \text{Tr}\Big[ \mathcal{S}^{2{\alpha}-1}_1 
     {\mathcal B}_{\phi_{t}}{\mathcal B}_{\psi_{t}} {\mathcal S}^{\phi_{t}}_2{\mathcal S}^{\psi_{t}}_1\cdots {\mathcal B}_{\phi_1}{\mathcal B}_{\psi_1} {\mathcal S}^{\phi_1}_2{\mathcal S}^{\psi_1}_1\chi(0) \Big]  +O\left(\epsilon^3\right)
     \nonumber \\
          = & \text{Tr}\Big[{\mathcal B}_{\phi_{T}}{\mathcal B}_{\psi_{T}} {\mathcal S}^{\phi_{T}}_2{\mathcal S}^{\psi_{T}}_1\cdots {\mathcal B}_{\phi_{t+2}}{\mathcal B}_{\psi_{t+2}} {\mathcal S}^{\phi_{t+2}}_2{\mathcal S}^{\psi_{t+2}}_1 {\mathcal B}_{\phi_{t+1}}{\mathcal S}_2^{\phi_{t+1}} 
    \nonumber \\ &  \times  \mathcal{S}^{2{\alpha}-1}_1 
     {\mathcal B}_{\phi_{t}}{\mathcal B}_{\psi_{t}} {\mathcal S}^{\phi_{t}}_2{\mathcal S}^{\psi_{t}}_1\cdots {\mathcal B}_{\phi_1}{\mathcal B}_{\psi_1} {\mathcal S}^{\phi_1}_2{\mathcal S}^{\psi_1}_1\chi(0) \Big]  +O\left(\epsilon^3\right),
\end{align}
where the last equation above is obtained by inserting tensor products like ${\mathcal B}_{\phi} {\mathcal S}_i^{\phi}$ in the beginning of the expression within the trace operation since trace of commutators always vanishes. Using the substitution 
$
\mathcal{S}^{2{\alpha}-1}_1=P^{\psi_{t+1},2{\alpha}-1}_{\psi'_{t+1}}{\mathcal B}_{\psi_{t+1}}\mathcal{S}^{\psi'_{t+1}}_1
$
for $2\alpha-1>0$, we transfer Eq.~\eqref{eq:SpinUnder2ndExpan} to
\begin{align}
  \frac{\epsilon}{1+i}\left\langle S_{\alpha}(t)\right\rangle 
       = & {\mathbb C}_{{\boldsymbol \Phi}^{(T)},{\boldsymbol \Psi}^{(T)}}
    \text{Tr}\left[\hat{\mathcal T}{\mathbb S}_2^{{\boldsymbol \Phi}^{(T)}}{\mathcal S}_1^{\psi_T}\cdots{\mathcal S}_1^{\psi_{t+2}}{\mathcal S}_1^{\psi'_{t+1}}{\mathcal S}_1^{\psi_t}\cdots{\mathcal S}_1^{\psi_1}\rho(0) \right] P^{\psi_{t+1},2\alpha-1}_{\psi'_{t+1}}
    \nonumber  \\     =  & \text{Tr}\left[\rho^{{\boldsymbol \Phi}^{(T)},{\boldsymbol \Psi}^{(T)}}  \delta^{{\mathbf 0}}_{{\boldsymbol \Phi}^{(T)}} \delta_{\psi_{t+1}}^{2 \alpha-1} \delta^{{\mathbf 0}}_{{\boldsymbol \Psi}^{(T)}/\{\psi_{t+1}\}}   \right] +O\left(\epsilon^3\right),
\end{align}
which leads to Eq.~\eqref{eq:Spin_2nd_Gaudin}.
Note that in the model considered here, we have $S_{\alpha}(t)=S_{\alpha}(t+1)$. In general, there could be evolution of the system operators due to dynamical control. In such cases, one just need to modify the formula by writing $S_{\alpha}(t)$ as a linear combination of $S_{\beta}(t+1)$.

 \subsection{Error induced by bond dimension cutoff}

 \begin{figure}
     \centering
     \includegraphics[width=0.9\linewidth]{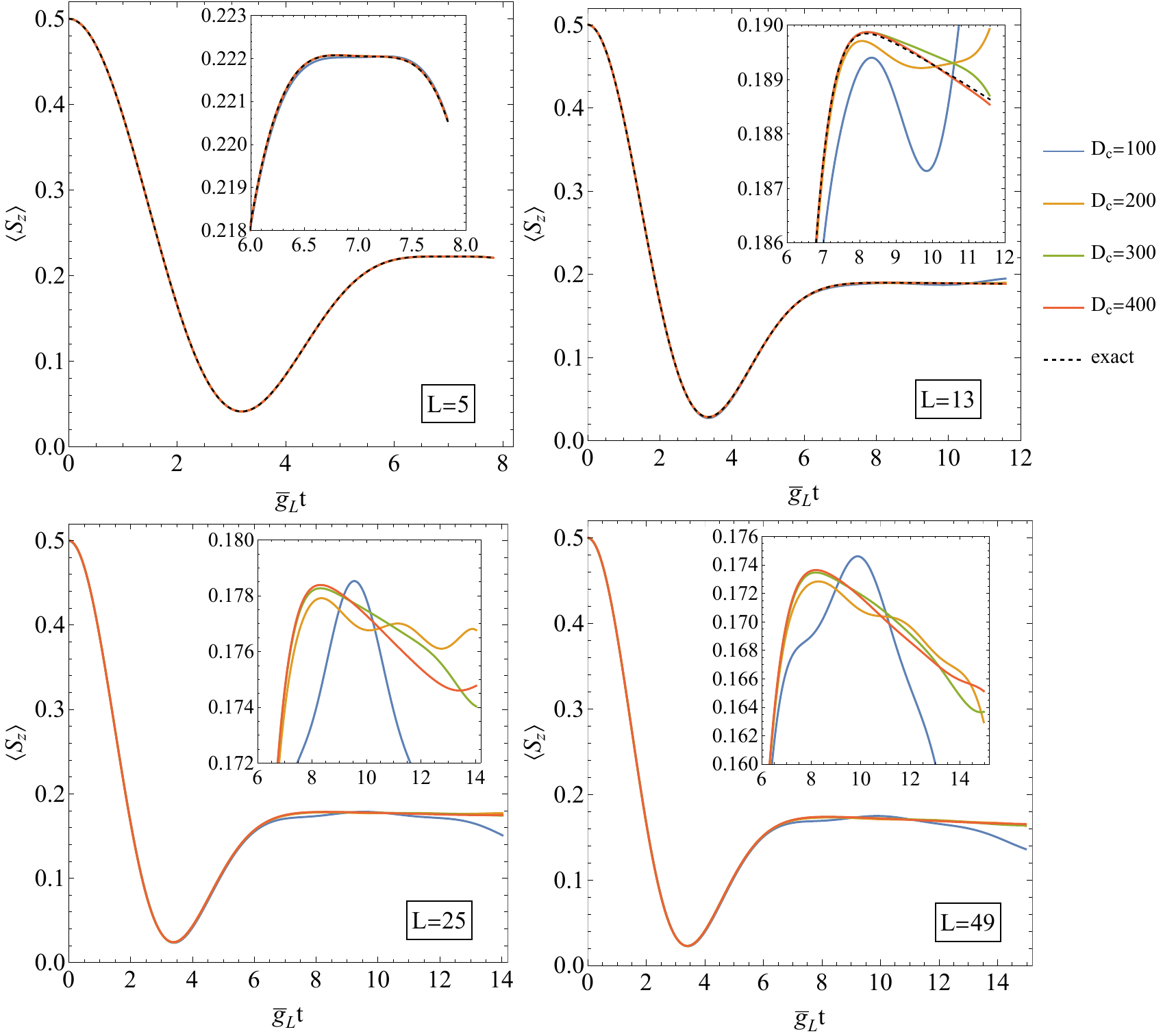}
     \caption{Central spin polarization computed by the EDM method with different bond dimension cutoffs $D_c$ for various numbers of bath spins included. The exact numerical solution is also shown for small numbers of bath spins ($L\le 13$). The insets zoom in the plateau regions of the curves. The parameters used are the same as in Fig.~\ref{fig:Gaudin}.}
     \label{fig:GM-comp-app}
 \end{figure}

 In the calculation, we introduce a cutoff of the bond dimension $D_{c}$ to manage the computation cost, which would induce error in addition to the truncation precision $\xi$ in the SVD and the expansion error $\mathcal{O}(\epsilon^3)$ in Eq.~\eqref{eq:high-order-app}. Figure~\ref{fig:GM-comp-app} compares the results obtained with various bond dimension cutoff, with the exact numerical results also shown for small numbers of bath spins ($L\le 13$). One can find that when $L=5$, different cutoffs yield almost the same decoherence dynamics, all in good agreement with the exact solution, which indicates that the non-Markovian dynamics in this case can be described by a small number of EDMs. However, for $L=13$ bath spins considered, the deviation from the exact solution is noticeable when the bond dimension cutoff is small $D_c\le 300$. The deviation decreases with increasing the cutoff $D_c$. For larger spin baths ($L\ge 25$), no exact solution is available. However, the convergent dynamics indicates that the EDM calculation with bond dimension cutoff $D_c=400$ (adopted in the main text) is reliable up to errors $\sim 10^{-3}$.   

\begin{figure}
    \centering
    \includegraphics[width=0.9\linewidth]{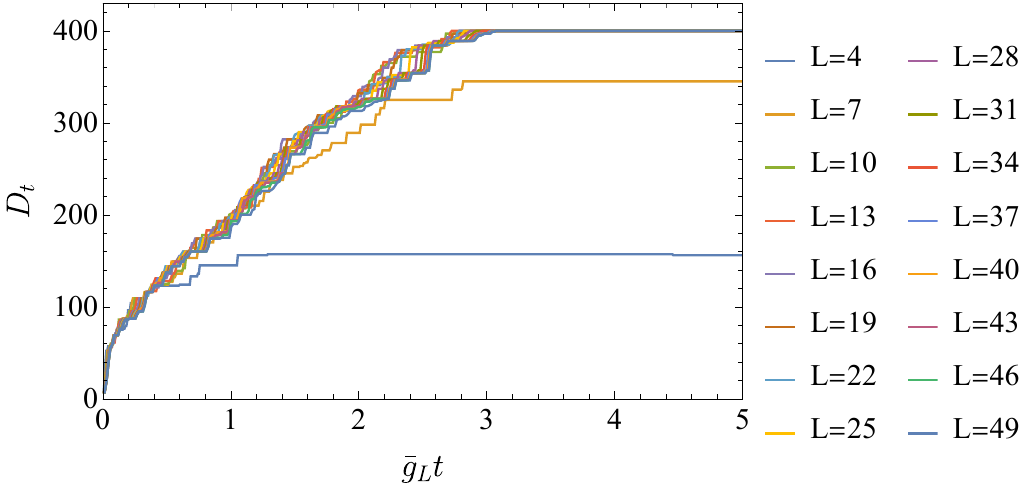}
    \caption{The bond dimension in EDMs $\rho^{\bm{\Phi}^{(T)}}_{L}$ as a function of the rescaled time $\bar{g}_L t$ for various numbers of bath spins included. The parameters are the same as in Fig.~\ref{fig:Gaudin}.}
    \label{fig:collapse-app}
\end{figure}

\subsection{Universal behavior of bond dimension increase} \label{Append_universal}
In the main text, we have seen that the bond dimension $D_t$ in different EDM tensors $\rho^{\bm{\Phi}^{(T)}}_{L}$ as functions of the rescaled evolution time  $\bar{g}_L t$ are nearly identical for different $L$.  Figure~\ref{fig:collapse-app} shows more data. The increase of the bond dimension collapses to a universal curve until it meets the cutoff (set as $D_c=400$) or saturates for small baths ($L\le 7$). This phenomenon demonstrates that with a given coupling strength between the system and the bath, the growth of bond dimension is independent of the bath size.  As the number of bath spins increases, the degrees of freedom associated with the bath spins increase exponentially. However, the growth of the bond dimension in EDMs is nearly linear in time, which confirms the linearly increasing bond dimension theorem.

\bibliography{TensorNetworkOPS}

\begin{thebibliography}{86}%
\makeatletter
\providecommand \@ifxundefined [1]{%
 \@ifx{#1\undefined}
}%
\providecommand \@ifnum [1]{%
 \ifnum #1\expandafter \@firstoftwo
 \else \expandafter \@secondoftwo
 \fi
}%
\providecommand \@ifx [1]{%
 \ifx #1\expandafter \@firstoftwo
 \else \expandafter \@secondoftwo
 \fi
}%
\providecommand \natexlab [1]{#1}%
\providecommand \enquote  [1]{``#1''}%
\providecommand \bibnamefont  [1]{#1}%
\providecommand \bibfnamefont [1]{#1}%
\providecommand \citenamefont [1]{#1}%
\providecommand \href@noop [0]{\@secondoftwo}%
\providecommand \href [0]{\begingroup \@sanitize@url \@href}%
\providecommand \@href[1]{\@@startlink{#1}\@@href}%
\providecommand \@@href[1]{\endgroup#1\@@endlink}%
\providecommand \@sanitize@url [0]{\catcode `\\12\catcode `\$12\catcode `\&12\catcode `\#12\catcode `\^12\catcode `\_12\catcode `\%12\relax}%
\providecommand \@@startlink[1]{}%
\providecommand \@@endlink[0]{}%
\providecommand \url  [0]{\begingroup\@sanitize@url \@url }%
\providecommand \@url [1]{\endgroup\@href {#1}{\urlprefix }}%
\providecommand \urlprefix  [0]{URL }%
\providecommand \Eprint [0]{\href }%
\providecommand \doibase [0]{https://doi.org/}%
\providecommand \selectlanguage [0]{\@gobble}%
\providecommand \bibinfo  [0]{\@secondoftwo}%
\providecommand \bibfield  [0]{\@secondoftwo}%
\providecommand \translation [1]{[#1]}%
\providecommand \BibitemOpen [0]{}%
\providecommand \bibitemStop [0]{}%
\providecommand \bibitemNoStop [0]{.\EOS\space}%
\providecommand \EOS [0]{\spacefactor3000\relax}%
\providecommand \BibitemShut  [1]{\csname bibitem#1\endcsname}%
\let\auto@bib@innerbib\@empty
\bibitem [{\citenamefont {Breuer}\ and\ \citenamefont {Petruccione}(2007)}]{Breuer2007}%
  \BibitemOpen
  \bibfield  {author} {\bibinfo {author} {\bibfnamefont {H.-P.}\ \bibnamefont {Breuer}}\ and\ \bibinfo {author} {\bibfnamefont {F.}~\bibnamefont {Petruccione}},\ }\href@noop {} {\emph {\bibinfo {title} {The Theory of Open Quantum Systems}}}\ (\bibinfo  {publisher} {{Oxford University Press}},\ \bibinfo {year} {2007})\BibitemShut {NoStop}%
\bibitem [{\citenamefont {Weiss}(2012)}]{Weiss2012}%
  \BibitemOpen
  \bibfield  {author} {\bibinfo {author} {\bibfnamefont {U.}~\bibnamefont {Weiss}},\ }\href {https://doi.org/10.1142/8334} {\emph {\bibinfo {title} {Quantum Dissipative Systems}}},\ \bibinfo {edition} {4th}\ ed.\ (\bibinfo  {publisher} {World Scientific},\ \bibinfo {year} {2012})\BibitemShut {NoStop}%
\bibitem [{\citenamefont {Schlosshauer}(2019)}]{Schlosshauer2019}%
  \BibitemOpen
  \bibfield  {author} {\bibinfo {author} {\bibfnamefont {M.}~\bibnamefont {Schlosshauer}},\ }\bibfield  {title} {\bibinfo {title} {Quantum decoherence},\ }\href {https://doi.org/https://doi.org/10.1016/j.physrep.2019.10.001} {\bibfield  {journal} {\bibinfo  {journal} {Phys. Rep.}\ }\textbf {\bibinfo {volume} {831}},\ \bibinfo {pages} {1} (\bibinfo {year} {2019})}\BibitemShut {NoStop}%
\bibitem [{\citenamefont {Prokof'ev}\ and\ \citenamefont {Stamp}(2000)}]{Prokof2000}%
  \BibitemOpen
  \bibfield  {author} {\bibinfo {author} {\bibfnamefont {N.~V.}\ \bibnamefont {Prokof'ev}}\ and\ \bibinfo {author} {\bibfnamefont {P.~C.~E.}\ \bibnamefont {Stamp}},\ }\bibfield  {title} {\bibinfo {title} {Theory of the spin bath},\ }\href {https://doi.org/10.1088/0034-4885/63/4/204} {\bibfield  {journal} {\bibinfo  {journal} {Rep. Prog. Phys.}\ }\textbf {\bibinfo {volume} {63}},\ \bibinfo {pages} {669} (\bibinfo {year} {2000})}\BibitemShut {NoStop}%
\bibitem [{\citenamefont {Gurvitz}\ and\ \citenamefont {Prager}(1996)}]{Gurvitz1996}%
  \BibitemOpen
  \bibfield  {author} {\bibinfo {author} {\bibfnamefont {S.~A.}\ \bibnamefont {Gurvitz}}\ and\ \bibinfo {author} {\bibfnamefont {Y.~S.}\ \bibnamefont {Prager}},\ }\bibfield  {title} {\bibinfo {title} {Microscopic derivation of rate equations for quantum transport},\ }\href {https://doi.org/10.1103/PhysRevB.53.15932} {\bibfield  {journal} {\bibinfo  {journal} {Phys. Rev. B}\ }\textbf {\bibinfo {volume} {53}},\ \bibinfo {pages} {15932} (\bibinfo {year} {1996})}\BibitemShut {NoStop}%
\bibitem [{\citenamefont {Li}\ \emph {et~al.}(2005)\citenamefont {Li}, \citenamefont {Luo}, \citenamefont {Yang}, \citenamefont {Cui},\ and\ \citenamefont {Yan}}]{Li2005}%
  \BibitemOpen
  \bibfield  {author} {\bibinfo {author} {\bibfnamefont {X.-Q.}\ \bibnamefont {Li}}, \bibinfo {author} {\bibfnamefont {J.}~\bibnamefont {Luo}}, \bibinfo {author} {\bibfnamefont {Y.-G.}\ \bibnamefont {Yang}}, \bibinfo {author} {\bibfnamefont {P.}~\bibnamefont {Cui}},\ and\ \bibinfo {author} {\bibfnamefont {Y.}~\bibnamefont {Yan}},\ }\bibfield  {title} {\bibinfo {title} {Quantum master-equation approach to quantum transport through mesoscopic systems},\ }\href {https://doi.org/10.1103/PhysRevB.71.205304} {\bibfield  {journal} {\bibinfo  {journal} {Phys. Rev. B}\ }\textbf {\bibinfo {volume} {71}},\ \bibinfo {pages} {205304} (\bibinfo {year} {2005})}\BibitemShut {NoStop}%
\bibitem [{\citenamefont {Jin}\ \emph {et~al.}(2008)\citenamefont {Jin}, \citenamefont {Zheng},\ and\ \citenamefont {Yan}}]{Jin2008}%
  \BibitemOpen
  \bibfield  {author} {\bibinfo {author} {\bibfnamefont {J.}~\bibnamefont {Jin}}, \bibinfo {author} {\bibfnamefont {X.}~\bibnamefont {Zheng}},\ and\ \bibinfo {author} {\bibfnamefont {Y.}~\bibnamefont {Yan}},\ }\bibfield  {title} {\bibinfo {title} {Exact dynamics of dissipative electronic systems and quantum transport: Hierarchical equations of motion approach},\ }\href {https://doi.org/10.1063/1.2938087} {\bibfield  {journal} {\bibinfo  {journal} {J. Chem. Phys.}\ }\textbf {\bibinfo {volume} {128}},\ \bibinfo {pages} {234703} (\bibinfo {year} {2008})}\BibitemShut {NoStop}%
\bibitem [{\citenamefont {Scully}\ and\ \citenamefont {Zubairy}(1997)}]{Scully1997}%
  \BibitemOpen
  \bibfield  {author} {\bibinfo {author} {\bibfnamefont {M.~O.}\ \bibnamefont {Scully}}\ and\ \bibinfo {author} {\bibfnamefont {M.~S.}\ \bibnamefont {Zubairy}},\ }\href@noop {} {\emph {\bibinfo {title} {Quantum optics}}}\ (\bibinfo  {publisher} {Cambridge university press},\ \bibinfo {year} {1997})\BibitemShut {NoStop}%
\bibitem [{\citenamefont {Giovannetti}\ \emph {et~al.}(2006)\citenamefont {Giovannetti}, \citenamefont {Lloyd},\ and\ \citenamefont {Maccone}}]{Giovannetti2006}%
  \BibitemOpen
  \bibfield  {author} {\bibinfo {author} {\bibfnamefont {V.}~\bibnamefont {Giovannetti}}, \bibinfo {author} {\bibfnamefont {S.}~\bibnamefont {Lloyd}},\ and\ \bibinfo {author} {\bibfnamefont {L.}~\bibnamefont {Maccone}},\ }\bibfield  {title} {\bibinfo {title} {Quantum metrology},\ }\href {https://doi.org/10.1103/PhysRevLett.96.010401} {\bibfield  {journal} {\bibinfo  {journal} {Phys. Rev. Lett.}\ }\textbf {\bibinfo {volume} {96}},\ \bibinfo {pages} {010401} (\bibinfo {year} {2006})}\BibitemShut {NoStop}%
\bibitem [{\citenamefont {Gisin}\ and\ \citenamefont {Thew}(2007)}]{Gisin2007}%
  \BibitemOpen
  \bibfield  {author} {\bibinfo {author} {\bibfnamefont {N.}~\bibnamefont {Gisin}}\ and\ \bibinfo {author} {\bibfnamefont {R.}~\bibnamefont {Thew}},\ }\bibfield  {title} {\bibinfo {title} {Quantum communication},\ }\href {https://doi.org/10.1038/nphoton.2007.22} {\bibfield  {journal} {\bibinfo  {journal} {Nat. Photonics}\ }\textbf {\bibinfo {volume} {1}},\ \bibinfo {pages} {165} (\bibinfo {year} {2007})}\BibitemShut {NoStop}%
\bibitem [{\citenamefont {Bharti}\ \emph {et~al.}(2022)\citenamefont {Bharti}, \citenamefont {Cervera-Lierta}, \citenamefont {Kyaw}, \citenamefont {Haug}, \citenamefont {Alperin-Lea}, \citenamefont {Anand}, \citenamefont {Degroote}, \citenamefont {Heimonen}, \citenamefont {Kottmann}, \citenamefont {Menke}, \citenamefont {Mok}, \citenamefont {Sim}, \citenamefont {Kwek},\ and\ \citenamefont {Aspuru-Guzik}}]{Bharti2022}%
  \BibitemOpen
  \bibfield  {author} {\bibinfo {author} {\bibfnamefont {K.}~\bibnamefont {Bharti}}, \bibinfo {author} {\bibfnamefont {A.}~\bibnamefont {Cervera-Lierta}}, \bibinfo {author} {\bibfnamefont {T.~H.}\ \bibnamefont {Kyaw}}, \bibinfo {author} {\bibfnamefont {T.}~\bibnamefont {Haug}}, \bibinfo {author} {\bibfnamefont {S.}~\bibnamefont {Alperin-Lea}}, \bibinfo {author} {\bibfnamefont {A.}~\bibnamefont {Anand}}, \bibinfo {author} {\bibfnamefont {M.}~\bibnamefont {Degroote}}, \bibinfo {author} {\bibfnamefont {H.}~\bibnamefont {Heimonen}}, \bibinfo {author} {\bibfnamefont {J.~S.}\ \bibnamefont {Kottmann}}, \bibinfo {author} {\bibfnamefont {T.}~\bibnamefont {Menke}}, \bibinfo {author} {\bibfnamefont {W.-K.}\ \bibnamefont {Mok}}, \bibinfo {author} {\bibfnamefont {S.}~\bibnamefont {Sim}}, \bibinfo {author} {\bibfnamefont {L.-C.}\ \bibnamefont {Kwek}},\ and\ \bibinfo {author} {\bibfnamefont {A.}~\bibnamefont {Aspuru-Guzik}},\ }\bibfield  {title} {\bibinfo {title} {Noisy intermediate-scale quantum algorithms},\ }\href
  {https://doi.org/10.1103/RevModPhys.94.015004} {\bibfield  {journal} {\bibinfo  {journal} {Rev. Mod. Phys.}\ }\textbf {\bibinfo {volume} {94}},\ \bibinfo {pages} {015004} (\bibinfo {year} {2022})}\BibitemShut {NoStop}%
\bibitem [{\citenamefont {Acín}\ \emph {et~al.}(2018)\citenamefont {Acín}, \citenamefont {Bloch}, \citenamefont {Buhrman}, \citenamefont {Calarco}, \citenamefont {Eichler}, \citenamefont {Eisert}, \citenamefont {Esteve}, \citenamefont {Gisin}, \citenamefont {Glaser}, \citenamefont {Jelezko}, \citenamefont {Kuhr}, \citenamefont {Lewenstein}, \citenamefont {Riedel}, \citenamefont {Schmidt}, \citenamefont {Thew}, \citenamefont {Wallraff}, \citenamefont {Walmsley},\ and\ \citenamefont {Wilhelm}}]{Acin2018}%
  \BibitemOpen
  \bibfield  {author} {\bibinfo {author} {\bibfnamefont {A.}~\bibnamefont {Acín}}, \bibinfo {author} {\bibfnamefont {I.}~\bibnamefont {Bloch}}, \bibinfo {author} {\bibfnamefont {H.}~\bibnamefont {Buhrman}}, \bibinfo {author} {\bibfnamefont {T.}~\bibnamefont {Calarco}}, \bibinfo {author} {\bibfnamefont {C.}~\bibnamefont {Eichler}}, \bibinfo {author} {\bibfnamefont {J.}~\bibnamefont {Eisert}}, \bibinfo {author} {\bibfnamefont {D.}~\bibnamefont {Esteve}}, \bibinfo {author} {\bibfnamefont {N.}~\bibnamefont {Gisin}}, \bibinfo {author} {\bibfnamefont {S.~J.}\ \bibnamefont {Glaser}}, \bibinfo {author} {\bibfnamefont {F.}~\bibnamefont {Jelezko}}, \bibinfo {author} {\bibfnamefont {S.}~\bibnamefont {Kuhr}}, \bibinfo {author} {\bibfnamefont {M.}~\bibnamefont {Lewenstein}}, \bibinfo {author} {\bibfnamefont {M.~F.}\ \bibnamefont {Riedel}}, \bibinfo {author} {\bibfnamefont {P.~O.}\ \bibnamefont {Schmidt}}, \bibinfo {author} {\bibfnamefont {R.}~\bibnamefont {Thew}}, \bibinfo {author} {\bibfnamefont {A.}~\bibnamefont
  {Wallraff}}, \bibinfo {author} {\bibfnamefont {I.}~\bibnamefont {Walmsley}},\ and\ \bibinfo {author} {\bibfnamefont {F.~K.}\ \bibnamefont {Wilhelm}},\ }\bibfield  {title} {\bibinfo {title} {The quantum technologies roadmap: a european community view},\ }\href {https://doi.org/10.1088/1367-2630/aad1ea} {\bibfield  {journal} {\bibinfo  {journal} {New J. Phys.}\ }\textbf {\bibinfo {volume} {20}},\ \bibinfo {pages} {080201} (\bibinfo {year} {2018})}\BibitemShut {NoStop}%
\bibitem [{\citenamefont {Bylander}\ \emph {et~al.}(2011)\citenamefont {Bylander}, \citenamefont {Gustavsson}, \citenamefont {Yan}, \citenamefont {Yoshihara}, \citenamefont {Harrabi}, \citenamefont {Fitch}, \citenamefont {Cory}, \citenamefont {Nakamura}, \citenamefont {Tsai},\ and\ \citenamefont {Oliver}}]{Bylander2011}%
  \BibitemOpen
  \bibfield  {author} {\bibinfo {author} {\bibfnamefont {J.}~\bibnamefont {Bylander}}, \bibinfo {author} {\bibfnamefont {S.}~\bibnamefont {Gustavsson}}, \bibinfo {author} {\bibfnamefont {F.}~\bibnamefont {Yan}}, \bibinfo {author} {\bibfnamefont {F.}~\bibnamefont {Yoshihara}}, \bibinfo {author} {\bibfnamefont {K.}~\bibnamefont {Harrabi}}, \bibinfo {author} {\bibfnamefont {G.}~\bibnamefont {Fitch}}, \bibinfo {author} {\bibfnamefont {D.~G.}\ \bibnamefont {Cory}}, \bibinfo {author} {\bibfnamefont {Y.}~\bibnamefont {Nakamura}}, \bibinfo {author} {\bibfnamefont {J.-S.}\ \bibnamefont {Tsai}},\ and\ \bibinfo {author} {\bibfnamefont {W.~D.}\ \bibnamefont {Oliver}},\ }\bibfield  {title} {\bibinfo {title} {Noise spectroscopy through dynamical decoupling with a superconducting flux qubit},\ }\href {https://doi.org/10.1038/nphys1994} {\bibfield  {journal} {\bibinfo  {journal} {Nat. Physics}\ }\textbf {\bibinfo {volume} {7}},\ \bibinfo {pages} {565} (\bibinfo {year} {2011})}\BibitemShut {NoStop}%
\bibitem [{\citenamefont {Wang}\ \emph {et~al.}(2019)\citenamefont {Wang}, \citenamefont {Chen}, \citenamefont {Peng}, \citenamefont {Wrachtrup},\ and\ \citenamefont {Liu}}]{Wang2019}%
  \BibitemOpen
  \bibfield  {author} {\bibinfo {author} {\bibfnamefont {P.}~\bibnamefont {Wang}}, \bibinfo {author} {\bibfnamefont {C.}~\bibnamefont {Chen}}, \bibinfo {author} {\bibfnamefont {X.}~\bibnamefont {Peng}}, \bibinfo {author} {\bibfnamefont {J.}~\bibnamefont {Wrachtrup}},\ and\ \bibinfo {author} {\bibfnamefont {R.-B.}\ \bibnamefont {Liu}},\ }\bibfield  {title} {\bibinfo {title} {Characterization of {{Arbitrary-Order Correlations}} in {{Quantum Baths}} by {{Weak Measurement}}},\ }\href {https://doi.org/10.1103/PhysRevLett.123.050603} {\bibfield  {journal} {\bibinfo  {journal} {Phys. Rev. Lett.}\ }\textbf {\bibinfo {volume} {123}},\ \bibinfo {pages} {050603} (\bibinfo {year} {2019})}\BibitemShut {NoStop}%
\bibitem [{\citenamefont {Meinel}\ \emph {et~al.}(2022)\citenamefont {Meinel}, \citenamefont {Vorobyov}, \citenamefont {Wang}, \citenamefont {Yavkin}, \citenamefont {Pfender}, \citenamefont {Sumiya}, \citenamefont {Onoda}, \citenamefont {Isoya}, \citenamefont {Liu},\ and\ \citenamefont {Wrachtrup}}]{Meinel2022}%
  \BibitemOpen
  \bibfield  {author} {\bibinfo {author} {\bibfnamefont {J.}~\bibnamefont {Meinel}}, \bibinfo {author} {\bibfnamefont {V.}~\bibnamefont {Vorobyov}}, \bibinfo {author} {\bibfnamefont {P.}~\bibnamefont {Wang}}, \bibinfo {author} {\bibfnamefont {B.}~\bibnamefont {Yavkin}}, \bibinfo {author} {\bibfnamefont {M.}~\bibnamefont {Pfender}}, \bibinfo {author} {\bibfnamefont {H.}~\bibnamefont {Sumiya}}, \bibinfo {author} {\bibfnamefont {S.}~\bibnamefont {Onoda}}, \bibinfo {author} {\bibfnamefont {J.}~\bibnamefont {Isoya}}, \bibinfo {author} {\bibfnamefont {R.-B.}\ \bibnamefont {Liu}},\ and\ \bibinfo {author} {\bibfnamefont {J.}~\bibnamefont {Wrachtrup}},\ }\bibfield  {title} {\bibinfo {title} {Quantum nonlinear spectroscopy of single nuclear spins},\ }\href {https://doi.org/10.1038/s41467-022-32610-8} {\bibfield  {journal} {\bibinfo  {journal} {Nat. Commun.}\ }\textbf {\bibinfo {volume} {13}},\ \bibinfo {pages} {5318} (\bibinfo {year} {2022})}\BibitemShut {NoStop}%
\bibitem [{\citenamefont {Wu}\ \emph {et~al.}(2024)\citenamefont {Wu}, \citenamefont {Wang}, \citenamefont {Wang}, \citenamefont {Li}, \citenamefont {Liu}, \citenamefont {Chen}, \citenamefont {Peng},\ and\ \citenamefont {Liu}}]{Wu2024}%
  \BibitemOpen
  \bibfield  {author} {\bibinfo {author} {\bibfnamefont {Z.}~\bibnamefont {Wu}}, \bibinfo {author} {\bibfnamefont {P.}~\bibnamefont {Wang}}, \bibinfo {author} {\bibfnamefont {T.}~\bibnamefont {Wang}}, \bibinfo {author} {\bibfnamefont {Y.}~\bibnamefont {Li}}, \bibinfo {author} {\bibfnamefont {R.}~\bibnamefont {Liu}}, \bibinfo {author} {\bibfnamefont {Y.}~\bibnamefont {Chen}}, \bibinfo {author} {\bibfnamefont {X.}~\bibnamefont {Peng}},\ and\ \bibinfo {author} {\bibfnamefont {R.-B.}\ \bibnamefont {Liu}},\ }\bibfield  {title} {\bibinfo {title} {Selective detection of dynamics-complete set of correlations via quantum channels},\ }\href {https://doi.org/10.1103/PhysRevLett.132.200802} {\bibfield  {journal} {\bibinfo  {journal} {Phys. Rev. Lett.}\ }\textbf {\bibinfo {volume} {132}},\ \bibinfo {pages} {200802} (\bibinfo {year} {2024})}\BibitemShut {NoStop}%
\bibitem [{\citenamefont {Du}\ \emph {et~al.}(2009)\citenamefont {Du}, \citenamefont {Rong}, \citenamefont {Zhao}, \citenamefont {Wang}, \citenamefont {Yang},\ and\ \citenamefont {Liu}}]{Du2009}%
  \BibitemOpen
  \bibfield  {author} {\bibinfo {author} {\bibfnamefont {J.}~\bibnamefont {Du}}, \bibinfo {author} {\bibfnamefont {X.}~\bibnamefont {Rong}}, \bibinfo {author} {\bibfnamefont {N.}~\bibnamefont {Zhao}}, \bibinfo {author} {\bibfnamefont {Y.}~\bibnamefont {Wang}}, \bibinfo {author} {\bibfnamefont {J.}~\bibnamefont {Yang}},\ and\ \bibinfo {author} {\bibfnamefont {R.~B.}\ \bibnamefont {Liu}},\ }\bibfield  {title} {\bibinfo {title} {Preserving electron spin coherence in solids by optimal dynamical decoupling},\ }\href {https://doi.org/10.1038/nature08470} {\bibfield  {journal} {\bibinfo  {journal} {Nature}\ }\textbf {\bibinfo {volume} {461}},\ \bibinfo {pages} {1265} (\bibinfo {year} {2009})}\BibitemShut {NoStop}%
\bibitem [{\citenamefont {de~Lange}\ \emph {et~al.}(2010)\citenamefont {de~Lange}, \citenamefont {Wang}, \citenamefont {Ristè}, \citenamefont {Dobrovitski},\ and\ \citenamefont {Hanson}}]{Lange2010}%
  \BibitemOpen
  \bibfield  {author} {\bibinfo {author} {\bibfnamefont {G.}~\bibnamefont {de~Lange}}, \bibinfo {author} {\bibfnamefont {Z.~H.}\ \bibnamefont {Wang}}, \bibinfo {author} {\bibfnamefont {D.}~\bibnamefont {Ristè}}, \bibinfo {author} {\bibfnamefont {V.~V.}\ \bibnamefont {Dobrovitski}},\ and\ \bibinfo {author} {\bibfnamefont {R.}~\bibnamefont {Hanson}},\ }\bibfield  {title} {\bibinfo {title} {Universal dynamical decoupling of a single solid-state spin from a spin bath},\ }\href {https://doi.org/10.1126/science.1192739} {\bibfield  {journal} {\bibinfo  {journal} {Science}\ }\textbf {\bibinfo {volume} {330}},\ \bibinfo {pages} {60} (\bibinfo {year} {2010})}\BibitemShut {NoStop}%
\bibitem [{\citenamefont {Yang}\ \emph {et~al.}(2011)\citenamefont {Yang}, \citenamefont {Wang},\ and\ \citenamefont {Liu}}]{Yang2011}%
  \BibitemOpen
  \bibfield  {author} {\bibinfo {author} {\bibfnamefont {W.}~\bibnamefont {Yang}}, \bibinfo {author} {\bibfnamefont {Z.-Y.}\ \bibnamefont {Wang}},\ and\ \bibinfo {author} {\bibfnamefont {R.-B.}\ \bibnamefont {Liu}},\ }\bibfield  {title} {\bibinfo {title} {Preserving qubit coherence by dynamical decoupling},\ }\href {https://doi.org/10.1007/s11467-010-0113-8} {\bibfield  {journal} {\bibinfo  {journal} {Front. Phys.}\ }\textbf {\bibinfo {volume} {6}},\ \bibinfo {pages} {2} (\bibinfo {year} {2011})}\BibitemShut {NoStop}%
\bibitem [{\citenamefont {Suter}\ and\ \citenamefont {\'Alvarez}(2016)}]{Suter2016}%
  \BibitemOpen
  \bibfield  {author} {\bibinfo {author} {\bibfnamefont {D.}~\bibnamefont {Suter}}\ and\ \bibinfo {author} {\bibfnamefont {G.~A.}\ \bibnamefont {\'Alvarez}},\ }\bibfield  {title} {\bibinfo {title} {Colloquium: Protecting quantum information against environmental noise},\ }\href {https://doi.org/10.1103/RevModPhys.88.041001} {\bibfield  {journal} {\bibinfo  {journal} {Rev. Mod. Phys.}\ }\textbf {\bibinfo {volume} {88}},\ \bibinfo {pages} {041001} (\bibinfo {year} {2016})}\BibitemShut {NoStop}%
\bibitem [{\citenamefont {Gardiner}\ and\ \citenamefont {Zoller}(2004)}]{Gardiner2004}%
  \BibitemOpen
  \bibfield  {author} {\bibinfo {author} {\bibfnamefont {C.~W.}\ \bibnamefont {Gardiner}}\ and\ \bibinfo {author} {\bibfnamefont {P.}~\bibnamefont {Zoller}},\ }\href@noop {} {\emph {\bibinfo {title} {Quantum Noise}}},\ \bibinfo {edition} {3rd}\ ed.,\ Springer Series in Synergetics\ (\bibinfo  {publisher} {{Springer}},\ \bibinfo {address} {{Berlin ; New York}},\ \bibinfo {year} {2004})\BibitemShut {NoStop}%
\bibitem [{\citenamefont {Breuer}\ \emph {et~al.}(2016)\citenamefont {Breuer}, \citenamefont {Laine}, \citenamefont {Piilo},\ and\ \citenamefont {Vacchini}}]{Breuer2016}%
  \BibitemOpen
  \bibfield  {author} {\bibinfo {author} {\bibfnamefont {H.-P.}\ \bibnamefont {Breuer}}, \bibinfo {author} {\bibfnamefont {E.-M.}\ \bibnamefont {Laine}}, \bibinfo {author} {\bibfnamefont {J.}~\bibnamefont {Piilo}},\ and\ \bibinfo {author} {\bibfnamefont {B.}~\bibnamefont {Vacchini}},\ }\bibfield  {title} {\bibinfo {title} {{\emph{Colloquium}} : {{Non-Markovian}} dynamics in open quantum systems},\ }\href {https://doi.org/10.1103/RevModPhys.88.021002} {\bibfield  {journal} {\bibinfo  {journal} {Rev. Mod. Phys.}\ }\textbf {\bibinfo {volume} {88}},\ \bibinfo {pages} {021002} (\bibinfo {year} {2016})}\BibitemShut {NoStop}%
\bibitem [{\citenamefont {{de Vega}}\ and\ \citenamefont {Alonso}(2017)}]{deVega2017}%
  \BibitemOpen
  \bibfield  {author} {\bibinfo {author} {\bibfnamefont {I.}~\bibnamefont {{de Vega}}}\ and\ \bibinfo {author} {\bibfnamefont {D.}~\bibnamefont {Alonso}},\ }\bibfield  {title} {\bibinfo {title} {Dynamics of non-{{Markovian}} open quantum systems},\ }\href {https://doi.org/10.1103/RevModPhys.89.015001} {\bibfield  {journal} {\bibinfo  {journal} {Rev. Mod. Phys.}\ }\textbf {\bibinfo {volume} {89}},\ \bibinfo {pages} {015001} (\bibinfo {year} {2017})}\BibitemShut {NoStop}%
\bibitem [{\citenamefont {Madsen}\ \emph {et~al.}(2011)\citenamefont {Madsen}, \citenamefont {Ates}, \citenamefont {{Lund-Hansen}}, \citenamefont {L{\"o}ffler}, \citenamefont {Reitzenstein}, \citenamefont {Forchel},\ and\ \citenamefont {Lodahl}}]{Madsen2011}%
  \BibitemOpen
  \bibfield  {author} {\bibinfo {author} {\bibfnamefont {K.~H.}\ \bibnamefont {Madsen}}, \bibinfo {author} {\bibfnamefont {S.}~\bibnamefont {Ates}}, \bibinfo {author} {\bibfnamefont {T.}~\bibnamefont {{Lund-Hansen}}}, \bibinfo {author} {\bibfnamefont {A.}~\bibnamefont {L{\"o}ffler}}, \bibinfo {author} {\bibfnamefont {S.}~\bibnamefont {Reitzenstein}}, \bibinfo {author} {\bibfnamefont {A.}~\bibnamefont {Forchel}},\ and\ \bibinfo {author} {\bibfnamefont {P.}~\bibnamefont {Lodahl}},\ }\bibfield  {title} {\bibinfo {title} {Observation of {{Non-Markovian Dynamics}} of a {{Single Quantum Dot}} in a {{Micropillar Cavity}}},\ }\href {https://doi.org/10.1103/PhysRevLett.106.233601} {\bibfield  {journal} {\bibinfo  {journal} {Phys. Rev. Lett.}\ }\textbf {\bibinfo {volume} {106}},\ \bibinfo {pages} {233601} (\bibinfo {year} {2011})}\BibitemShut {NoStop}%
\bibitem [{\citenamefont {Mi}\ \emph {et~al.}(2017)\citenamefont {Mi}, \citenamefont {Cady}, \citenamefont {Zajac}, \citenamefont {Deelman},\ and\ \citenamefont {Petta}}]{Mi2017}%
  \BibitemOpen
  \bibfield  {author} {\bibinfo {author} {\bibfnamefont {X.}~\bibnamefont {Mi}}, \bibinfo {author} {\bibfnamefont {J.~V.}\ \bibnamefont {Cady}}, \bibinfo {author} {\bibfnamefont {D.~M.}\ \bibnamefont {Zajac}}, \bibinfo {author} {\bibfnamefont {P.~W.}\ \bibnamefont {Deelman}},\ and\ \bibinfo {author} {\bibfnamefont {J.~R.}\ \bibnamefont {Petta}},\ }\bibfield  {title} {\bibinfo {title} {Strong coupling of a single electron in silicon to a microwave photon},\ }\href {https://doi.org/10.1126/science.aal2469} {\bibfield  {journal} {\bibinfo  {journal} {Science}\ }\textbf {\bibinfo {volume} {355}},\ \bibinfo {pages} {156} (\bibinfo {year} {2017})}\BibitemShut {NoStop}%
\bibitem [{\citenamefont {Thompson}\ \emph {et~al.}(2013)\citenamefont {Thompson}, \citenamefont {Tiecke}, \citenamefont {{de Leon}}, \citenamefont {Feist}, \citenamefont {Akimov}, \citenamefont {Gullans}, \citenamefont {Zibrov}, \citenamefont {Vuleti{\'c}},\ and\ \citenamefont {Lukin}}]{Thompson2013}%
  \BibitemOpen
  \bibfield  {author} {\bibinfo {author} {\bibfnamefont {J.~D.}\ \bibnamefont {Thompson}}, \bibinfo {author} {\bibfnamefont {T.~G.}\ \bibnamefont {Tiecke}}, \bibinfo {author} {\bibfnamefont {N.~P.}\ \bibnamefont {{de Leon}}}, \bibinfo {author} {\bibfnamefont {J.}~\bibnamefont {Feist}}, \bibinfo {author} {\bibfnamefont {A.~V.}\ \bibnamefont {Akimov}}, \bibinfo {author} {\bibfnamefont {M.}~\bibnamefont {Gullans}}, \bibinfo {author} {\bibfnamefont {A.~S.}\ \bibnamefont {Zibrov}}, \bibinfo {author} {\bibfnamefont {V.}~\bibnamefont {Vuleti{\'c}}},\ and\ \bibinfo {author} {\bibfnamefont {M.~D.}\ \bibnamefont {Lukin}},\ }\bibfield  {title} {\bibinfo {title} {Coupling a {{Single Trapped Atom}} to a {{Nanoscale Optical Cavity}}},\ }\href {https://doi.org/10.1126/science.1237125} {\bibfield  {journal} {\bibinfo  {journal} {Science}\ }\textbf {\bibinfo {volume} {340}},\ \bibinfo {pages} {1202} (\bibinfo {year} {2013})}\BibitemShut {NoStop}%
\bibitem [{\citenamefont {Aharonov}\ \emph {et~al.}(2006)\citenamefont {Aharonov}, \citenamefont {Kitaev},\ and\ \citenamefont {Preskill}}]{Aharonov2005}%
  \BibitemOpen
  \bibfield  {author} {\bibinfo {author} {\bibfnamefont {D.}~\bibnamefont {Aharonov}}, \bibinfo {author} {\bibfnamefont {A.}~\bibnamefont {Kitaev}},\ and\ \bibinfo {author} {\bibfnamefont {J.}~\bibnamefont {Preskill}},\ }\bibfield  {title} {\bibinfo {title} {Fault-tolerant quantum computation with long-range correlated noise},\ }\href {https://doi.org/10.1103/PhysRevLett.96.050504} {\bibfield  {journal} {\bibinfo  {journal} {Phys. Rev. Lett.}\ }\textbf {\bibinfo {volume} {96}},\ \bibinfo {pages} {050504} (\bibinfo {year} {2006})}\BibitemShut {NoStop}%
\bibitem [{\citenamefont {{Gaudin, M.}}(1976)}]{Gaudin1976}%
  \BibitemOpen
  \bibfield  {author} {\bibinfo {author} {\bibnamefont {{Gaudin, M.}}},\ }\bibfield  {title} {\bibinfo {title} {Diagonalisation d'une classe d'hamiltoniens de spin},\ }\href {https://doi.org/10.1051/jphys:0197600370100108700} {\bibfield  {journal} {\bibinfo  {journal} {J. Phys. France}\ }\textbf {\bibinfo {volume} {37}},\ \bibinfo {pages} {1087} (\bibinfo {year} {1976})}\BibitemShut {NoStop}%
\bibitem [{\citenamefont {Cywi\ifmmode~\acute{n}\else \'{n}\fi{}ski}\ \emph {et~al.}(2010)\citenamefont {Cywi\ifmmode~\acute{n}\else \'{n}\fi{}ski}, \citenamefont {Dobrovitski},\ and\ \citenamefont {Das~Sarma}}]{Cywifmmode2010}%
  \BibitemOpen
  \bibfield  {author} {\bibinfo {author} {\bibfnamefont {L.}~\bibnamefont {Cywi\ifmmode~\acute{n}\else \'{n}\fi{}ski}}, \bibinfo {author} {\bibfnamefont {V.~V.}\ \bibnamefont {Dobrovitski}},\ and\ \bibinfo {author} {\bibfnamefont {S.}~\bibnamefont {Das~Sarma}},\ }\bibfield  {title} {\bibinfo {title} {Spin echo decay at low magnetic fields in a nuclear spin bath},\ }\href {https://doi.org/10.1103/PhysRevB.82.035315} {\bibfield  {journal} {\bibinfo  {journal} {Phys. Rev. B}\ }\textbf {\bibinfo {volume} {82}},\ \bibinfo {pages} {035315} (\bibinfo {year} {2010})}\BibitemShut {NoStop}%
\bibitem [{\citenamefont {Stanek}\ \emph {et~al.}(2013)\citenamefont {Stanek}, \citenamefont {Raas},\ and\ \citenamefont {Uhrig}}]{Stanek2013}%
  \BibitemOpen
  \bibfield  {author} {\bibinfo {author} {\bibfnamefont {D.}~\bibnamefont {Stanek}}, \bibinfo {author} {\bibfnamefont {C.}~\bibnamefont {Raas}},\ and\ \bibinfo {author} {\bibfnamefont {G.~S.}\ \bibnamefont {Uhrig}},\ }\bibfield  {title} {\bibinfo {title} {Dynamics and decoherence in the central spin model in the low-field limit},\ }\href {https://doi.org/10.1103/PhysRevB.88.155305} {\bibfield  {journal} {\bibinfo  {journal} {Phys. Rev. B}\ }\textbf {\bibinfo {volume} {88}},\ \bibinfo {pages} {155305} (\bibinfo {year} {2013})}\BibitemShut {NoStop}%
\bibitem [{\citenamefont {Stanek}\ \emph {et~al.}(2014)\citenamefont {Stanek}, \citenamefont {Raas},\ and\ \citenamefont {Uhrig}}]{Stanek2014}%
  \BibitemOpen
  \bibfield  {author} {\bibinfo {author} {\bibfnamefont {D.}~\bibnamefont {Stanek}}, \bibinfo {author} {\bibfnamefont {C.}~\bibnamefont {Raas}},\ and\ \bibinfo {author} {\bibfnamefont {G.~S.}\ \bibnamefont {Uhrig}},\ }\bibfield  {title} {\bibinfo {title} {From quantum-mechanical to classical dynamics in the central-spin model},\ }\href {https://doi.org/10.1103/PhysRevB.90.064301} {\bibfield  {journal} {\bibinfo  {journal} {Phys. Rev. B}\ }\textbf {\bibinfo {volume} {90}},\ \bibinfo {pages} {064301} (\bibinfo {year} {2014})}\BibitemShut {NoStop}%
\bibitem [{\citenamefont {He}\ \emph {et~al.}(2022)\citenamefont {He}, \citenamefont {Chesi}, \citenamefont {Lin},\ and\ \citenamefont {Guan}}]{He2022}%
  \BibitemOpen
  \bibfield  {author} {\bibinfo {author} {\bibfnamefont {W.-B.}\ \bibnamefont {He}}, \bibinfo {author} {\bibfnamefont {S.}~\bibnamefont {Chesi}}, \bibinfo {author} {\bibfnamefont {H.-Q.}\ \bibnamefont {Lin}},\ and\ \bibinfo {author} {\bibfnamefont {X.-W.}\ \bibnamefont {Guan}},\ }\bibfield  {title} {\bibinfo {title} {Quantum dynamics of gaudin magnets},\ }\href {https://doi.org/10.1088/1572-9494/ac5417} {\bibfield  {journal} {\bibinfo  {journal} {Commun. Theor. Phys.}\ }\textbf {\bibinfo {volume} {74}},\ \bibinfo {pages} {095102} (\bibinfo {year} {2022})}\BibitemShut {NoStop}%
\bibitem [{\citenamefont {Yang}\ \emph {et~al.}(2016)\citenamefont {Yang}, \citenamefont {Ma},\ and\ \citenamefont {Liu}}]{Yang2017}%
  \BibitemOpen
  \bibfield  {author} {\bibinfo {author} {\bibfnamefont {W.}~\bibnamefont {Yang}}, \bibinfo {author} {\bibfnamefont {W.-L.}\ \bibnamefont {Ma}},\ and\ \bibinfo {author} {\bibfnamefont {R.-B.}\ \bibnamefont {Liu}},\ }\bibfield  {title} {\bibinfo {title} {Quantum many-body theory for electron spin decoherence in nanoscale nuclear spin baths},\ }\href {https://doi.org/10.1088/0034-4885/80/1/016001} {\bibfield  {journal} {\bibinfo  {journal} {Rep. Prog. Phys.}\ }\textbf {\bibinfo {volume} {80}},\ \bibinfo {pages} {016001} (\bibinfo {year} {2016})}\BibitemShut {NoStop}%
\bibitem [{\citenamefont {Garraway}(1997)}]{Garraway1997}%
  \BibitemOpen
  \bibfield  {author} {\bibinfo {author} {\bibfnamefont {B.~M.}\ \bibnamefont {Garraway}},\ }\bibfield  {title} {\bibinfo {title} {Nonperturbative decay of an atomic system in a cavity},\ }\href {https://doi.org/10.1103/PhysRevA.55.2290} {\bibfield  {journal} {\bibinfo  {journal} {Phys. Rev. A}\ }\textbf {\bibinfo {volume} {55}},\ \bibinfo {pages} {2290} (\bibinfo {year} {1997})}\BibitemShut {NoStop}%
\bibitem [{\citenamefont {Mazzola}\ \emph {et~al.}(2009)\citenamefont {Mazzola}, \citenamefont {Maniscalco}, \citenamefont {Piilo}, \citenamefont {Suominen},\ and\ \citenamefont {Garraway}}]{Mazzola2009}%
  \BibitemOpen
  \bibfield  {author} {\bibinfo {author} {\bibfnamefont {L.}~\bibnamefont {Mazzola}}, \bibinfo {author} {\bibfnamefont {S.}~\bibnamefont {Maniscalco}}, \bibinfo {author} {\bibfnamefont {J.}~\bibnamefont {Piilo}}, \bibinfo {author} {\bibfnamefont {K.-A.}\ \bibnamefont {Suominen}},\ and\ \bibinfo {author} {\bibfnamefont {B.~M.}\ \bibnamefont {Garraway}},\ }\bibfield  {title} {\bibinfo {title} {Pseudomodes as an effective description of memory: {{Non-Markovian}} dynamics of two-state systems in structured reservoirs},\ }\href {https://doi.org/10.1103/PhysRevA.80.012104} {\bibfield  {journal} {\bibinfo  {journal} {Phys. Rev. A}\ }\textbf {\bibinfo {volume} {80}},\ \bibinfo {pages} {012104} (\bibinfo {year} {2009})}\BibitemShut {NoStop}%
\bibitem [{\citenamefont {Tamascelli}\ \emph {et~al.}(2018)\citenamefont {Tamascelli}, \citenamefont {Smirne}, \citenamefont {Huelga},\ and\ \citenamefont {Plenio}}]{Tamascelli2018}%
  \BibitemOpen
  \bibfield  {author} {\bibinfo {author} {\bibfnamefont {D.}~\bibnamefont {Tamascelli}}, \bibinfo {author} {\bibfnamefont {A.}~\bibnamefont {Smirne}}, \bibinfo {author} {\bibfnamefont {S.~F.}\ \bibnamefont {Huelga}},\ and\ \bibinfo {author} {\bibfnamefont {M.~B.}\ \bibnamefont {Plenio}},\ }\bibfield  {title} {\bibinfo {title} {Nonperturbative {{Treatment}} of non-{{Markovian Dynamics}} of {{Open Quantum Systems}}},\ }\href {https://doi.org/10.1103/PhysRevLett.120.030402} {\bibfield  {journal} {\bibinfo  {journal} {Phys. Rev. Lett.}\ }\textbf {\bibinfo {volume} {120}},\ \bibinfo {pages} {030402} (\bibinfo {year} {2018})}\BibitemShut {NoStop}%
\bibitem [{\citenamefont {Pleasance}\ \emph {et~al.}(2020)\citenamefont {Pleasance}, \citenamefont {Garraway},\ and\ \citenamefont {Petruccione}}]{Pleasance2020}%
  \BibitemOpen
  \bibfield  {author} {\bibinfo {author} {\bibfnamefont {G.}~\bibnamefont {Pleasance}}, \bibinfo {author} {\bibfnamefont {B.~M.}\ \bibnamefont {Garraway}},\ and\ \bibinfo {author} {\bibfnamefont {F.}~\bibnamefont {Petruccione}},\ }\bibfield  {title} {\bibinfo {title} {Generalized theory of pseudomodes for exact descriptions of non-{{Markovian}} quantum processes},\ }\href {https://doi.org/10.1103/PhysRevResearch.2.043058} {\bibfield  {journal} {\bibinfo  {journal} {Phys. Rev. Research}\ }\textbf {\bibinfo {volume} {2}},\ \bibinfo {pages} {043058} (\bibinfo {year} {2020})}\BibitemShut {NoStop}%
\bibitem [{\citenamefont {Mascherpa}\ \emph {et~al.}(2020)\citenamefont {Mascherpa}, \citenamefont {Smirne}, \citenamefont {Somoza}, \citenamefont {{Fern{\'a}ndez-Acebal}}, \citenamefont {Donadi}, \citenamefont {Tamascelli}, \citenamefont {Huelga},\ and\ \citenamefont {Plenio}}]{Mascherpa2020}%
  \BibitemOpen
  \bibfield  {author} {\bibinfo {author} {\bibfnamefont {F.}~\bibnamefont {Mascherpa}}, \bibinfo {author} {\bibfnamefont {A.}~\bibnamefont {Smirne}}, \bibinfo {author} {\bibfnamefont {A.~D.}\ \bibnamefont {Somoza}}, \bibinfo {author} {\bibfnamefont {P.}~\bibnamefont {{Fern{\'a}ndez-Acebal}}}, \bibinfo {author} {\bibfnamefont {S.}~\bibnamefont {Donadi}}, \bibinfo {author} {\bibfnamefont {D.}~\bibnamefont {Tamascelli}}, \bibinfo {author} {\bibfnamefont {S.~F.}\ \bibnamefont {Huelga}},\ and\ \bibinfo {author} {\bibfnamefont {M.~B.}\ \bibnamefont {Plenio}},\ }\bibfield  {title} {\bibinfo {title} {Optimized auxiliary oscillators for the simulation of general open quantum systems},\ }\href {https://doi.org/10.1103/PhysRevA.101.052108} {\bibfield  {journal} {\bibinfo  {journal} {Phys. Rev. A}\ }\textbf {\bibinfo {volume} {101}},\ \bibinfo {pages} {052108} (\bibinfo {year} {2020})}\BibitemShut {NoStop}%
\bibitem [{\citenamefont {Trivedi}\ \emph {et~al.}(2021)\citenamefont {Trivedi}, \citenamefont {Malz},\ and\ \citenamefont {Cirac}}]{Trivedi2021}%
  \BibitemOpen
  \bibfield  {author} {\bibinfo {author} {\bibfnamefont {R.}~\bibnamefont {Trivedi}}, \bibinfo {author} {\bibfnamefont {D.}~\bibnamefont {Malz}},\ and\ \bibinfo {author} {\bibfnamefont {J.~I.}\ \bibnamefont {Cirac}},\ }\bibfield  {title} {\bibinfo {title} {Convergence {{Guarantees}} for {{Discrete Mode Approximations}} to {{Non-Markovian Quantum Baths}}},\ }\href {https://doi.org/10.1103/PhysRevLett.127.250404} {\bibfield  {journal} {\bibinfo  {journal} {Phys. Rev. Lett.}\ }\textbf {\bibinfo {volume} {127}},\ \bibinfo {pages} {250404} (\bibinfo {year} {2021})}\BibitemShut {NoStop}%
\bibitem [{\citenamefont {Di\'osi}\ \emph {et~al.}(1998)\citenamefont {Di\'osi}, \citenamefont {Gisin},\ and\ \citenamefont {Strunz}}]{Diosi1998}%
  \BibitemOpen
  \bibfield  {author} {\bibinfo {author} {\bibfnamefont {L.}~\bibnamefont {Di\'osi}}, \bibinfo {author} {\bibfnamefont {N.}~\bibnamefont {Gisin}},\ and\ \bibinfo {author} {\bibfnamefont {W.~T.}\ \bibnamefont {Strunz}},\ }\bibfield  {title} {\bibinfo {title} {Non-markovian quantum state diffusion},\ }\href {https://doi.org/10.1103/PhysRevA.58.1699} {\bibfield  {journal} {\bibinfo  {journal} {Phys. Rev. A}\ }\textbf {\bibinfo {volume} {58}},\ \bibinfo {pages} {1699} (\bibinfo {year} {1998})}\BibitemShut {NoStop}%
\bibitem [{\citenamefont {Gaspard}\ and\ \citenamefont {Nagaoka}(1999)}]{Gaspard1999}%
  \BibitemOpen
  \bibfield  {author} {\bibinfo {author} {\bibfnamefont {P.}~\bibnamefont {Gaspard}}\ and\ \bibinfo {author} {\bibfnamefont {M.}~\bibnamefont {Nagaoka}},\ }\bibfield  {title} {\bibinfo {title} {{Non-Markovian stochastic Schrödinger equation}},\ }\href {https://doi.org/10.1063/1.479868} {\bibfield  {journal} {\bibinfo  {journal} {J. Chem. Phys.}\ }\textbf {\bibinfo {volume} {111}},\ \bibinfo {pages} {5676} (\bibinfo {year} {1999})}\BibitemShut {NoStop}%
\bibitem [{\citenamefont {Stockburger}\ and\ \citenamefont {Grabert}(2002)}]{Stockburger2002}%
  \BibitemOpen
  \bibfield  {author} {\bibinfo {author} {\bibfnamefont {J.~T.}\ \bibnamefont {Stockburger}}\ and\ \bibinfo {author} {\bibfnamefont {H.}~\bibnamefont {Grabert}},\ }\bibfield  {title} {\bibinfo {title} {Exact $\mathit{c}$-number representation of non-markovian quantum dissipation},\ }\href {https://doi.org/10.1103/PhysRevLett.88.170407} {\bibfield  {journal} {\bibinfo  {journal} {Phys. Rev. Lett.}\ }\textbf {\bibinfo {volume} {88}},\ \bibinfo {pages} {170407} (\bibinfo {year} {2002})}\BibitemShut {NoStop}%
\bibitem [{\citenamefont {Piilo}\ \emph {et~al.}(2008)\citenamefont {Piilo}, \citenamefont {Maniscalco}, \citenamefont {H\"ark\"onen},\ and\ \citenamefont {Suominen}}]{Piilo2008}%
  \BibitemOpen
  \bibfield  {author} {\bibinfo {author} {\bibfnamefont {J.}~\bibnamefont {Piilo}}, \bibinfo {author} {\bibfnamefont {S.}~\bibnamefont {Maniscalco}}, \bibinfo {author} {\bibfnamefont {K.}~\bibnamefont {H\"ark\"onen}},\ and\ \bibinfo {author} {\bibfnamefont {K.-A.}\ \bibnamefont {Suominen}},\ }\bibfield  {title} {\bibinfo {title} {Non-markovian quantum jumps},\ }\href {https://doi.org/10.1103/PhysRevLett.100.180402} {\bibfield  {journal} {\bibinfo  {journal} {Phys. Rev. Lett.}\ }\textbf {\bibinfo {volume} {100}},\ \bibinfo {pages} {180402} (\bibinfo {year} {2008})}\BibitemShut {NoStop}%
\bibitem [{\citenamefont {Tanimura}\ and\ \citenamefont {Kubo}(1989)}]{Tanimura1989}%
  \BibitemOpen
  \bibfield  {author} {\bibinfo {author} {\bibfnamefont {Y.}~\bibnamefont {Tanimura}}\ and\ \bibinfo {author} {\bibfnamefont {R.}~\bibnamefont {Kubo}},\ }\bibfield  {title} {\bibinfo {title} {Time {{Evolution}} of a {{Quantum System}} in {{Contact}} with a {{Nearly Gaussian-Markoffian Noise Bath}}},\ }\href {https://doi.org/10.1143/JPSJ.58.101} {\bibfield  {journal} {\bibinfo  {journal} {J. Phys. Soc. Jpn.}\ }\textbf {\bibinfo {volume} {58}},\ \bibinfo {pages} {101} (\bibinfo {year} {1989})}\BibitemShut {NoStop}%
\bibitem [{\citenamefont {Tanimura}(2020)}]{Tanimura2020}%
  \BibitemOpen
  \bibfield  {author} {\bibinfo {author} {\bibfnamefont {Y.}~\bibnamefont {Tanimura}},\ }\bibfield  {title} {\bibinfo {title} {Numerically ``exact'' approach to open quantum dynamics: {{The}} hierarchical equations of motion ({{HEOM}})},\ }\href {https://doi.org/10.1063/5.0011599} {\bibfield  {journal} {\bibinfo  {journal} {J. Chem. Phys.}\ }\textbf {\bibinfo {volume} {153}},\ \bibinfo {pages} {020901} (\bibinfo {year} {2020})}\BibitemShut {NoStop}%
\bibitem [{\citenamefont {Suess}\ \emph {et~al.}(2014)\citenamefont {Suess}, \citenamefont {Eisfeld},\ and\ \citenamefont {Strunz}}]{Suess2014}%
  \BibitemOpen
  \bibfield  {author} {\bibinfo {author} {\bibfnamefont {D.}~\bibnamefont {Suess}}, \bibinfo {author} {\bibfnamefont {A.}~\bibnamefont {Eisfeld}},\ and\ \bibinfo {author} {\bibfnamefont {W.~T.}\ \bibnamefont {Strunz}},\ }\bibfield  {title} {\bibinfo {title} {Hierarchy of {{Stochastic Pure States}} for {{Open Quantum System Dynamics}}},\ }\href {https://doi.org/10.1103/PhysRevLett.113.150403} {\bibfield  {journal} {\bibinfo  {journal} {Phys. Rev. Lett.}\ }\textbf {\bibinfo {volume} {113}},\ \bibinfo {pages} {150403} (\bibinfo {year} {2014})}\BibitemShut {NoStop}%
\bibitem [{\citenamefont {Hartmann}\ and\ \citenamefont {Strunz}(2017)}]{Hartmann2017}%
  \BibitemOpen
  \bibfield  {author} {\bibinfo {author} {\bibfnamefont {R.}~\bibnamefont {Hartmann}}\ and\ \bibinfo {author} {\bibfnamefont {W.~T.}\ \bibnamefont {Strunz}},\ }\bibfield  {title} {\bibinfo {title} {Exact {{Open Quantum System Dynamics Using}} the {{Hierarchy}} of {{Pure States}} ({{HOPS}})},\ }\href {https://doi.org/10.1021/acs.jctc.7b00751} {\bibfield  {journal} {\bibinfo  {journal} {J. Chem. Theory Comput.}\ }\textbf {\bibinfo {volume} {13}},\ \bibinfo {pages} {5834} (\bibinfo {year} {2017})}\BibitemShut {NoStop}%
\bibitem [{\citenamefont {Makri}\ and\ \citenamefont {Makarov}(1995{\natexlab{a}})}]{Makri1995b}%
  \BibitemOpen
  \bibfield  {author} {\bibinfo {author} {\bibfnamefont {N.}~\bibnamefont {Makri}}\ and\ \bibinfo {author} {\bibfnamefont {D.~E.}\ \bibnamefont {Makarov}},\ }\bibfield  {title} {\bibinfo {title} {Tensor propagator for iterative quantum time evolution of reduced density matrices. {{I}}. {{Theory}}},\ }\href {https://doi.org/10.1063/1.469508} {\bibfield  {journal} {\bibinfo  {journal} {J. Chem. Phys.}\ }\textbf {\bibinfo {volume} {102}},\ \bibinfo {pages} {4600} (\bibinfo {year} {1995}{\natexlab{a}})}\BibitemShut {NoStop}%
\bibitem [{\citenamefont {Makri}\ and\ \citenamefont {Makarov}(1995{\natexlab{b}})}]{Makri1995}%
  \BibitemOpen
  \bibfield  {author} {\bibinfo {author} {\bibfnamefont {N.}~\bibnamefont {Makri}}\ and\ \bibinfo {author} {\bibfnamefont {D.~E.}\ \bibnamefont {Makarov}},\ }\bibfield  {title} {\bibinfo {title} {Tensor propagator for iterative quantum time evolution of reduced density matrices. {{II}}. {{Numerical}} methodology},\ }\href {https://doi.org/10.1063/1.469509} {\bibfield  {journal} {\bibinfo  {journal} {J. Chem. Phys.}\ }\textbf {\bibinfo {volume} {102}},\ \bibinfo {pages} {4611} (\bibinfo {year} {1995}{\natexlab{b}})}\BibitemShut {NoStop}%
\bibitem [{\citenamefont {Ye}\ and\ \citenamefont {Chan}(2021)}]{Ye2021}%
  \BibitemOpen
  \bibfield  {author} {\bibinfo {author} {\bibfnamefont {E.}~\bibnamefont {Ye}}\ and\ \bibinfo {author} {\bibfnamefont {G.~K.-L.}\ \bibnamefont {Chan}},\ }\bibfield  {title} {\bibinfo {title} {{Constructing tensor network influence functionals for general quantum dynamics}},\ }\href {https://doi.org/10.1063/5.0047260} {\bibfield  {journal} {\bibinfo  {journal} {J. Chem. Phys.}\ }\textbf {\bibinfo {volume} {155}},\ \bibinfo {pages} {044104} (\bibinfo {year} {2021})}\BibitemShut {NoStop}%
\bibitem [{\citenamefont {Sonner}\ \emph {et~al.}(2021)\citenamefont {Sonner}, \citenamefont {Lerose},\ and\ \citenamefont {Abanin}}]{Sonner2021}%
  \BibitemOpen
  \bibfield  {author} {\bibinfo {author} {\bibfnamefont {M.}~\bibnamefont {Sonner}}, \bibinfo {author} {\bibfnamefont {A.}~\bibnamefont {Lerose}},\ and\ \bibinfo {author} {\bibfnamefont {D.~A.}\ \bibnamefont {Abanin}},\ }\bibfield  {title} {\bibinfo {title} {Influence functional of many-body systems: Temporal entanglement and matrix-product state representation},\ }\href {https://doi.org/https://doi.org/10.1016/j.aop.2021.168677} {\bibfield  {journal} {\bibinfo  {journal} {Ann. Phys.}\ }\textbf {\bibinfo {volume} {435}},\ \bibinfo {pages} {168677} (\bibinfo {year} {2021})}\BibitemShut {NoStop}%
\bibitem [{\citenamefont {Ng}\ \emph {et~al.}(2023)\citenamefont {Ng}, \citenamefont {Park}, \citenamefont {Millis}, \citenamefont {Chan},\ and\ \citenamefont {Reichman}}]{Ng2023}%
  \BibitemOpen
  \bibfield  {author} {\bibinfo {author} {\bibfnamefont {N.}~\bibnamefont {Ng}}, \bibinfo {author} {\bibfnamefont {G.}~\bibnamefont {Park}}, \bibinfo {author} {\bibfnamefont {A.~J.}\ \bibnamefont {Millis}}, \bibinfo {author} {\bibfnamefont {G.~K.-L.}\ \bibnamefont {Chan}},\ and\ \bibinfo {author} {\bibfnamefont {D.~R.}\ \bibnamefont {Reichman}},\ }\bibfield  {title} {\bibinfo {title} {Real-time evolution of anderson impurity models via tensor network influence functionals},\ }\href {https://doi.org/10.1103/PhysRevB.107.125103} {\bibfield  {journal} {\bibinfo  {journal} {Phys. Rev. B}\ }\textbf {\bibinfo {volume} {107}},\ \bibinfo {pages} {125103} (\bibinfo {year} {2023})}\BibitemShut {NoStop}%
\bibitem [{\citenamefont {Thoenniss}\ \emph {et~al.}(2023)\citenamefont {Thoenniss}, \citenamefont {Sonner}, \citenamefont {Lerose},\ and\ \citenamefont {Abanin}}]{Thoenniss2023}%
  \BibitemOpen
  \bibfield  {author} {\bibinfo {author} {\bibfnamefont {J.}~\bibnamefont {Thoenniss}}, \bibinfo {author} {\bibfnamefont {M.}~\bibnamefont {Sonner}}, \bibinfo {author} {\bibfnamefont {A.}~\bibnamefont {Lerose}},\ and\ \bibinfo {author} {\bibfnamefont {D.~A.}\ \bibnamefont {Abanin}},\ }\bibfield  {title} {\bibinfo {title} {Efficient method for quantum impurity problems out of equilibrium},\ }\href {https://doi.org/10.1103/PhysRevB.107.L201115} {\bibfield  {journal} {\bibinfo  {journal} {Phys. Rev. B}\ }\textbf {\bibinfo {volume} {107}},\ \bibinfo {pages} {L201115} (\bibinfo {year} {2023})}\BibitemShut {NoStop}%
\bibitem [{\citenamefont {Strathearn}\ \emph {et~al.}(2018)\citenamefont {Strathearn}, \citenamefont {Kirton}, \citenamefont {Kilda}, \citenamefont {Keeling},\ and\ \citenamefont {Lovett}}]{Strathearn2018}%
  \BibitemOpen
  \bibfield  {author} {\bibinfo {author} {\bibfnamefont {A.}~\bibnamefont {Strathearn}}, \bibinfo {author} {\bibfnamefont {P.}~\bibnamefont {Kirton}}, \bibinfo {author} {\bibfnamefont {D.}~\bibnamefont {Kilda}}, \bibinfo {author} {\bibfnamefont {J.}~\bibnamefont {Keeling}},\ and\ \bibinfo {author} {\bibfnamefont {B.~W.}\ \bibnamefont {Lovett}},\ }\bibfield  {title} {\bibinfo {title} {Efficient non-{{Markovian}} quantum dynamics using time-evolving matrix product operators},\ }\href {https://doi.org/10.1038/s41467-018-05617-3} {\bibfield  {journal} {\bibinfo  {journal} {Nat. Commun.}\ }\textbf {\bibinfo {volume} {9}},\ \bibinfo {pages} {3322} (\bibinfo {year} {2018})}\BibitemShut {NoStop}%
\bibitem [{\citenamefont {Strathearn}(2020)}]{Strathearn2020}%
  \BibitemOpen
  \bibfield  {author} {\bibinfo {author} {\bibfnamefont {A.}~\bibnamefont {Strathearn}},\ }\href {https://doi.org/10.1007/978-3-030-54975-6} {\emph {\bibinfo {title} {Modelling {{Non-Markovian Quantum Systems Using Tensor Networks}}}}},\ Springer {{Theses}}\ (\bibinfo  {publisher} {{Springer International Publishing}},\ \bibinfo {address} {{Cham}},\ \bibinfo {year} {2020})\BibitemShut {NoStop}%
\bibitem [{\citenamefont {Gribben}\ \emph {et~al.}(2022)\citenamefont {Gribben}, \citenamefont {Strathearn}, \citenamefont {Fux}, \citenamefont {Kirton},\ and\ \citenamefont {Lovett}}]{Gribben2022a}%
  \BibitemOpen
  \bibfield  {author} {\bibinfo {author} {\bibfnamefont {D.}~\bibnamefont {Gribben}}, \bibinfo {author} {\bibfnamefont {A.}~\bibnamefont {Strathearn}}, \bibinfo {author} {\bibfnamefont {G.~E.}\ \bibnamefont {Fux}}, \bibinfo {author} {\bibfnamefont {P.}~\bibnamefont {Kirton}},\ and\ \bibinfo {author} {\bibfnamefont {B.~W.}\ \bibnamefont {Lovett}},\ }\bibfield  {title} {\bibinfo {title} {Using the {{Environment}} to {{Understand}} non-{{Markovian Open Quantum Systems}}},\ }\href {https://doi.org/10.22331/q-2022-10-25-847} {\bibfield  {journal} {\bibinfo  {journal} {Quantum}\ }\textbf {\bibinfo {volume} {6}},\ \bibinfo {pages} {847} (\bibinfo {year} {2022})}\BibitemShut {NoStop}%
\bibitem [{\citenamefont {Pollock}\ \emph {et~al.}(2018)\citenamefont {Pollock}, \citenamefont {Rodr\'{\i}guez-Rosario}, \citenamefont {Frauenheim}, \citenamefont {Paternostro},\ and\ \citenamefont {Modi}}]{Pollock2018}%
  \BibitemOpen
  \bibfield  {author} {\bibinfo {author} {\bibfnamefont {F.~A.}\ \bibnamefont {Pollock}}, \bibinfo {author} {\bibfnamefont {C.}~\bibnamefont {Rodr\'{\i}guez-Rosario}}, \bibinfo {author} {\bibfnamefont {T.}~\bibnamefont {Frauenheim}}, \bibinfo {author} {\bibfnamefont {M.}~\bibnamefont {Paternostro}},\ and\ \bibinfo {author} {\bibfnamefont {K.}~\bibnamefont {Modi}},\ }\bibfield  {title} {\bibinfo {title} {Non-markovian quantum processes: Complete framework and efficient characterization},\ }\href {https://doi.org/10.1103/PhysRevA.97.012127} {\bibfield  {journal} {\bibinfo  {journal} {Phys. Rev. A}\ }\textbf {\bibinfo {volume} {97}},\ \bibinfo {pages} {012127} (\bibinfo {year} {2018})}\BibitemShut {NoStop}%
\bibitem [{\citenamefont {J{\o}rgensen}\ and\ \citenamefont {Pollock}(2019)}]{Jorgensen2019}%
  \BibitemOpen
  \bibfield  {author} {\bibinfo {author} {\bibfnamefont {M.~R.}\ \bibnamefont {J{\o}rgensen}}\ and\ \bibinfo {author} {\bibfnamefont {F.~A.}\ \bibnamefont {Pollock}},\ }\bibfield  {title} {\bibinfo {title} {Exploiting the {{Causal Tensor Network Structure}} of {{Quantum Processes}} to {{Efficiently Simulate Non-Markovian Path Integrals}}},\ }\href {https://doi.org/10.1103/PhysRevLett.123.240602} {\bibfield  {journal} {\bibinfo  {journal} {Phys. Rev. Lett.}\ }\textbf {\bibinfo {volume} {123}},\ \bibinfo {pages} {240602} (\bibinfo {year} {2019})}\BibitemShut {NoStop}%
\bibitem [{\citenamefont {Aloisio}\ \emph {et~al.}(2023)\citenamefont {Aloisio}, \citenamefont {White}, \citenamefont {Hill},\ and\ \citenamefont {Modi}}]{Aloisio2023}%
  \BibitemOpen
  \bibfield  {author} {\bibinfo {author} {\bibfnamefont {I.}~\bibnamefont {Aloisio}}, \bibinfo {author} {\bibfnamefont {G.}~\bibnamefont {White}}, \bibinfo {author} {\bibfnamefont {C.}~\bibnamefont {Hill}},\ and\ \bibinfo {author} {\bibfnamefont {K.}~\bibnamefont {Modi}},\ }\bibfield  {title} {\bibinfo {title} {Sampling complexity of open quantum systems},\ }\href {https://doi.org/10.1103/PRXQuantum.4.020310} {\bibfield  {journal} {\bibinfo  {journal} {PRX Quantum}\ }\textbf {\bibinfo {volume} {4}},\ \bibinfo {pages} {020310} (\bibinfo {year} {2023})}\BibitemShut {NoStop}%
\bibitem [{\citenamefont {Bulla}\ \emph {et~al.}(2008)\citenamefont {Bulla}, \citenamefont {Costi},\ and\ \citenamefont {Pruschke}}]{Bulla2008}%
  \BibitemOpen
  \bibfield  {author} {\bibinfo {author} {\bibfnamefont {R.}~\bibnamefont {Bulla}}, \bibinfo {author} {\bibfnamefont {T.~A.}\ \bibnamefont {Costi}},\ and\ \bibinfo {author} {\bibfnamefont {T.}~\bibnamefont {Pruschke}},\ }\bibfield  {title} {\bibinfo {title} {Numerical renormalization group method for quantum impurity systems},\ }\href {https://doi.org/10.1103/RevModPhys.80.395} {\bibfield  {journal} {\bibinfo  {journal} {Rev. Mod. Phys.}\ }\textbf {\bibinfo {volume} {80}},\ \bibinfo {pages} {395} (\bibinfo {year} {2008})}\BibitemShut {NoStop}%
\bibitem [{\citenamefont {Prior}\ \emph {et~al.}(2010)\citenamefont {Prior}, \citenamefont {Chin}, \citenamefont {Huelga},\ and\ \citenamefont {Plenio}}]{Prior2010}%
  \BibitemOpen
  \bibfield  {author} {\bibinfo {author} {\bibfnamefont {J.}~\bibnamefont {Prior}}, \bibinfo {author} {\bibfnamefont {A.~W.}\ \bibnamefont {Chin}}, \bibinfo {author} {\bibfnamefont {S.~F.}\ \bibnamefont {Huelga}},\ and\ \bibinfo {author} {\bibfnamefont {M.~B.}\ \bibnamefont {Plenio}},\ }\bibfield  {title} {\bibinfo {title} {Efficient {{Simulation}} of {{Strong System-Environment Interactions}}},\ }\href {https://doi.org/10.1103/PhysRevLett.105.050404} {\bibfield  {journal} {\bibinfo  {journal} {Phys. Rev. Lett.}\ }\textbf {\bibinfo {volume} {105}},\ \bibinfo {pages} {050404} (\bibinfo {year} {2010})}\BibitemShut {NoStop}%
\bibitem [{\citenamefont {Schr\"oder}\ and\ \citenamefont {Chin}(2016)}]{Schroder2016}%
  \BibitemOpen
  \bibfield  {author} {\bibinfo {author} {\bibfnamefont {F.~A. Y.~N.}\ \bibnamefont {Schr\"oder}}\ and\ \bibinfo {author} {\bibfnamefont {A.~W.}\ \bibnamefont {Chin}},\ }\bibfield  {title} {\bibinfo {title} {Simulating open quantum dynamics with time-dependent variational matrix product states: Towards microscopic correlation of environment dynamics and reduced system evolution},\ }\href {https://doi.org/10.1103/PhysRevB.93.075105} {\bibfield  {journal} {\bibinfo  {journal} {Phys. Rev. B}\ }\textbf {\bibinfo {volume} {93}},\ \bibinfo {pages} {075105} (\bibinfo {year} {2016})}\BibitemShut {NoStop}%
\bibitem [{\citenamefont {Wall}\ \emph {et~al.}(2016)\citenamefont {Wall}, \citenamefont {Safavi-Naini},\ and\ \citenamefont {Rey}}]{Wall2016}%
  \BibitemOpen
  \bibfield  {author} {\bibinfo {author} {\bibfnamefont {M.~L.}\ \bibnamefont {Wall}}, \bibinfo {author} {\bibfnamefont {A.}~\bibnamefont {Safavi-Naini}},\ and\ \bibinfo {author} {\bibfnamefont {A.~M.}\ \bibnamefont {Rey}},\ }\bibfield  {title} {\bibinfo {title} {Simulating generic spin-boson models with matrix product states},\ }\href {https://doi.org/10.1103/PhysRevA.94.053637} {\bibfield  {journal} {\bibinfo  {journal} {Phys. Rev. A}\ }\textbf {\bibinfo {volume} {94}},\ \bibinfo {pages} {053637} (\bibinfo {year} {2016})}\BibitemShut {NoStop}%
\bibitem [{\citenamefont {Or{\'u}s}(2014)}]{Orus2014}%
  \BibitemOpen
  \bibfield  {author} {\bibinfo {author} {\bibfnamefont {R.}~\bibnamefont {Or{\'u}s}},\ }\bibfield  {title} {\bibinfo {title} {A practical introduction to tensor networks: {{Matrix}} product states and projected entangled pair states},\ }\href {https://doi.org/10.1016/j.aop.2014.06.013} {\bibfield  {journal} {\bibinfo  {journal} {Ann. Phys.}\ }\textbf {\bibinfo {volume} {349}},\ \bibinfo {pages} {117} (\bibinfo {year} {2014})}\BibitemShut {NoStop}%
\bibitem [{\citenamefont {Schollw{\"o}ck}(2011)}]{Schollwock2011}%
  \BibitemOpen
  \bibfield  {author} {\bibinfo {author} {\bibfnamefont {U.}~\bibnamefont {Schollw{\"o}ck}},\ }\bibfield  {title} {\bibinfo {title} {The density-matrix renormalization group in the age of matrix product states},\ }\href {https://doi.org/10.1016/j.aop.2010.09.012} {\bibfield  {journal} {\bibinfo  {journal} {Ann. Phys.}\ }\textbf {\bibinfo {volume} {326}},\ \bibinfo {pages} {96} (\bibinfo {year} {2011})}\BibitemShut {NoStop}%
\bibitem [{\citenamefont {Orús}(2019)}]{Orus2019}%
  \BibitemOpen
  \bibfield  {author} {\bibinfo {author} {\bibfnamefont {R.}~\bibnamefont {Orús}},\ }\bibfield  {title} {\bibinfo {title} {Tensor networks for complex quantum systems},\ }\href {https://doi.org/10.1038/s42254-019-0086-7} {\bibfield  {journal} {\bibinfo  {journal} {Nat. Rev. Phys.}\ }\textbf {\bibinfo {volume} {1}},\ \bibinfo {pages} {538} (\bibinfo {year} {2019})}\BibitemShut {NoStop}%
\bibitem [{\citenamefont {Paeckel}\ \emph {et~al.}(2019)\citenamefont {Paeckel}, \citenamefont {Köhler}, \citenamefont {Swoboda}, \citenamefont {Manmana}, \citenamefont {Schollwöck},\ and\ \citenamefont {Hubig}}]{Paeckel2019}%
  \BibitemOpen
  \bibfield  {author} {\bibinfo {author} {\bibfnamefont {S.}~\bibnamefont {Paeckel}}, \bibinfo {author} {\bibfnamefont {T.}~\bibnamefont {Köhler}}, \bibinfo {author} {\bibfnamefont {A.}~\bibnamefont {Swoboda}}, \bibinfo {author} {\bibfnamefont {S.~R.}\ \bibnamefont {Manmana}}, \bibinfo {author} {\bibfnamefont {U.}~\bibnamefont {Schollwöck}},\ and\ \bibinfo {author} {\bibfnamefont {C.}~\bibnamefont {Hubig}},\ }\bibfield  {title} {\bibinfo {title} {Time-evolution methods for matrix-product states},\ }\href {https://doi.org/https://doi.org/10.1016/j.aop.2019.167998} {\bibfield  {journal} {\bibinfo  {journal} {Ann. Phys.}\ }\textbf {\bibinfo {volume} {411}},\ \bibinfo {pages} {167998} (\bibinfo {year} {2019})}\BibitemShut {NoStop}%
\bibitem [{\citenamefont {Link}\ \emph {et~al.}(2024)\citenamefont {Link}, \citenamefont {Tu},\ and\ \citenamefont {Strunz}}]{Link2024}%
  \BibitemOpen
  \bibfield  {author} {\bibinfo {author} {\bibfnamefont {V.}~\bibnamefont {Link}}, \bibinfo {author} {\bibfnamefont {H.-H.}\ \bibnamefont {Tu}},\ and\ \bibinfo {author} {\bibfnamefont {W.~T.}\ \bibnamefont {Strunz}},\ }\bibfield  {title} {\bibinfo {title} {Open quantum system dynamics from infinite tensor network contraction},\ }\href {https://doi.org/10.1103/PhysRevLett.132.200403} {\bibfield  {journal} {\bibinfo  {journal} {Phys. Rev. Lett.}\ }\textbf {\bibinfo {volume} {132}},\ \bibinfo {pages} {200403} (\bibinfo {year} {2024})}\BibitemShut {NoStop}%
\bibitem [{\citenamefont {Gasbarri}\ and\ \citenamefont {Ferialdi}(2018)}]{Gasbarri2018}%
  \BibitemOpen
  \bibfield  {author} {\bibinfo {author} {\bibfnamefont {G.}~\bibnamefont {Gasbarri}}\ and\ \bibinfo {author} {\bibfnamefont {L.}~\bibnamefont {Ferialdi}},\ }\bibfield  {title} {\bibinfo {title} {Stochastic unravelings of non-{{Markovian}} completely positive and trace-preserving maps},\ }\href {https://doi.org/10.1103/PhysRevA.98.042111} {\bibfield  {journal} {\bibinfo  {journal} {Phys. Rev. A}\ }\textbf {\bibinfo {volume} {98}},\ \bibinfo {pages} {042111} (\bibinfo {year} {2018})}\BibitemShut {NoStop}%
\bibitem [{\citenamefont {Luchnikov}\ \emph {et~al.}(2019)\citenamefont {Luchnikov}, \citenamefont {Vintskevich}, \citenamefont {Ouerdane},\ and\ \citenamefont {Filippov}}]{Luchnikov2019}%
  \BibitemOpen
  \bibfield  {author} {\bibinfo {author} {\bibfnamefont {I.~A.}\ \bibnamefont {Luchnikov}}, \bibinfo {author} {\bibfnamefont {S.~V.}\ \bibnamefont {Vintskevich}}, \bibinfo {author} {\bibfnamefont {H.}~\bibnamefont {Ouerdane}},\ and\ \bibinfo {author} {\bibfnamefont {S.~N.}\ \bibnamefont {Filippov}},\ }\bibfield  {title} {\bibinfo {title} {Simulation {{Complexity}} of {{Open Quantum Dynamics}}: {{Connection}} with {{Tensor Networks}}},\ }\href {https://doi.org/10.1103/PhysRevLett.122.160401} {\bibfield  {journal} {\bibinfo  {journal} {Phys. Rev. Lett.}\ }\textbf {\bibinfo {volume} {122}},\ \bibinfo {pages} {160401} (\bibinfo {year} {2019})}\BibitemShut {NoStop}%
\bibitem [{\citenamefont {Cygorek}\ \emph {et~al.}(2022)\citenamefont {Cygorek}, \citenamefont {Cosacchi}, \citenamefont {Vagov}, \citenamefont {Axt}, \citenamefont {Lovett}, \citenamefont {Keeling},\ and\ \citenamefont {Gauger}}]{Cygorek2022}%
  \BibitemOpen
  \bibfield  {author} {\bibinfo {author} {\bibfnamefont {M.}~\bibnamefont {Cygorek}}, \bibinfo {author} {\bibfnamefont {M.}~\bibnamefont {Cosacchi}}, \bibinfo {author} {\bibfnamefont {A.}~\bibnamefont {Vagov}}, \bibinfo {author} {\bibfnamefont {V.~M.}\ \bibnamefont {Axt}}, \bibinfo {author} {\bibfnamefont {B.~W.}\ \bibnamefont {Lovett}}, \bibinfo {author} {\bibfnamefont {J.}~\bibnamefont {Keeling}},\ and\ \bibinfo {author} {\bibfnamefont {E.~M.}\ \bibnamefont {Gauger}},\ }\bibfield  {title} {\bibinfo {title} {Simulation of open quantum systems by automated compression of arbitrary environments},\ }\href {https://doi.org/10.1038/s41567-022-01544-9} {\bibfield  {journal} {\bibinfo  {journal} {Nat. Phys.}\ }\textbf {\bibinfo {volume} {18}},\ \bibinfo {pages} {662} (\bibinfo {year} {2022})}\BibitemShut {NoStop}%
\bibitem [{\citenamefont {Cygorek}\ \emph {et~al.}(2024)\citenamefont {Cygorek}, \citenamefont {Keeling}, \citenamefont {Lovett},\ and\ \citenamefont {Gauger}}]{Cygorek2024}%
  \BibitemOpen
  \bibfield  {author} {\bibinfo {author} {\bibfnamefont {M.}~\bibnamefont {Cygorek}}, \bibinfo {author} {\bibfnamefont {J.}~\bibnamefont {Keeling}}, \bibinfo {author} {\bibfnamefont {B.~W.}\ \bibnamefont {Lovett}},\ and\ \bibinfo {author} {\bibfnamefont {E.~M.}\ \bibnamefont {Gauger}},\ }\bibfield  {title} {\bibinfo {title} {Sublinear scaling in non-markovian open quantum systems simulations},\ }\href {https://doi.org/10.1103/PhysRevX.14.011010} {\bibfield  {journal} {\bibinfo  {journal} {Phys. Rev. X}\ }\textbf {\bibinfo {volume} {14}},\ \bibinfo {pages} {011010} (\bibinfo {year} {2024})}\BibitemShut {NoStop}%
\bibitem [{\citenamefont {Cheung}\ and\ \citenamefont {Liu}()}]{Cheung2024}%
  \BibitemOpen
  \bibfield  {author} {\bibinfo {author} {\bibfnamefont {B.~C.~H.}\ \bibnamefont {Cheung}}\ and\ \bibinfo {author} {\bibfnamefont {R.-B.}\ \bibnamefont {Liu}},\ }\bibfield  {title} {\bibinfo {title} {Quantum nonlinear spectroscopy via correlations of weak faraday-rotation measurements},\ }\href {https://doi.org/https://doi.org/10.1002/qute.202300286} {\bibfield  {journal} {\bibinfo  {journal} {Adv Quantum Technol.}\ }\textbf {\bibinfo {volume} {2024}},\ \bibinfo {pages} {2300286}}\BibitemShut {NoStop}%
\bibitem [{\citenamefont {Paz-Silva}\ \emph {et~al.}(2019)\citenamefont {Paz-Silva}, \citenamefont {Hall},\ and\ \citenamefont {Wiseman}}]{Paz-Silva2019}%
  \BibitemOpen
  \bibfield  {author} {\bibinfo {author} {\bibfnamefont {G.~A.}\ \bibnamefont {Paz-Silva}}, \bibinfo {author} {\bibfnamefont {M.~J.~W.}\ \bibnamefont {Hall}},\ and\ \bibinfo {author} {\bibfnamefont {H.~M.}\ \bibnamefont {Wiseman}},\ }\bibfield  {title} {\bibinfo {title} {Dynamics of initially correlated open quantum systems: Theory and applications},\ }\href {https://doi.org/10.1103/PhysRevA.100.042120} {\bibfield  {journal} {\bibinfo  {journal} {Phys. Rev. A}\ }\textbf {\bibinfo {volume} {100}},\ \bibinfo {pages} {042120} (\bibinfo {year} {2019})}\BibitemShut {NoStop}%
\bibitem [{\citenamefont {White}\ \emph {et~al.}(2022)\citenamefont {White}, \citenamefont {Pollock}, \citenamefont {Hollenberg}, \citenamefont {Modi},\ and\ \citenamefont {Hill}}]{White2022}%
  \BibitemOpen
  \bibfield  {author} {\bibinfo {author} {\bibfnamefont {G.}~\bibnamefont {White}}, \bibinfo {author} {\bibfnamefont {F.}~\bibnamefont {Pollock}}, \bibinfo {author} {\bibfnamefont {L.}~\bibnamefont {Hollenberg}}, \bibinfo {author} {\bibfnamefont {K.}~\bibnamefont {Modi}},\ and\ \bibinfo {author} {\bibfnamefont {C.}~\bibnamefont {Hill}},\ }\bibfield  {title} {\bibinfo {title} {Non-markovian quantum process tomography},\ }\href {https://doi.org/10.1103/PRXQuantum.3.020344} {\bibfield  {journal} {\bibinfo  {journal} {PRX Quantum}\ }\textbf {\bibinfo {volume} {3}},\ \bibinfo {pages} {020344} (\bibinfo {year} {2022})}\BibitemShut {NoStop}%
\bibitem [{\citenamefont {Leggett}\ \emph {et~al.}(1987)\citenamefont {Leggett}, \citenamefont {Chakravarty}, \citenamefont {Dorsey}, \citenamefont {Fisher}, \citenamefont {Garg},\ and\ \citenamefont {Zwerger}}]{Leggett1987}%
  \BibitemOpen
  \bibfield  {author} {\bibinfo {author} {\bibfnamefont {A.~J.}\ \bibnamefont {Leggett}}, \bibinfo {author} {\bibfnamefont {S.}~\bibnamefont {Chakravarty}}, \bibinfo {author} {\bibfnamefont {A.~T.}\ \bibnamefont {Dorsey}}, \bibinfo {author} {\bibfnamefont {M.~P.~A.}\ \bibnamefont {Fisher}}, \bibinfo {author} {\bibfnamefont {A.}~\bibnamefont {Garg}},\ and\ \bibinfo {author} {\bibfnamefont {W.}~\bibnamefont {Zwerger}},\ }\bibfield  {title} {\bibinfo {title} {Dynamics of the dissipative two-state system},\ }\href {https://doi.org/10.1103/RevModPhys.59.1} {\bibfield  {journal} {\bibinfo  {journal} {Rev. Mod. Phys.}\ }\textbf {\bibinfo {volume} {59}},\ \bibinfo {pages} {1} (\bibinfo {year} {1987})}\BibitemShut {NoStop}%
\bibitem [{\citenamefont {Hanson}\ \emph {et~al.}(2007)\citenamefont {Hanson}, \citenamefont {Kouwenhoven}, \citenamefont {Petta}, \citenamefont {Tarucha},\ and\ \citenamefont {Vandersypen}}]{Hanson2007}%
  \BibitemOpen
  \bibfield  {author} {\bibinfo {author} {\bibfnamefont {R.}~\bibnamefont {Hanson}}, \bibinfo {author} {\bibfnamefont {L.~P.}\ \bibnamefont {Kouwenhoven}}, \bibinfo {author} {\bibfnamefont {J.~R.}\ \bibnamefont {Petta}}, \bibinfo {author} {\bibfnamefont {S.}~\bibnamefont {Tarucha}},\ and\ \bibinfo {author} {\bibfnamefont {L.~M.~K.}\ \bibnamefont {Vandersypen}},\ }\bibfield  {title} {\bibinfo {title} {Spins in few-electron quantum dots},\ }\href {https://doi.org/10.1103/RevModPhys.79.1217} {\bibfield  {journal} {\bibinfo  {journal} {Rev. Mod. Phys.}\ }\textbf {\bibinfo {volume} {79}},\ \bibinfo {pages} {1217} (\bibinfo {year} {2007})}\BibitemShut {NoStop}%
\bibitem [{\citenamefont {Khaetskii}\ \emph {et~al.}(2002)\citenamefont {Khaetskii}, \citenamefont {Loss},\ and\ \citenamefont {Glazman}}]{Khaetskii2002}%
  \BibitemOpen
  \bibfield  {author} {\bibinfo {author} {\bibfnamefont {A.~V.}\ \bibnamefont {Khaetskii}}, \bibinfo {author} {\bibfnamefont {D.}~\bibnamefont {Loss}},\ and\ \bibinfo {author} {\bibfnamefont {L.}~\bibnamefont {Glazman}},\ }\bibfield  {title} {\bibinfo {title} {Electron spin decoherence in quantum dots due to interaction with nuclei},\ }\href {https://doi.org/10.1103/PhysRevLett.88.186802} {\bibfield  {journal} {\bibinfo  {journal} {Phys. Rev. Lett.}\ }\textbf {\bibinfo {volume} {88}},\ \bibinfo {pages} {186802} (\bibinfo {year} {2002})}\BibitemShut {NoStop}%
\bibitem [{\citenamefont {Chan}\ \emph {et~al.}(2014)\citenamefont {Chan}, \citenamefont {Lin}, \citenamefont {Yelin},\ and\ \citenamefont {Lukin}}]{Chan2014}%
  \BibitemOpen
  \bibfield  {author} {\bibinfo {author} {\bibfnamefont {C.-K.}\ \bibnamefont {Chan}}, \bibinfo {author} {\bibfnamefont {G.-D.}\ \bibnamefont {Lin}}, \bibinfo {author} {\bibfnamefont {S.~F.}\ \bibnamefont {Yelin}},\ and\ \bibinfo {author} {\bibfnamefont {M.~D.}\ \bibnamefont {Lukin}},\ }\bibfield  {title} {\bibinfo {title} {Quantum interference between independent reservoirs in open quantum systems},\ }\href {https://doi.org/10.1103/PhysRevA.89.042117} {\bibfield  {journal} {\bibinfo  {journal} {Phys. Rev. A}\ }\textbf {\bibinfo {volume} {89}},\ \bibinfo {pages} {042117} (\bibinfo {year} {2014})}\BibitemShut {NoStop}%
\bibitem [{\citenamefont {Giusteri}\ \emph {et~al.}(2017)\citenamefont {Giusteri}, \citenamefont {Recrosi}, \citenamefont {Schaller},\ and\ \citenamefont {Celardo}}]{Giulio2017}%
  \BibitemOpen
  \bibfield  {author} {\bibinfo {author} {\bibfnamefont {G.~G.}\ \bibnamefont {Giusteri}}, \bibinfo {author} {\bibfnamefont {F.}~\bibnamefont {Recrosi}}, \bibinfo {author} {\bibfnamefont {G.}~\bibnamefont {Schaller}},\ and\ \bibinfo {author} {\bibfnamefont {G.~L.}\ \bibnamefont {Celardo}},\ }\bibfield  {title} {\bibinfo {title} {Interplay of different environments in open quantum systems: Breakdown of the additive approximation},\ }\href {https://doi.org/10.1103/PhysRevE.96.012113} {\bibfield  {journal} {\bibinfo  {journal} {Phys. Rev. E}\ }\textbf {\bibinfo {volume} {96}},\ \bibinfo {pages} {012113} (\bibinfo {year} {2017})}\BibitemShut {NoStop}%
\bibitem [{\citenamefont {Bortz}\ and\ \citenamefont {Stolze}(2007)}]{Bortz2007}%
  \BibitemOpen
  \bibfield  {author} {\bibinfo {author} {\bibfnamefont {M.}~\bibnamefont {Bortz}}\ and\ \bibinfo {author} {\bibfnamefont {J.}~\bibnamefont {Stolze}},\ }\bibfield  {title} {\bibinfo {title} {Exact dynamics in the inhomogeneous central-spin model},\ }\href {https://doi.org/10.1103/PhysRevB.76.014304} {\bibfield  {journal} {\bibinfo  {journal} {Phys. Rev. B}\ }\textbf {\bibinfo {volume} {76}},\ \bibinfo {pages} {014304} (\bibinfo {year} {2007})}\BibitemShut {NoStop}%
\bibitem [{\citenamefont {Saikin}\ \emph {et~al.}(2007)\citenamefont {Saikin}, \citenamefont {Yao},\ and\ \citenamefont {Sham}}]{Saikin2007}%
  \BibitemOpen
  \bibfield  {author} {\bibinfo {author} {\bibfnamefont {S.~K.}\ \bibnamefont {Saikin}}, \bibinfo {author} {\bibfnamefont {W.}~\bibnamefont {Yao}},\ and\ \bibinfo {author} {\bibfnamefont {L.~J.}\ \bibnamefont {Sham}},\ }\bibfield  {title} {\bibinfo {title} {Single-electron spin decoherence by nuclear spin bath: Linked-cluster expansion approach},\ }\href {https://doi.org/10.1103/PhysRevB.75.125314} {\bibfield  {journal} {\bibinfo  {journal} {Phys. Rev. B}\ }\textbf {\bibinfo {volume} {75}},\ \bibinfo {pages} {125314} (\bibinfo {year} {2007})}\BibitemShut {NoStop}%
\bibitem [{\citenamefont {Yang}\ and\ \citenamefont {Liu}(2008)}]{Yang2008}%
  \BibitemOpen
  \bibfield  {author} {\bibinfo {author} {\bibfnamefont {W.}~\bibnamefont {Yang}}\ and\ \bibinfo {author} {\bibfnamefont {R.-B.}\ \bibnamefont {Liu}},\ }\bibfield  {title} {\bibinfo {title} {Quantum many-body theory of qubit decoherence in a finite-size spin bath},\ }\href {https://doi.org/10.1103/PhysRevB.78.085315} {\bibfield  {journal} {\bibinfo  {journal} {Phys. Rev. B}\ }\textbf {\bibinfo {volume} {78}},\ \bibinfo {pages} {085315} (\bibinfo {year} {2008})}\BibitemShut {NoStop}%
\bibitem [{\citenamefont {Yang}\ and\ \citenamefont {Liu}(2009)}]{Yang2009}%
  \BibitemOpen
  \bibfield  {author} {\bibinfo {author} {\bibfnamefont {W.}~\bibnamefont {Yang}}\ and\ \bibinfo {author} {\bibfnamefont {R.-B.}\ \bibnamefont {Liu}},\ }\bibfield  {title} {\bibinfo {title} {Quantum many-body theory of qubit decoherence in a finite-size spin bath. ii. ensemble dynamics},\ }\href {https://doi.org/10.1103/PhysRevB.79.115320} {\bibfield  {journal} {\bibinfo  {journal} {Phys. Rev. B}\ }\textbf {\bibinfo {volume} {79}},\ \bibinfo {pages} {115320} (\bibinfo {year} {2009})}\BibitemShut {NoStop}%
\bibitem [{\citenamefont {Witzel}\ \emph {et~al.}(2014)\citenamefont {Witzel}, \citenamefont {Young},\ and\ \citenamefont {Das~Sarma}}]{Witzel2014}%
  \BibitemOpen
  \bibfield  {author} {\bibinfo {author} {\bibfnamefont {W.~M.}\ \bibnamefont {Witzel}}, \bibinfo {author} {\bibfnamefont {K.}~\bibnamefont {Young}},\ and\ \bibinfo {author} {\bibfnamefont {S.}~\bibnamefont {Das~Sarma}},\ }\bibfield  {title} {\bibinfo {title} {Converting a real quantum spin bath to an effective classical noise acting on a central spin},\ }\href {https://doi.org/10.1103/PhysRevB.90.115431} {\bibfield  {journal} {\bibinfo  {journal} {Phys. Rev. B}\ }\textbf {\bibinfo {volume} {90}},\ \bibinfo {pages} {115431} (\bibinfo {year} {2014})}\BibitemShut {NoStop}%
\bibitem [{\citenamefont {Cirac}\ \emph {et~al.}(2021)\citenamefont {Cirac}, \citenamefont {P\'erez-Garc\'{\i}a}, \citenamefont {Schuch},\ and\ \citenamefont {Verstraete}}]{Cirac2021}%
  \BibitemOpen
  \bibfield  {author} {\bibinfo {author} {\bibfnamefont {J.~I.}\ \bibnamefont {Cirac}}, \bibinfo {author} {\bibfnamefont {D.}~\bibnamefont {P\'erez-Garc\'{\i}a}}, \bibinfo {author} {\bibfnamefont {N.}~\bibnamefont {Schuch}},\ and\ \bibinfo {author} {\bibfnamefont {F.}~\bibnamefont {Verstraete}},\ }\bibfield  {title} {\bibinfo {title} {Matrix product states and projected entangled pair states: Concepts, symmetries, theorems},\ }\href {https://doi.org/10.1103/RevModPhys.93.045003} {\bibfield  {journal} {\bibinfo  {journal} {Rev. Mod. Phys.}\ }\textbf {\bibinfo {volume} {93}},\ \bibinfo {pages} {045003} (\bibinfo {year} {2021})}\BibitemShut {NoStop}%
\end{thebibliography}%
\end{document}